\documentclass[12pt]{article}
\usepackage{amsmath,amsthm,amssymb,mathtools}
\usepackage{graphicx,epstopdf,framed,color}
\usepackage{euscript,paralist,wrapfig,}
\usepackage[colorlinks=true, linkcolor=blue, urlcolor=blue, citecolor=blue]{hyperref}
\usepackage[T1]{fontenc}
\usepackage[utf8]{inputenc}
\usepackage{tikz}

\theoremstyle{remark}
\newtheorem{theorem}{Theorem}[section]

\newtheorem{lemma}[theorem]{Lemma}
\newtheorem{assumption}[theorem]{Assumption}

\newtheorem{proposition}[theorem]{Proposition}

\newtheorem{remark}[theorem]{Remark}

\newtheorem{problem}[theorem]{Problem}

\numberwithin{equation}{section}

\newfont{\roma}{cmr10 scaled 1200}
\renewcommand{\cline}{{\mathbb C}}

\newcommand{\nline}  {{\mathbb N}}
\newcommand{\rline}  {{\mathbb R}}

\newcommand{\tline}  {{\mathbb T}}

\newcommand{\dd}   {{\rm d}\hbox{\hskip 0.5pt}}

\newcommand{\Lscr} {{\cal L}}

\newcommand{\Pscr} {{\cal P}}

\newcommand{\mm}    {{\hbox{\hskip 0.5pt}}}

\newcommand{\bluff} {{\hbox{\raise 15pt \hbox{\mm}}}}
\newcommand{\sbluff}{{\hbox{\raise  7pt \hbox{\mm}}}}

\newcommand{\g}      {{\gamma}}
\newcommand{\FORALL} {{\hbox{$\hskip 11mm \forall \;$}}}

\renewcommand{\Re}   {{\rm Re\,}}
\renewcommand{\Im}   {{\rm Im\,}}

\newcommand{\bbm}[1]{\left[\begin{matrix} #1 \end{matrix}\right]}
\newcommand{\sbm}[1]{\left[\begin{smallmatrix} #1
   \end{smallmatrix}\right]}

\begin{document}
\renewcommand{\thefootnote}{\fnsymbol{footnote}}
\renewcommand{\thefootnote}{\fnsymbol{footnote}}
\newcommand{\footremember}[2]{%
   \footnote{#2}
    \newcounter{#1}
    \setcounter{#1}{\value{footnote}}%
}
\newcommand{\footrecall}[1]{%
    \footnotemark[\value{#1}]%
}
\makeatletter
\def\blfootnote{\gdef\@thefnmark{}\@footnotetext}
\makeatother

\begin{center}
{\Large \bf LQR based stabilization of an 1D heat equation with advection and memory effects}\\[2ex]
Bhargav Pavan Kumar Sistla, Vivek Natarajan \vspace{2mm}
\blfootnote{B.P.K. Sistla is supported by the Prime Minister’s Research Fellowship via the grant RSPMRF0262.}
\blfootnote{B. P. K. Sistla (bhargav@sc.iitb.ac.in) and V. Natarajan (vivek.natarajan@iitb.ac.in) are with the Centre for Systems and Control, Indian Institute of Technology Bombay, Mumbai, India, 400076, Ph:+912225765385.}
\end{center}

\begin{center}
  {\bf Abstract}\vspace{2mm}

\parbox{5.5in}{{\noindent
We derive an one-dimensional model for heat transfer in a moving fluid that incorporates Fourier conduction, an exponentially decaying memory term, and advection under thermally insulated boundary conditions. Our objective is to numerically construct a bounded state feedback law such that, for every initial state, the closed-loop solution decays exponentially to zero with decay rate at least $\omega>0$, that is, to solve the $\omega$-stabilization problem for this model. We give an explicit description of the eigenvalues of the state operator $A$ associated with our model. A part of the eigenvalue set is shown to converge to a finite negative accumulation point. This finite accumulation point determines the intrinsic upper bound on decay rates achievable by bounded feedback. Since $A$ lacks compact resolvent, additional analysis is required to show that the spectrum is the closure of its eigenvalues and that all eigenvalues have finite algebraic multiplicity. These results are then used to verify the stabilizability properties needed for the $\omega$-stabilization problem. For every $\omega$ below the accumulation bound, we show that the $\omega$-stabilization problem has a solution provided the control operator $B$ satisfies a non-orthogonality condition. To compute stabilizing gains, we formulate an LQR problem for the system and solve it using finite-dimensional approximations. For each $n\in\mathbb{N}$, we construct finite-dimensional approximations $A_n$ and $B_n$ of $A$ and $B$ and solve the associated finite-dimensional algebraic Riccati equation to obtain a feedback gain $K_n$. We show that, for all sufficiently large $n$, the gain $K_n$ can be chosen so that all eigenvalues of $A_n+B_nK_n$ satisfy $\operatorname{Re}\lambda<-\omega$. We establish the uniform, with respect to $n$, stabilizability of the pair $(A_n+\omega I,B_n)$, which allows us to prove that, for all sufficiently large $n$, these gains solve the $\omega$-stabilization problem for the original system. We validate our theoretical results numerically using an example.
}}
\end{center}

\noindent
{\bf Keywords}. Sesquilinear form, Galerkin approximation, Hautus test, linear quadratic regulator problem, Riccati equation, uniform stabilizability \vspace{-3mm}


\section{Introduction}\label{section: Introduction}\vspace{-1mm}
\setcounter{equation}{0} 

\ \ \ Transport of thermal energy in a moving fluid is commonly modeled by a balance law in which diffusion describes conductive transport and advection describes bulk motion. In materials with memory, the heat flux need not depend only on the present temperature gradient. After linearization, Fourier's law may be replaced by a constitutive relation involving the present temperature gradient together with a convolution over its past history, see \cite{Nu:1971}. In this paper, we derive an one-dimensional heat-transfer model on $(0,1)$ that combines advective transport, instantaneous Fourier conduction and an exponentially fading memory kernel under thermally insulated boundary conditions. The temperature is actuated by a scalar input through a prescribed shape function. By introducing a memory variable, the model is rewritten as a coupled PDE--ODE system in which the memory component is driven by the current temperature and decays exponentially. This formulation leads to an abstract evolution equation on $X=L^2(0,1)\times H^1(0,1)$ with state operator $A$ and control operator $B$. The resulting system differs substantially from the classical heat equation. In particular, the associated state operator may have noncompact resolvent and a finite spectral accumulation point, making spectral analysis, stabilizability verification and LQR-based feedback synthesis more difficult. The paper follows the broad $\omega$-stabilization strategy of \cite{SiAkMiNa:2025}, but the presence of advection and thermally insulated boundary conditions leads to a more involved spectral problem and requires separate analysis.

Heat equations with memory exhibit subtle controllability phenomena. 
Approximate controllability was established in \cite{BaIa:2000} under suitable assumptions on completely monotone memory kernels, while stronger controllability objectives may fail. 
In particular, \cite{IvPa:09} shows that one-dimensional heat equations with memory are not controllable to rest for broad classes of kernels and control mechanisms. 
Further negative results were obtained in \cite{GuIm:13,HaPa:12}, where even boundary controls may fail to steer heat equations with memory to zero at a prescribed final time. 
For interior actuation, \cite{ZhGa:14} shows that approximate controllability may hold while null controllability fails, highlighting the limitations introduced by memory. 
In \cite{SiZhZu:17}, memory-type null controllability is studied using controls with moving support, and it is shown that, except in trivial cases, the moving control region must cover the whole spatial domain during the control time interval. 
These results motivate a shift from exact controllability objectives toward stabilizability.

Several stabilization methods have been developed for heat equations with memory and related parabolic coupled systems, including direct stabilization arguments and approximation-based feedback constructions. 
For linear heat equations with memory, stability and exponential stabilization were studied in \cite{LiZhGu:18}. 
Riccati-based feedback laws and implementable compensator constructions for parabolic PDE--ODE coupled models, based on finite-dimensional approximations, were developed in \cite{BrKu:14,BrKu:17}. 
Closer to the present setting, \cite{AkMi:2022} studies an $\omega$-stabilization problem for a viscous Burgers equation with memory using linearization, spectral analysis, and finite-dimensional Riccati design. 
Feedback stabilization of a parabolic coupled system, together with finite-dimensional approximation and numerical validation, was developed in \cite{AkMiNaRa:2024}. 
The Riccati-based $\omega$-stabilization approach for a heat equation with memory was developed in \cite{SiAkMiNa:2025}. 
These works motivate the approximation-based Riccati strategy used in this paper.

A practical route to controller design for PDE models is the early-lumping approach. 
The infinite-dimensional system is first replaced by finite-dimensional approximations, and controllers are then designed on the reduced models. 
Such approximation ideas are standard in parameter estimation and/or LQR design for distributed-parameter systems \cite{BaIt:1988,BaKu:1984,ChSuNa:2025}. 
They have since appeared in several infinite-dimensional control problems, including thermoelastic optimal control, computational LQR, and backstepping-based boundary stabilization of parabolic PDEs \cite{GiRoTa:1992,GrMo:1996,BaKr:2002,BoBaKr:2003}. 
Uniform stability along approximation schemes, convergence of Riccati-based controllers, and approximation-driven design questions such as actuator placement have also been investigated in \cite{BuCh:2022,KaMo:2013,RaTaTu:2007}. 
Approximation-based constructions have also been used recently in motion planning for parabolic equations using flatness and finite-difference schemes \cite{ChNa:2025}. 
More recently, gain approximation methods for abstract parabolic systems and spectral projection-based constructions have been developed in \cite{BaRa:2024,BaRa:2024b}. 

Our objective is to achieve a prescribed exponential decay rate by bounded state feedback. More precisely, for a given $0<\omega<\omega_0$, we seek a bounded feedback law such that every closed-loop solution decays to zero at a rate strictly larger than $\omega$. This problem was treated in our recent work \cite{SiAkMiNa:2025} for a heat equation with memory, where the state operator has a negative spectral accumulation point that imposes an intrinsic upper bound on the decay rates achievable by bounded feedback. It was shown there that every admissible decay rate below this bound can be realized by an LQR-based controller obtained through finite-dimensional approximations. The present paper extends that framework to the advection--diffusion model with fading memory considered here. This extension is not routine. In contrast with \cite{SiAkMiNa:2025}, the eigenvalues are now described by the roots of a mode-dependent cubic polynomial, and the presence of advection makes the spectral analysis substantially more involved.

The main contribution of this paper is to establish the operator-theoretic and approximation-theoretic framework needed for $\omega$-stabilization of this advection--diffusion model with fading memory. We explicitly characterize the point spectrum of $A$, showing that it consists of the two special eigenvalues $0$ and $-\kappa$ together with three families of roots of a mode-dependent cubic polynomial. One branch diverges to $-\infty$, while the other two branches converge to the unique accumulation point $-\omega_0=-(\kappa+\gamma/\eta)$. Apart from the eigenvalue $0$, the remaining eigenvalues lie in the open left half-plane. Since the resolvent of $A$ is not compact, it is not automatic that the full spectrum is determined by the point spectrum or that eigenvalues have finite algebraic multiplicity. We therefore prove separately that $\sigma(A)=\overline{\sigma_p(A)}$ and that every eigenvalue has finite algebraic multiplicity, see Theorems~\ref{thm: complete spectrum} and~\ref{thm:finite-multi}. These results allow us to apply the Hautus test and obtain verifiable $\omega$-stabilizability conditions for the shifted pair $(A+\omega I,B)$. Finally, we construct finite-dimensional approximations, solve the corresponding algebraic Riccati equations, and adapt the uniform stabilizability arguments developed in \cite{SiAkMiNa:2025} for heat equations with memory. This allows us to prove that the resulting feedback gains solve the original $\omega$-stabilization problem for all sufficiently large approximation orders.

The remainder of the paper is organized as follows. In Section~\ref{sec:Math-model-problem}, we derive the PDE--ODE model, define the state and control operators, and state the $\omega$-stabilization problem. Section~\ref{sec:properties-A-B} studies the spectral properties of the state operator, proves analytic semigroup generation, and presents verifiable $\omega$-stabilizability conditions. In Section~\ref{sec4}, we formulate the LQR design, introduce the finite-dimensional approximations, and prove convergence of the Riccati-based feedbacks. A numerical example is presented in Section~\ref{sec:numerical-example}. Section~\ref{sec:conclu} contains concluding remarks.

\noindent{\bf Notations and definitions}. For $\omega\in\rline$, define $\cline_\omega^{+}=\{\lambda\in\cline:\Re\lambda>\omega\}$ and $\cline_\omega^{-}=\{\lambda\in\cline:\Re\lambda<\omega\}$, and let $\overline{\cline_\omega^{+}}$ and $\overline{\cline_\omega^{-}}$ denote their closures in $\cline$. The symbols $\rline$, $\cline$ and $\nline$ denote the sets of real numbers, complex numbers and positive integers, respectively. Unless stated otherwise, all function spaces are complex valued. For $\lambda\in\cline$, $\Re\lambda$ and $\Im\lambda$ denote its real and imaginary parts. We use the Hilbert spaces $L^2(0,1)$, $H^1(0,1)$ and $H^2(0,1)$ with their usual meanings. The inner product on $L^2(0,1)$ is $\langle u,v\rangle_{L^2(0,1)}=\int_0^1 u(\xi)\overline{v(\xi)}\,d\xi$, and the inner product on $H^1(0,1)$ is $\langle u,v\rangle_{H^1(0,1)}=\int_0^1 u(\xi)\overline{v(\xi)}\,d\xi+\int_0^1 u_\xi(\xi)\overline{v_\xi(\xi)}\,d\xi$, with the associated norms denoted by $\|\cdot\|_{L^2(0,1)}$ and $\|\cdot\|_{H^1(0,1)}$, respectively. For Hilbert spaces $X$ and $Y$, the product space $X\times Y$ is equipped with the inner product $\left\langle\sbm{x_1\\ y_1},\sbm{x_2\\ y_2}\right\rangle_{X\times Y}=\langle x_1,x_2\rangle_X+\langle y_1,y_2\rangle_Y$. We write $\Lscr(X,Y)$ for the space of bounded linear operators from $X$ to $Y$, equipped with the norm $\|\cdot\|_{\Lscr(X,Y)}$, and write $\Lscr(X)$ when $X=Y$. For a linear operator $A$, we denote its kernel by $\ker(A)$, its range by $\operatorname{Ran}(A)$, its spectrum by $\sigma(A)$, its point spectrum by $\sigma_p(A)$, its resolvent set by $\rho(A)$, and its adjoint by $A^*$. If $A$ is Fredholm, then $\operatorname{ind}(A)$ denotes its Fredholm index. The identity operator on the relevant space is denoted by $I$.
\section{System model and problem statement}\label{sec:Math-model-problem}
\setcounter{equation}{0} 

\ \ \ We consider a 1D heat-transfer model for a fluid with advective transport, Fourier conduction, and conduction with memory, under thermally insulated boundary conditions. Let $\xi\in(0,1)$ denote the spatial variable. The temperature $y$ is assumed to satisfy the balance law
\begin{equation}\label{eq:energy-balance}
\rho c_p\, y_t(\xi,t) + \varphi_{\xi}(\xi,t) = \tilde{b}(\xi)\,u(t)
\FORALL \xi\in(0,1),\ \forall t>0 .
\end{equation}
Here $\varphi$ denotes the total heat flux, $\tilde{b}$ the input shape function and $u$ the control input. The constants $\rho$ and $c_p$ denote the fluid density and the specific heat capacity at constant pressure, respectively.

We model the heat flux $\varphi(\xi,t)$, for $\xi\in(0,1)$ and $t>0$, by
\begin{equation}\label{eq:total-flux}
\varphi(\xi,t)
= \rho c_p \nu y(\xi,t)
- k y_{\xi}(\xi,t)
- \gamma \rho c_p \int_{0}^{t} e^{-\kappa (t-s)} y_{\xi}(\xi,s)\,\mathrm{d}s.
\end{equation}
The term $\rho c_p \nu y(\xi,t)$ represents advective transport with bulk velocity $\nu>0$. The term $-k y_{\xi}(\xi,t)$ represents instantaneous Fourier conduction with conductivity $k>0$. The integral term represents memory in the heat flux, so that the present heat flux depends on past temperature gradients. The parameter $\kappa>0$ determines how fast the influence of past gradients decays, while $\gamma>0$ determines the strength of the memory term.

Set $\eta=\frac{k}{\rho c_p}$ and $b=\frac{1}{\rho c_p}\tilde{b}$. Substituting $\varphi$ considered in \eqref{eq:total-flux} into \eqref{eq:energy-balance} and dividing the resulting equation by $\rho c_p$, we get
\begin{equation}\label{eq:divergence-form}
y_t(\xi,t)
+\Big(
\nu y(\xi,t)
- \eta  y_{\xi}(\xi,t)
- \int_{0}^{t} \gamma e^{-\kappa(t-s)} y_{\xi}(\xi,s) \dd s
\Big)_{\xi}
= b(\xi)u(t)
\end{equation}
for all $\xi\in(0,1)$ and $t>0$. By defining $z(\xi,t)= \int_{0}^{t}\gamma e^{-\kappa(t-s)}y(\xi,s)\dd s$ we rewrite \eqref{eq:divergence-form} as the following coupled PDE–ODE system
\begin{align}
y_t(\xi,t) &= \big( \eta y_{\xi}(\xi,t) - \nu y(\xi,t) + z_{\xi}(\xi,t) \big)_{\xi} + b(\xi)u(t)
&& \forall\, \xi\in(0,1),\ \forall t>0, \label{eq:PDE-ODE-1} \\[1 mm]
z_t(\xi,t) &= -\kappa z(\xi,t) + \gamma y(\xi,t)
&& \forall\, \xi\in(0,1),\ \forall t>0. \label{eq:PDE-ODE-2}
\end{align}
Substituting the expression for $z$ into \eqref{eq:total-flux} and imposing thermal insulation at the boundaries, i.e. $\varphi(0,t)=\varphi(1,t)=0$, we obtain the boundary conditions for the coupled system \eqref{eq:PDE-ODE-1}-\eqref{eq:PDE-ODE-2} to be 
\begin{equation}\label{eq:boundary-PDE-ODE}
\nu y(0,t) = \eta y_{\xi}(0,t) + z_{\xi}(0,t), \qquad
\nu y(1,t) = \eta y_{\xi}(1,t) + z_{\xi}(1,t)
\FORALL t>0.
\end{equation} 
Consider the Hilbert spaces $X=L^2(0,1)\times H^1(0,1)$ and $V=H^1(0,1)\times H^1(0,1)$. We write the coupled system \eqref{eq:PDE-ODE-1}--\eqref{eq:boundary-PDE-ODE} as an abstract evolution equation on the state space $X$ as
\begin{equation}\label{eq:abstract-evolution}
\bbm{\dot{y}(t)\\ \dot{z}(t)}=A\bbm{y(t)\\z(t)}+Bu(t) \FORALL t>0.
\end{equation}
Here the state operator $A\colon D(A)\subset X\to X$ has domain
\begin{equation}\label{eq:Domain-A}
\begin{aligned}
D(A)=\left\{\bbm{y\\ z}\in V \,\middle|\,
\begin{array}{l}
\eta y+z \in H^{2}(0,1),\\[1mm]
(\eta y+z)_{\xi}(0)=\nu\,y(0),\\
(\eta y+z)_{\xi}(1)=\nu\,y(1)
\end{array}
\right\},
\end{aligned}
\end{equation}
with
\begin{equation}\label{eq:operator-A}
A\bbm{y\\z}=\bbm{(\eta y+z)_{\xi\xi}-\nu y_{\xi}\\-\kappa z+\gamma y} \FORALL \bbm{y\\z}\in D(A).
\end{equation}
We take the control input $u$ to be scalar valued, that is, $u(t)\in\cline$. We assume that $\tilde{b}\in L^2(0,1)$, and hence $b=\frac{1}{\rho c_p}\tilde{b}\in L^2(0,1)$. The control operator $B\in\Lscr(\cline,X)$ in \eqref{eq:abstract-evolution} is defined by
\begin{equation}\label{eq:operator-B}
Bu =\bbm{b u \\ 0} \FORALL u \in \cline.
\end{equation}
The state operator $A$ generates an analytic semigroup $\mathbb{T}$ on $X$, see Theorem \ref{thm:analytic-semigroup} in Section \ref{sec:properties-A-B}. Hence, for each input $u\in L^2((0,\infty);\mathbb{C})$ and initial state $\sbm{y^0\\ z^0}\in X$, the mild solution $\sbm{y\\ z}\in C([0,\infty);X)$ of \eqref{eq:abstract-evolution} is given by
$$
\bbm{y(t)\\ z(t)}
=\mathbb{T}(t)\bbm{y^0\\ z^0}
+\int_0^t \mathbb{T}(t-s)\,B\,u(s)\,ds
\FORALL t\ge 0.
$$
This mild solution is also the unique weak solution of \eqref{eq:PDE-ODE-1}--\eqref{eq:boundary-PDE-ODE} for the same input $u$ and initial state $\sbm{y^0\\ z^0}$, see \cite[Proposition 4.2.5 and Remark 4.2.6]{obs_book}. We take the first component $y$ of the mild solution with initial state $\sbm{y^0\\ 0}$ as the solution of \eqref{eq:divergence-form}, with insulated boundary conditions $\varphi(0,t)=\varphi(1,t)=0$, for the same input $u$ and initial state $y^0$.

Define $\omega_0=\kappa+\frac{\gamma}{\eta}$. We address the following $\omega$-stabilization problem in this paper. 
\begin{framed}
\begin{problem}\label{prob:w-stab}
Given $0 < \omega < \omega_0$, determine a stabilizing feedback gain 
$K_\omega \in \Lscr(X, \cline)$ such that the strongly continuous semigroup 
$\tline^{cl}$ on $X$, generated by the operator $A + B K_\omega$, satisfies
\begin{equation}\label{eq:problem}
  \|\tline^{cl}(t)\|_{\Lscr(X)} \leq M e^{-(\omega+\epsilon)t}
\FORALL t>0,  
\end{equation}
for some constants $M > 0$ and $\epsilon > 0$.
\end{problem}
\end{framed}

To solve Problem \ref{prob:w-stab}, we formulate an LQR problem for \eqref{eq:abstract-evolution}. For each $n\in\mathbb{N}$, we construct finite-dimensional approximations $A_n$ and $B_n$ of $A$ and $B$, and solve the corresponding finite-dimensional LQR problem to obtain $K_n$ such that every eigenvalue of $A_n+B_nK_n$ satisfies $\Re(\lambda)<-\omega$. We then show that, for all sufficiently large $n$, the choice $K_\omega=K_n$ solves Problem \ref{prob:w-stab}, see Theorem \ref{thm:main-result}. In practice, $K_n$ is obtained from a finite-dimensional algebraic Riccati equation. This approximation-based LQR strategy is related to \cite{SiAkMiNa:2025}. However, the advection term makes the spectral analysis substantially more involved, and the required spectral properties are established separately in Section \ref{sec:properties-A-B}.

In Problem \ref{prob:w-stab}, $\mathbb{T}^{cl}$ denotes the semigroup of the closed-loop system under the state feedback law $u=K_\omega\sbm{y\\ z}$ in \eqref{eq:abstract-evolution}. If \eqref{eq:problem} holds, then the same feedback applied to \eqref{eq:PDE-ODE-1}--\eqref{eq:boundary-PDE-ODE} yields exponential decay of $y$. The law $u=K_\omega\sbm{y\\ z}$ is static for the coupled PDE--ODE system, since $\sbm{y\\ z}$ is the state. When the heat equation with memory is written only in terms of $y$, however, the same law becomes dynamic, since $z$ must be generated through $\dot{z}(t)=-\kappa z(t)+\gamma y(t)$. Problem \ref{prob:w-stab} admits no solution when $\omega\geq\omega_0$, see Remark~\ref{rem:omega-ge-omega0}.

\section{Properties of the state operator $A$ and control
operator $B$}\label{sec:properties-A-B}
\setcounter{equation}{0} 
\ \ \ In Section \ref{sec:spectral-A}, we analyze the spectrum of the state operator $A$ defined in \eqref{eq:Domain-A}--\eqref{eq:operator-A}. We first characterize the point spectrum of $A$, see Theorem \ref{thm:point-spectrum}, and identify the unique accumulation point $-\omega_0$, where $\omega_0=\kappa+\gamma/\eta$. We then show that the full spectrum of $A$ is the closure of its eigenvalues, see Theorem \ref{thm: complete spectrum}, and prove that every eigenvalue of $A$ has finite algebraic multiplicity, see Theorem \ref{thm:finite-multi}. These results will be used later in the verification of the PBH test, since for each $0<\omega<\omega_0$ the relevant part of the shifted spectrum consists of finitely many isolated eigenvalues of finite algebraic multiplicity.

In Section \ref{sec:analytic-semigroup}, we associate a sesquilinear form with $A$ and show that $A$ generates an analytic semigroup on $X=L^2(0,1)\times H^1(0,1)$, see Theorem \ref{thm:analytic-semigroup}.

In Section \ref{sec:stab-A-B}, we use these spectral and semigroup properties to verify the PBH test and obtain a sufficient condition for $\omega$-stabilizability of the pair $(A,B)$ for every $0<\omega<\omega_0$, see Theorem \ref{thm:omega-stab}. In particular, this gives a condition under which Problem \ref{prob:w-stab} admits a solution.
\subsection{\textbf{Spectral description of $A$}}\label{sec:spectral-A}
\ \ \ In the following theorem we compute eigenvalues of $A$ and show that they have a unique accumulation point.
\begin{framed}
\begin{theorem}\label{thm:point-spectrum}
The set of eigenvalues of the state operator $A$ is
$$
\sigma_p(A)=\{0,\,-\kappa\}\;\cup\;\bigcup_{j\in\mathbb{N}}\{\lambda_{1,j},\lambda_{2,j},\lambda_{3,j}\}.
$$
For each $j\in\mathbb{N}$, the complex numbers $\lambda_{1,j}$, $\lambda_{2,j}$, and $\lambda_{3,j}$ are the roots of the cubic polynomial
\begin{equation}\label{eq:cubic-eq}
\lambda^{3}+a_j\lambda^{2}+b_j\lambda+c_j=0,
\end{equation}
where
\begin{align}
a_j&=\frac{\nu^{2}+4\gamma+8\eta\kappa+4\pi^{2}j^{2}\eta^{2}}{4\eta}, \label{eq:a-n}\\
b_j&=\frac{2\kappa\nu^{2}+4\gamma\kappa+4\eta\kappa^{2}+8\pi^{2}j^{2}\eta(\gamma+\eta\kappa)}{4\eta},\label{eq:b-n}\\
c_j&=\frac{\kappa^{2}\nu^{2}+4\pi^{2}j^{2}(\gamma+\eta\kappa)^{2}}{4\eta}.\label{eq:c-n}
\end{align}
For each $j\in\mathbb{N}$, we label the three roots of \eqref{eq:cubic-eq} as $\lambda_{1,j}$, $\lambda_{2,j}$ and $\lambda_{3,j}$ so that
$\Re\lambda_{1,j}\le \Re\lambda_{2,j}\le \Re\lambda_{3,j}$.
When $\Re\lambda_{1,j}=\Re\lambda_{2,j}$, we adopt the convention that $\Im\lambda_{1,j}\le \Im\lambda_{2,j}$.
When $\Re\lambda_{2,j}=\Re\lambda_{3,j}$, we use the same convention to ensure that $\Im\lambda_{2,j}\le \Im\lambda_{3,j}$.
In the case of repeated roots, the corresponding equalities among $\lambda_{1,j}$, $\lambda_{2,j}$ and $\lambda_{3,j}$ are understood.

Define $\omega_0=\kappa+\gamma/\eta$. With the above labeling the cubic roots satisfy 
\begin{equation}\label{eq:eigen-limits}
\lim_{j\to\infty}\lambda_{1,j}=-\infty,
\qquad
\lim_{j\to\infty}\lambda_{2,j}=-\omega_0,
\qquad
\lim_{j\to\infty}\lambda_{3,j}=-\omega_0.
\end{equation}
Furthermore, $\Re\lambda_{k,j}<0$ for all $k\in\{1,2,3\}$ and $j\in\mathbb{N}$, and the eigenspace $\ker(\lambda I-A)$ is one-dimensional for every $\lambda\in\sigma_p(A)$.
\end{theorem}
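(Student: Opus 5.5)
The plan is to reduce the eigenvalue problem to a scalar Sturm--Liouville problem with Dirichlet boundary conditions. Let $\lambda\in\cline$ and $\sbm{y\\ z}\in D(A)$ satisfy $A\sbm{y\\ z}=\lambda\sbm{y\\ z}$. The second component gives $(\lambda+\kappa)z=\gamma y$.

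\textbf{The special eigenvalues.} If $\lambda=-\kappa$, then $y=0$. The first component then reads $z_{\xi\xi}=0$ with $z_\xi(0)=z_\xi(1)=0$, so $z$ is constant. This gives the one-dimensional eigenspace spanned by $\sbm{0\\ 1}$.

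\textbf{The reduced problem.} If $\lambda\ne-\kappa$, set $z=\frac{\gamma}{\lambda+\kappa}y$ and $\mu=\mu(\lambda)=\eta+\frac{\gamma}{\lambda+\kappa}$, so that $\eta y+z=\mu y$. Introduce the (scaled) flux $q=\mu y_\xi-\nu y$. The eigenvalue equation and the domain conditions become
\[
\lambda y=q_\xi, \qquad q(0)=q(1)=0 .
\]

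\textbf{The eigenvalue $0$.} If $\lambda=0$, then $q_\xi=0$ with $q(0)=0$, so $q\equiv 0$. Since $\mu(0)=\omega_0\eta/\kappa\neq0$, this gives $y=Ce^{\nu\xi/\mu}$, and the eigenspace is one-dimensional.

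\textbf{The case $\mu=0$.} If $\lambda\neq0$ and $\mu=0$, i.e.\ $\lambda=-\omega_0$, then $q=-\nu y$ and $q(0)=0$ forces $y(0)=0$. The equation $\lambda y=-\nu y_\xi$ then gives $y\equiv0$, so $-\omega_0\notin\sigma_p(A)$.

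\textbf{The generic case.} If $\lambda\ne0$ and $\mu\ne0$, I differentiate to eliminate $y$: $q_{\xi\xi}=\lambda y_\xi$, hence
\[
\mu q_{\xi\xi}-\nu q_\xi=\lambda q,\qquad q(0)=q(1)=0 .
\]
Conversely, $y=q_\xi/\lambda$ recovers the eigenvector, and $q\equiv0$ would force $y=z=0$. Writing $q=e^{\nu\xi/(2\mu)}v$ gives $v_{\xi\xi}=\big(\frac{\lambda}{\mu}+\frac{\nu^2}{4\mu^2}\big)v$ with Dirichlet conditions. Nontrivial solutions exist iff
\[
4\mu\lambda+\nu^2+4\mu^2\pi^2j^2=0
\]
for some $j\in\nline$, and then $v=\sin(j\pi\xi)$.

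Multiplying by $(\lambda+\kappa)^2$ and using $\mu(\lambda+\kappa)=\eta(\lambda+\kappa)+\gamma$ turns this into
\[
4\lambda(\lambda+\kappa)\big(\eta(\lambda+\kappa)+\gamma\big)+\nu^2(\lambda+\kappa)^2+4\pi^2j^2\big(\eta(\lambda+\kappa)+\gamma\big)^2=0 .
\]
Dividing by $4\eta$ and expanding should reproduce \eqref{eq:cubic-eq}--\eqref{eq:c-n}; this is a routine check. The roots of the cubic automatically avoid the excluded values:
\begin{itemize}
\item at $\lambda=-\kappa$ the left side equals $4\pi^2j^2\gamma^2\ne0$;
\item at $\lambda=-\omega_0$ it equals $\nu^2\gamma^2/\eta^2\ne0$;
\item at $\lambda=0$ it equals $4\eta c_j>0$.
\end{itemize}

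\textbf{One-dimensionality.} For a fixed $\lambda$, the value $\mu$ is fixed, and the relation $4\mu\lambda+\nu^2+4\mu^2\pi^2j^2=0$ determines $j^2$ uniquely. So no $\lambda$ arises from two different modes. The Dirichlet problem then has a one-dimensional solution space, namely multiples of $e^{\nu\xi/(2\mu)}\sin(j\pi\xi)$. Hence every eigenspace is one-dimensional.

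\textbf{Negative real parts.} I will use the Routh--Hurwitz criterion: all roots have negative real part iff $a_j,b_j,c_j>0$ and $a_jb_j>c_j$. Positivity is clear. For $a_jb_j>c_j$ it suffices to dominate the two terms of $c_j$ by cross terms of $a_jb_j$:
\begin{itemize}
\item the term $2\kappa$ in $a_j$ times $\kappa\nu^2/(2\eta)$ in $b_j$ gives $\kappa^2\nu^2/\eta>\kappa^2\nu^2/(4\eta)$;
\item the term $(\gamma+2\eta\kappa)/\eta>(\gamma+\eta\kappa)/\eta$ in $a_j$ times $2\pi^2j^2(\gamma+\eta\kappa)$ in $b_j$ exceeds $\pi^2j^2(\gamma+\eta\kappa)^2/\eta$.
\end{itemize}
All remaining products are nonnegative, so $a_jb_j>c_j$.

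\textbf{Limits.} Write the polynomial as $P_j(\lambda)=R(\lambda)+\frac{\pi^2j^2}{\eta}\big(\eta\lambda+\eta\kappa+\gamma\big)^2$, where $R$ is a fixed cubic with $R(-\omega_0)\neq0$.
\begin{itemize}
\item \emph{Two roots near $-\omega_0$.} On any fixed compact set, $j^{-2}P_j$ converges uniformly to $\pi^2\eta(\lambda+\omega_0)^2$. By Hurwitz's/Rouché's theorem on small discs around $-\omega_0$, exactly two roots converge to $-\omega_0$ and the third root leaves every compact set.
\item \emph{The third root.} Since the sum of the roots is $-a_j\to-\infty$, the third root satisfies $\lambda\approx-\pi^2j^2\eta$ and tends to $-\infty$.
\item \emph{Labeling.} For large $j$ this root has the smallest real part, so it is $\lambda_{1,j}$, and $\lambda_{2,j},\lambda_{3,j}\to-\omega_0$.
\end{itemize}

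\textbf{Main obstacle.} I expect the hardest part to be carrying out the limit analysis carefully, including matching it with the labeling convention, together with the algebraic verification that the expanded polynomial coincides exactly with the stated $a_j,b_j,c_j$.
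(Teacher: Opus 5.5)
Your proposal is correct, and for the characterization of $\sigma_p(A)$ it takes a genuinely different route from the paper.

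\textbf{The paper's route.} The paper works with $w=\eta y+z$, which satisfies the $\lambda$-dependent Robin conditions $w_\xi=-\alpha_\lambda w$ at both ends. It computes the matrix exponential of the companion matrix and obtains the factorization $(\alpha_\lambda+\mu_\lambda^{+})(\alpha_\lambda+\mu_\lambda^{-})(e^{\mu_\lambda^{+}}-e^{\mu_\lambda^{-}})=0$. The repeated-root case is handled separately via a Jordan form. The eigenvalue $0$ comes from $\beta_\lambda=0$, and $-\kappa$ comes from the Jordan case.

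\textbf{Your route.} Your flux $q=\mu y_\xi-\nu y$ turns the boundary conditions into fixed homogeneous Dirichlet conditions. The gauge $q=e^{\nu\xi/(2\mu)}v$ then leaves $v_{\xi\xi}=cv$, $v(0)=v(1)=0$, with $c=\frac{\lambda}{\mu}+\frac{\nu^2}{4\mu^2}=\frac{1}{4}(\alpha_\lambda^2-4\beta_\lambda)$. This gives exactly the paper's condition $\alpha_\lambda^2-4\beta_\lambda=-4\pi^2j^2$, with no matrix exponentials and no case split on the characteristic roots. The fact that $v_{\xi\xi}=cv$ has a nontrivial Dirichlet solution iff $c=-\pi^2j^2$ holds verbatim for complex $c$.

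In your version the exceptional values $-\kappa$, $0$, $-\omega_0$ are exactly the points where the reduction divides by $\lambda+\kappa$, $\lambda$ and $\mu$. One-dimensionality is immediate from $v(0)=0$. Your check that the cubic is nonzero at these three points makes explicit that every root of the cubic is an eigenvalue; the paper leaves this converse implicit.

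\textbf{Remaining steps.} Your Routh--Hurwitz step dominates the two terms of $c_j$ by distinct cross terms of $a_jb_j$; the paper instead gives a closed-form factorization of $a_jb_j-c_j$. Your limit argument, Hurwitz's theorem plus the root sum $-a_j$, is the same as the paper's.

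\textbf{What each buys.} Your reduction is shorter and avoids the matrix-exponential computation and the Jordan case. The paper's variable $w=\eta y+z$ is the quantity appearing in $D(A)$, and it is reused later: in the resolvent construction for $\sigma(A)=\overline{\sigma_p(A)}$ and in the similarity transform of the finite-multiplicity proof.
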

\end{framed}

\begin{proof}
To determine eigenvalues of $A$, we consider the equation 
\begin{equation}\label{eq:eigen-equation}
\left(A\bbm{ y \\ z}\right)(\xi)=\bbm{(\eta y+z)_{\xi \xi}(\xi)-\nu y_\xi(\xi)\\ \gamma y(\xi)-\kappa z(\xi)}=\lambda\bbm{y(\xi) \\ z(\xi)} \qquad \text{for a.e. } \xi\in(0,1),
\end{equation}
subject to the boundary conditions prescribed by $D(A)$
\begin{equation}\label{eq:eigen-boundary}
(\eta y+z)_\xi(0)=\nu y(0), \qquad (\eta y+z)_\xi(1)=\nu y(1).
\end{equation}
Set $w=\eta y+z$. Combining this change of variables with the second component of \eqref{eq:eigen-equation}, namely $\gamma y-\kappa z=\lambda z$, we re-write \eqref{eq:eigen-equation}-\eqref{eq:eigen-boundary} as the following ODE in $w$:
\begin{equation}\label{eq:eigen-main-ODE}
\bigl(\eta(\lambda+\kappa)+\gamma\bigr) w_{\xi\xi}(\xi)
-\nu(\lambda+\kappa) w_\xi(\xi)
-\lambda(\lambda+\kappa) w(\xi)=0
\qquad \text{for a.e. } \xi\in(0,1),
\end{equation}
with boundary conditions
\begin{align}
\bigl(\eta(\lambda+\kappa)+\gamma\bigr) w_\xi(0)
&=\nu(\lambda+\kappa) w(0),\label{eq:eigen-main-boundary-1} \\
\bigl(\eta(\lambda+\kappa)+\gamma\bigr) w_\xi(1)
&=\nu(\lambda+\kappa) w(1).\label{eq:eigen-main-boundary-2}
\end{align}
Consider $\lambda=-(\kappa+\gamma/\eta)$. Then $\eta(\lambda+\kappa)+\gamma=0$ and $\lambda+\kappa\neq 0$, so \eqref{eq:eigen-main-ODE} becomes $\nu w_\xi(\xi)-(\kappa+\gamma/\eta)w(\xi)=0$ for all $\xi\in(0,1)$. From \eqref{eq:eigen-main-boundary-1} we get $w(0)=0$, and the ODE then forces $w(\xi)=0$ on $(0,1)$. Hence $w=\eta y+z=0$, so $z=-\eta y$. Substituting into \eqref{eq:eigen-equation} gives $y_\xi(\xi)=(1/\nu)(\kappa+\gamma/\eta)y(\xi)$ for a.e. $\xi\in(0,1)$. Moreover, $w_\xi(0)=0$ and \eqref{eq:eigen-boundary} implies $y(0)=0$, which yields $y(\xi)=0$ and $z(\xi)=0$ for all $\xi\in(0,1)$. Therefore $\lambda=-(\kappa+\gamma/\eta)$ is not an eigenvalue of $A$. In the remainder of the proof we consider $\lambda\neq-(\kappa+\gamma/\eta)$. 

Define
\begin{equation}\label{eq:coefficients}
\alpha_{\lambda}
=-\frac{\nu(\lambda+\kappa)}{\eta(\lambda+\kappa)+\gamma},
\qquad
\beta_{\lambda}
=-\frac{\lambda(\lambda+\kappa)}{\eta(\lambda+\kappa)+\gamma}.
\end{equation}
We divide \eqref{eq:eigen-main-ODE} by $\eta(\lambda+\kappa)+\gamma$ and rewrite the obtained equation using \eqref{eq:coefficients} as 
\begin{equation}\label{eq:first-order-system}
\frac{d}{d \xi}\bbm{w(\xi)\\ w_{\xi}(\xi)}
=
\bbm{0 & 1\\ -\beta_{\lambda} & -\alpha_{\lambda}}
\bbm{w(\xi)\\ w_\xi(\xi)} \qquad \text{for a.e. } \xi\in(0,1).
\end{equation}
The solution of \eqref{eq:first-order-system} can be written explicitly as
\begin{equation}\label{eq:solution}
\bbm{w(\xi)\\ w_\xi(\xi)}
=e^{\sbm{0& 1 \\ -\beta_\lambda & -\alpha_\lambda} \xi} 
\bbm{w(0)\\ w_\xi(0)}
\FORALL \xi\in(0,1).
\end{equation}
Using the definition of $\alpha_\lambda$ in \eqref{eq:coefficients}, the boundary conditions
\eqref{eq:eigen-main-boundary-1}–\eqref{eq:eigen-main-boundary-2}
reduce to $w_\xi(0)=-\alpha_\lambda w(0)$ and $w_\xi(1)=-\alpha_\lambda w(1)$.
Evaluating \eqref{eq:solution} at $\xi=1$ and enforcing the boundary condition at $\xi=1$
yields
$$
\bbm{\alpha_\lambda & 1}e^{\sbm{0 & 1 \\ -\beta_\lambda & -\alpha_\lambda}}\bbm{1\\ -\alpha_\lambda}w(0)=0.
$$
Assume that $w(0)=0$. Then \eqref{eq:eigen-main-boundary-1} and $\eta(\lambda+\kappa)+\gamma\neq 0$ gives $w_\xi(0)=0$, and \eqref{eq:solution} implies $w(\xi)=0$ for all $\xi\in(0,1)$. Hence $w=\eta y+z=0$, so $z=-\eta y$. Substituting into the second component of \eqref{eq:eigen-equation}, $\gamma y-\kappa z=\lambda z$, yields $(\eta(\lambda+\kappa)+\gamma)y=0$, and thus $y=z=0$. Therefore $w(0)=0$ leads only to the trivial solution, so if $\lambda$ is an eigenvalue then necessarily $w(0)\neq 0$ and $\lambda$ must satisfy
\begin{equation}\label{eq:main-eigenvalue-equation-two}
\bbm{\alpha_\lambda & 1}\,
e^{\sbm{0 & 1 \\ -\beta_\lambda & -\alpha_\lambda}}\,
\bbm{1 \\ -\alpha_\lambda}=0.
\end{equation}
To simplify \eqref{eq:main-eigenvalue-equation-two}, we compute the matrix
exponential of
$ \sbm{0 & 1 \\ -\beta_\lambda & -\alpha_\lambda}$. This is achieved by diagonalizing this matrix, for which we first determine its eigenvalues,
\begin{equation}\label{eq:eigenvalues-matrix}
\mu^{+}_\lambda
=\frac{-\alpha_\lambda+\sqrt{\alpha_\lambda^{2}-4\beta_\lambda}}{2}, \qquad \mu^{-}_\lambda
=\frac{-\alpha_\lambda-\sqrt{\alpha_\lambda^{2}-4\beta_\lambda}}{2}.
\end{equation}
We begin with the case $\mu_\lambda^{+}\neq \mu_\lambda^{-}$ (the repeated-root case $\mu_\lambda^{+}=\mu_\lambda^{-}$ is treated later). In this case,
$$
\bbm{0 & 1 \\ -\beta_\lambda & -\alpha_\lambda}
=
\bbm{1 & 1 \\ \mu_\lambda^{+} & \mu_\lambda^{-}}
\bbm{\mu_\lambda^{+} & 0 \\ 0 & \mu_\lambda^{-}}
\bbm{1 & 1 \\ \mu_\lambda^{+} & \mu_\lambda^{-}}^{-1}.
$$
Using the diagonalization above, we evaluate the matrix exponential explicitly. Substituting the computed matrix exponential into \eqref{eq:main-eigenvalue-equation-two} gives
\begin{equation}\label{eq:case-1}
(\alpha_\lambda+\mu_\lambda^{+})(\alpha_\lambda+\mu_\lambda^{-})(e^{\mu^{+}_\lambda}-e^{\mu^{-}_\lambda})=0.
\end{equation}
Hence, if $\lambda$ is an eigenvalue of $A$ and $\mu_\lambda^{+}\neq \mu_\lambda^{-}$, then at least one of the three factors in \eqref{eq:case-1} must be zero. We consider these cases below.
\begin{itemize}
\item[Case 1.] Suppose that $\alpha_\lambda+\mu_\lambda^{+}=0$. Substituting the formula for $\mu_\lambda^{+}$ in \eqref{eq:eigenvalues-matrix},
this condition reduces to $\beta_\lambda=0$.
By the definition of $\beta_\lambda$ in \eqref{eq:coefficients},
the condition holds if and only if $\lambda=0$ or $\lambda=-\kappa$.
The choice $\lambda=-\kappa$ yields $\alpha_\lambda=\beta_\lambda=0$,
which contradicts the standing assumption $\mu_\lambda^{+}\neq\mu_\lambda^{-}$.
Consequently, $\lambda=0$ is the only admissible eigenvalue arising from this case.

For $\lambda=0$, the boundary value problem \eqref{eq:eigen-main-ODE}--\eqref{eq:eigen-main-boundary-2}
admits a nontrivial solution $w=w^0$, where the normalized solution is
$$
w^0(\xi)=e^{\frac{\nu\kappa}{\eta\kappa+\gamma}\xi} \FORALL \xi\in(0,1).
$$
Since $w=\eta y+z$ and the eigenvalue relation $\gamma y-\kappa z=\lambda z$ reduces to $y=\frac{\kappa}{\gamma}z$ for $\lambda=0$, it follows that $z=\frac{\gamma}{\eta\kappa+\gamma}w^0$ and $y=\frac{\kappa}{\eta\kappa+\gamma}w^0$. Hence, the eigenpair of $A$ corresponding to $\lambda=0$ is given by
\begin{equation}\label{eq:eigen-pair-0}
    \left(0,\bbm{\frac{\kappa}{\eta\kappa +\gamma}w^0\\[1mm] \frac{\gamma}{\eta \kappa +\gamma}w^0} \right).
\end{equation}
\item[Case 2.] If $\alpha_\lambda+\mu^{-}_\lambda=0$, substituting the formula for $\mu_\lambda^{-}$ shows that this condition also reduces to
$\beta_\lambda=0$. Hence, this case leads to the same admissible eigenpair \eqref{eq:eigen-pair-0} as in Case 1.
\item[Case 3.] Suppose that $e^{\mu_\lambda^{+}}=e^{\mu_\lambda^{-}}$. Since $\mu_\lambda^{+}\neq \mu_\lambda^{-}$ in the present case, this is equivalent to
$\mu_\lambda^{+}-\mu_\lambda^{-}=2\pi i j$ for some $j\in\mathbb{Z}\setminus\{0\}$.
Using \eqref{eq:eigenvalues-matrix}, we then have $\sqrt{\alpha_\lambda^{2}-4\beta_\lambda}=2\pi i j$.
Squaring this relation and substituting \eqref{eq:coefficients} yields the cubic eigenvalue condition \eqref{eq:cubic-eq}.

Fix $j\in\mathbb{N}$ and $k\in\{1,2,3\}$. For $\lambda=\lambda_{k,j}$, the characteristic roots $\mu_\lambda^{\pm}$ in \eqref{eq:eigenvalues-matrix} yield the general solution $w(\xi)=c_1 e^{\mu_\lambda^{+}\xi}+c_2 e^{\mu_\lambda^{-}\xi}$.
Imposing the boundary condition $w_\xi(0)=-\alpha_\lambda w(0)$ determines the ratio $c_2/c_1$.
Using the explicit formulas for $\mu_\lambda^{\pm}$ in \eqref{eq:eigenvalues-matrix} together with the identity
$\sqrt{\alpha_\lambda^{2}-4\beta_\lambda}=2\pi i j$, we obtain a normalized nontrivial solution $w=w^{k,j}$ of \eqref{eq:eigen-main-ODE}--\eqref{eq:eigen-main-boundary-2} for $\lambda=\lambda_{k,j}$ as
$$
w^{k,j}(\xi)=e^{-\frac{\alpha_{\lambda_{k,j}}}{2}\xi}
\left(\cos(j\pi \xi)-\frac{\alpha_{\lambda_{k,j}}}{2j\pi}\sin(j\pi \xi)\right)
\FORALL \xi\in(0,1).
$$
Using $w=\eta y+z$ and $\gamma y=(\lambda+\kappa)z$ from \eqref{eq:eigen-equation}, we obtain the eigenpair in this case to be
$$
\left(
\lambda_{k,j},\;
\bbm{
\dfrac{\lambda_{k,j}+\kappa}{\eta(\lambda_{k,j}+\kappa)+\gamma}\,w^{k,j}\\[2mm]
\dfrac{\gamma}{\eta(\lambda_{k,j}+\kappa)+\gamma}\,w^{k,j}
}
\right),
$$
for $k=1,2,3$ and $j\in\mathbb{N}$.

Since $\eta,\gamma,\kappa,\nu>0$, the coefficients $a_j,b_j,c_j$ in \eqref{eq:a-n}-\eqref{eq:c-n} are strictly positive for every $j\in\mathbb{N}$. Moreover, a direct calculation using \eqref{eq:a-n}-\eqref{eq:c-n} gives
$$
a_jb_j-c_j=\frac{\bigl(4\pi^2\eta^2 j^2+4\eta\kappa+2\gamma+\nu^2\bigr)\bigl((4\pi^2\eta j^2+4\kappa)(\gamma+\eta\kappa)+\kappa\nu^2\bigr)}{8\eta^2}>0,
$$
so by the Routh-Hurwitz criterion, all three roots of \eqref{eq:cubic-eq}, namely $\lambda_{1,j},\lambda_{2,j},\lambda_{3,j}$, have strictly negative real part for each $j\in\mathbb{N}$.

We now show that, as $j\to\infty$, exactly one of the three eigenvalue sequences $\{\lambda_{1,j},\lambda_{2,j},\lambda_{3,j}\}$ diverges to $-\infty$, while the other two converge to the limit point $-\omega_0$. Define
\begin{equation}\label{eq:polynomial-limit}
    \Pscr_j(\lambda)
    =\frac{1}{\pi^2 j^2}
    \bigl(\lambda^3+a_j\lambda^2+b_j\lambda+c_j\bigr).
\end{equation}
Dividing the coefficients in \eqref{eq:a-n}--\eqref{eq:c-n} by $\pi^2 j^2$ and taking the limit $j\to\infty$, we obtain
\begin{equation}\label{eq:coefficients-limits}
\lim_{j\to \infty}\frac{a_j}{\pi^2 j^2}= \eta,\quad
\lim_{j\to \infty}\frac{b_j}{\pi^2 j^2}= 2(\gamma+\eta\kappa),\quad
\lim_{j\to \infty}\frac{c_j}{\pi^2 j^2}= \frac{(\gamma+\eta\kappa)^2}{\eta}.
\end{equation}
Fix $r>0$. Taking the limit $j\to\infty$ in \eqref{eq:polynomial-limit} and using \eqref{eq:coefficients-limits}, we obtain
\begin{equation}\label{eq:limit-polynomial}
\lim_{j\to\infty}\Pscr_j(\lambda)=
\eta\Bigl(\lambda+\kappa+\frac{\gamma}{\eta}\Bigr)^2,
\end{equation}
uniformly for all $\lambda\in\mathbb{C}$ such that $|\lambda+\omega_0|\le r$. The limit function in \eqref{eq:limit-polynomial} is holomorphic and not identically zero, with a zero of order two at $\lambda=-\omega_0=-(\kappa+\frac{\gamma}{\eta})$. In particular, the limit function has no zeros on the circle $\{\lambda\in\mathbb{C}\colon |\lambda+\omega_0|=\rho\}$ for any $\rho>0$. Let $0<\rho<r$ be fixed. By Hurwitz's theorem \cite[Chapter~VII, Section~2, Theorem~2.5]{Conway:1978}, it follows that, for all sufficiently large $j$, the polynomial $\Pscr_j$ has exactly two zeros in the set $\{\lambda\in\mathbb{C}\colon |\lambda+\omega_0|<\rho\}$, counted with multiplicity, and these zeros converge to $-\omega_0$ as $j\to\infty$.

Since $\Pscr_j$ and the cubic polynomial \eqref{eq:cubic-eq} have the same zeros, and $\lambda_{1,j}$, $\lambda_{2,j}$ and $\lambda_{3,j}$ are the three roots of \eqref{eq:cubic-eq}, they satisfy the relation
$$
\lambda_{1,j}+\lambda_{2,j}+\lambda_{3,j}=-a_j.
$$
Moreover, $\lim_{j\to\infty} a_j=+\infty$. By Hurwitz’s theorem, for all sufficiently large $j$ the polynomial has exactly two roots in $\{\lambda\in\mathbb{C}\colon |\lambda+\omega_0|<\rho\}$, and these two roots remain bounded, so their sum is bounded. Since $\lambda_{1,j}+\lambda_{2,j}+\lambda_{3,j}=-a_j\to-\infty$, it follows that the third root cannot lie in this disc and must diverge to $-\infty$. Consequently, as $j\to\infty$, exactly two of the three sequences $\lambda_{1,j},\lambda_{2,j},\lambda_{3,j}$ converge to $-\omega_0$ and the remaining one diverges to $-\infty$. In view of the ordering convention for the roots of \eqref{eq:cubic-eq} stated below \eqref{eq:c-n}, this identifies the diverging root as $\lambda_{1,j}$ and the two bounded roots as $\lambda_{2,j}$ and $\lambda_{3,j}$, and hence yields the limits in \eqref{eq:eigen-limits}.
\end{itemize}
Next, we consider the case when $\mu^{+}_\lambda$ and $\mu^{-}_\lambda$
defined in \eqref{eq:eigenvalues-matrix} are equal.
This occurs if and only if $\alpha_\lambda^{2}-4\beta_\lambda=0$.
In this case, the matrix
\(
\sbm{0 & 1 \\ -\beta_\lambda & -\alpha_\lambda}
\)
admits the Jordan decomposition
\begin{equation}\label{eq:jordan-decomposition}
\bbm{0 & 1 \\ -\beta_\lambda & -\alpha_\lambda}
=
\bbm{1 & 0 \\ -\frac{\alpha_\lambda}{2} & 1}
\bbm{-\frac{\alpha_\lambda}{2} & 1 \\ 0 & -\frac{\alpha_\lambda}{2}}
\bbm{1 & 0 \\ -\frac{\alpha_\lambda}{2} & 1}^{-1}.
\end{equation}
Using the Jordan decomposition \eqref{eq:jordan-decomposition}, we evaluate the matrix exponential of
$\sbm{0 & 1 \\ -\beta_\lambda & -\alpha_\lambda}$ and substitute it into \eqref{eq:main-eigenvalue-equation-two}. This simplifies to
$$
\alpha_\lambda^{2}e^{-\alpha_\lambda/2}=0,
$$
and therefore $\alpha_\lambda=0$. By the definition of $\alpha_\lambda$ in \eqref{eq:coefficients}, this is equivalent to $\lambda=-\kappa$ (indeed, for $\lambda=-\kappa$ one has $\alpha_\lambda^{2}-4\beta_\lambda=0$). Substituting $\lambda=-\kappa$ into \eqref{eq:eigen-main-ODE}--\eqref{eq:eigen-main-boundary-2} yields the boundary value problem $w_{\xi\xi}(\xi)=0$ for a.e. $\xi\in(0,1)$ with $w_\xi(0)=w_\xi(1)=0$, whose nontrivial solutions are constant functions $w(\xi)= c$ for all $\xi\in (0,1)$, $c\neq 0$. Finally, since $w=\eta y+z$ and the second equation in \eqref{eq:eigen-equation} at $\lambda=-\kappa$ gives $\gamma y=0$, we obtain $y=0$ and $z= c$. Hence the eigenpair associated with $\lambda=-\kappa$ is
$$
\left(-\kappa,\bbm{0\\1}\right).
$$

Fix $\lambda\in \sigma_p(A)$. The corresponding function $w=\eta y+z$ satisfies \eqref{eq:eigen-main-ODE} together with the boundary condition \eqref{eq:eigen-main-boundary-1} at $\xi=0$, which fixes $w_\xi(0)$ uniquely in terms of $w(0)$. Since \eqref{eq:eigen-main-ODE} is a second-order linear ODE, prescribing the initial data $(w(0),w_\xi(0))$ uniquely determines a solution. Therefore, once $w(0)$ is chosen, the value of $w_\xi(0)$ is forced by \eqref{eq:eigen-main-boundary-1}, and the resulting solution $w$ is uniquely determined. In particular, the space of solutions of \eqref{eq:eigen-main-ODE} satisfying \eqref{eq:eigen-main-boundary-1} is one-dimensional, parametrized by the single scalar $w(0)$.

Moreover, if $w(0)=0$, then \eqref{eq:eigen-main-boundary-1} gives $w_\xi(0)=0$, and uniqueness for \eqref{eq:eigen-main-ODE} implies $w(\xi)=0$ for all $\xi\in(0,1)$, using \eqref{eq:eigen-equation} and $w=\eta y+z$ (together with $\eta(\lambda+\kappa)+\gamma\neq 0$ for $\lambda\in\sigma_p(A)$), this forces $y=z=0$. Hence every nontrivial eigenfunction is unique up to a multiplicative constant. Consequently, $\dim\ker(\lambda I-A)=1$ for every $\lambda\in\sigma_p(A)$. This completes the proof of the theorem.
\end{proof}
We next establish a well-posedness result for a second-order boundary value problem with complex coefficients and complex-valued unknown. This result will play a central role in the spectral analysis of $A$, and will be used to prove that the spectrum of $A$ is the closure of its point spectrum, see Theorem~\ref{thm: complete spectrum}.
\begin{framed}
\begin{proposition}\label{prop:robin-bvp-wellposedness}
Let $\alpha,\beta\in\mathbb{C}$. For $f\in L^2(0,1)$, consider the differential equation
\begin{equation}\label{eq:robin-bvp}
w_{\xi\xi}(\xi)+\alpha w_\xi(\xi)+\beta w(\xi)=f(\xi) 
\qquad \text{for a.e. } \xi\in(0,1),
\end{equation}
with the boundary conditions
\begin{equation}\label{eq:robin-bc}
w_\xi(0)+\alpha w(0)=0,
\qquad
w_\xi(1)+\alpha w(1)=0.
\end{equation}
Assume that, when $f=0$ in \eqref{eq:robin-bvp}, the only solution $w\in H^2(0,1)$ satisfying \eqref{eq:robin-bc} is $w=0$. Then, for every $f\in L^2(0,1)$, there exists a unique solution $w_f\in H^2(0,1)$ satisfying \eqref{eq:robin-bvp} and \eqref{eq:robin-bc}. Moreover, there exists a constant $C>0$ such that
\begin{equation}\label{eq:robin-bvp-estimate}
\|w_f\|_{H^2(0,1)}\le C\|f\|_{L^2(0,1)}.
\end{equation}
\end{proposition}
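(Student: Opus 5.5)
The proof is elementary ODE theory: write the solution by variation of constants and reduce the boundary conditions to a $2\times 2$ linear system. Let $M=\sbm{0&1\\-\beta&-\alpha}$ and $W(\xi)=\sbm{w(\xi)\\ w_\xi(\xi)}$. Equation \eqref{eq:robin-bvp} is then equivalent to $W_\xi=MW+\sbm{0\\ f}$. Every $H^2$ solution therefore has the form
$$W(\xi)=e^{M\xi}W(0)+\int_0^\xi e^{M(\xi-s)}\bbm{0\\ f(s)}\,ds, \qquad \xi\in[0,1].$$
Conversely, for any $W(0)\in\cline^2$ this formula defines an absolutely continuous $W$ whose first component is $w\in H^2(0,1)$ and solves \eqref{eq:robin-bvp}, because $f\in L^2$ makes $w_{\xi\xi}=f-\alpha w_\xi-\beta w\in L^2$.

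Introduce the row vector $\ell=\sbm{\alpha & 1}$, so that the boundary conditions \eqref{eq:robin-bc} read $\ell W(0)=0$ and $\ell W(1)=0$. Write $W(0)=c\,\sbm{1\\-\alpha}$ with $c\in\cline$; this parametrizes exactly the vectors satisfying the left condition. The right condition then becomes one scalar equation,
$$
c\,d+\ell\int_0^1 e^{M(1-s)}\bbm{0\\ f(s)}\,ds=0, \qquad d:=\ell\, e^{M}\bbm{1\\-\alpha}.
$$
When $f=0$, the homogeneous problem has the nonzero solution generated by $c=1$ whenever $d=0$. That solution is nonzero because $W(0)\neq0$, and by uniqueness for the linear ODE this makes $w\not\equiv 0$. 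The uniqueness hypothesis therefore forces $d\neq0$.

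Consequently, for every $f$ the coefficient is uniquely determined: $c=-d^{-1}\ell\int_0^1 e^{M(1-s)}\sbm{0\\ f(s)}\,ds$. This gives existence of $w_f$. Uniqueness follows because the difference of two solutions solves the homogeneous problem, which has only the trivial solution by assumption.

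For the estimate \eqref{eq:robin-bvp-estimate}, set $m=\sup_{t\in[0,1]}\|e^{Mt}\|$. By Cauchy--Schwarz, $|c|\le |d|^{-1}\|\ell\|\,m\,\|f\|_{L^2}$. Hence
$$\sup_\xi |W(\xi)|\le m\,|c|\,(1+|\alpha|)+m\|f\|_{L^2}\le C_1\|f\|_{L^2},$$
which bounds $\|w\|_{L^2}$ and $\|w_\xi\|_{L^2}$. The equation gives $\|w_{\xi\xi}\|_{L^2}\le \|f\|_{L^2}+|\alpha|\|w_\xi\|_{L^2}+|\beta|\|w\|_{L^2}$, and summing the three bounds yields the constant $C$, which depends only on $\alpha,\beta$.

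No step is a serious obstacle. The one point that needs care is the equivalence between the homogeneous problem having only the trivial solution and the scalar $d$ being nonzero, that is, checking that the parametrization $W(0)=c\sbm{1\\-\alpha}$ captures every solution satisfying the left boundary condition. This holds because $\ell$ is a nonzero row vector, so its kernel in $\cline^2$ is one-dimensional and is spanned by $\sbm{1\\-\alpha}$.
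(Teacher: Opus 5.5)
Your proof is correct, and it takes a genuinely different route from the paper.

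The paper never writes down a solution. It applies the Volterra transformation $v=w+\alpha\int_0^\xi w(s)\,ds$ to convert the Robin conditions into Neumann conditions. It then writes the transformed operator as $L=T+G$, where $T=\partial_{\xi\xi}+I$ with Neumann conditions is boundedly invertible and $G$ is bounded, hence $T$-compact. Kato's stability theorem for the Fredholm index gives $\operatorname{ind}(L)=0$. Injectivity, which is the hypothesis, then forces surjectivity, and the bounded inverse theorem supplies the $H^2$ estimate.

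You instead use variation of constants and reduce everything to the single scalar $d=\ell\, e^{M}(1,-\alpha)^{\top}$. The uniqueness hypothesis is equivalent to $d\neq 0$, and after that existence, uniqueness and the estimate are all explicit.

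Your argument buys several things:
\begin{itemize}
\item It is shorter and fully elementary; no relative-compactness or index theory is needed.
\item It gives an explicit constant $C$ in terms of $|d|^{-1}$ and $\sup_{t\in[0,1]}\|e^{Mt}\|$. This constant is therefore locally uniform in $(\alpha,\beta)$ on any compact set where $d\neq 0$.
\item It makes transparent that $d=0$ is exactly the characteristic condition \eqref{eq:main-eigenvalue-equation-two} used to locate the eigenvalues in Theorem~\ref{thm:point-spectrum}. With $\alpha=\alpha_\mu$, $\beta=\beta_\mu$, the passage from Theorem~\ref{thm:point-spectrum} to the resolvent construction in Theorem~\ref{thm: complete spectrum} becomes almost immediate.
\end{itemize}

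The paper's Fredholm-alternative argument is heavier. Its advantage is that it does not rely on an explicit fundamental matrix, nor on the boundary conditions collapsing to one scalar equation. It would therefore carry over to situations where your shooting-type reduction is unavailable, such as compact nonlocal perturbations or multidimensional analogues.
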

\end{framed} 

\begin{proof}
Consider the Volterra-type transformation
\begin{equation}\label{eq:volterra-transform}
v(\xi)=w(\xi)+\alpha\int_0^\xi w(s)\,ds \qquad \forall \xi\in(0,1).
\end{equation}
The inverse transformation is given by
\begin{equation}\label{eq:volterra-transform-inverse}
w(\xi)=v(\xi)-\alpha\int_0^\xi e^{-\alpha(\xi-s)}v(s)\,ds \qquad \forall \xi\in(0,1).
\end{equation}
Then \eqref{eq:robin-bvp} and \eqref{eq:robin-bc} hold if and only if
\begin{equation}\label{eq:transformed-bvp}
v_{\xi\xi}(\xi)+\beta\left(v(\xi)-\alpha\int_0^\xi e^{-\alpha(\xi-s)}v(s)\,ds\right)=f(\xi)
\qquad \text{for a.e. } \xi\in(0,1),
\end{equation}
and
\begin{equation}\label{eq:transformed-bvp-boundary}
v_\xi(0)=0,
\qquad
v_\xi(1)=0.
\end{equation}
Indeed, if $w\in H^2(0,1)$ and $v$ is defined by \eqref{eq:volterra-transform}, then $v_\xi(\xi)=w_\xi(\xi)+\alpha w(\xi)$ for all $\xi\in(0,1)$ and $v_{\xi\xi}(\xi)=w_{\xi\xi}(\xi)+\alpha w_\xi(\xi)$ for a.e. $\xi\in(0,1)$, so \eqref{eq:robin-bc} is equivalent to \eqref{eq:transformed-bvp-boundary}, and \eqref{eq:robin-bvp} yields \eqref{eq:transformed-bvp}. Conversely, if $v\in H^2(0,1)$ and $w$ is defined by \eqref{eq:volterra-transform-inverse}, then $w\in H^2(0,1)$, and the same identities in reverse show that \eqref{eq:transformed-bvp}--\eqref{eq:transformed-bvp-boundary} implies \eqref{eq:robin-bvp}--\eqref{eq:robin-bc}. Hence the two problems are equivalent. Since both transformations are bounded on $H^2(0,1)$, the $H^2(0,1)$ norms of $w$ and $v$ are equivalent. In particular, if $f=0$, then the transformed problem has only the trivial solution.

Define $L:D(L)\subset L^2(0,1)\to L^2(0,1)$ by
$$
D(L)=\left\{v\in H^2(0,1)\ \middle|\ v_\xi(0)=0,\ v_\xi(1)=0\right\},
$$
and
$$
(Lv)(\xi)=v_{\xi\xi}(\xi)+\beta v(\xi)-\alpha\beta\int_0^\xi e^{-\alpha(\xi-s)}v(s)\,ds 
\qquad \forall v\in D(L), \quad \text{for a.e. } \xi\in(0,1).
$$
Then \eqref{eq:transformed-bvp} and \eqref{eq:transformed-bvp-boundary} can be written as $Lv=f$. Moreover, $L$ is injective, since $Lv=0$ implies that the associated $w$, given by \eqref{eq:volterra-transform-inverse}, solves \eqref{eq:robin-bvp} and \eqref{eq:robin-bc} with $f=0$, hence $w=0$ by hypothesis and so $v=0$.

Consider the operator $T:D(T)\subset L^2(0,1)\to L^2(0,1)$ defined by $D(T)=D(L)$ and
$$
(Tv)(\xi)=v_{\xi\xi}(\xi)+v(\xi) 
\qquad \forall v\in D(T), \quad \text{for a.e. } \xi\in(0,1).
$$
It can be checked directly that $T:D(T)\to L^2(0,1)$ is bijective and closed. Moreover,
$$
T^{-1}\in\mathcal L\bigl(L^2(0,1),D(T)\bigr)\subset \mathcal L\bigl(L^2(0,1),H^2(0,1)\bigr).
$$
Let
$$
(Kv)(\xi)=\int_0^\xi e^{-\alpha(\xi-s)}v(s)\,ds
\qquad \forall v\in L^2(0,1), \quad \text{for a.e. } \xi\in(0,1).
$$
Then $K\in\mathcal L(L^2(0,1),L^2(0,1))$. Now define $G\in\mathcal L(L^2(0,1),L^2(0,1))$ by
$$
(Gv)(\xi)=(\beta-1)v(\xi)-\alpha\beta\int_0^\xi e^{-\alpha(\xi-s)}v(s)\,ds
\qquad \forall v\in L^2(0,1), \quad \text{for a.e. } \xi\in(0,1).
$$
With this definition, we have $L=T+G$ on $D(T)=D(L)$. Since we have $T^{-1}\in\mathcal L(L^2(0,1),H^2(0,1))$ and the embedding $H^2(0,1)\hookrightarrow L^2(0,1)$ is compact, $T^{-1}$ is compact as an operator from $L^2(0,1)$ into $L^2(0,1)$. Since $G\in\mathcal L(L^2(0,1),L^2(0,1))$, the operator $GT^{-1}$ is compact. Thus $G$ is $T$-compact in the sense of \cite[Chapter~IV, Section~1]{Kato}. Since $T$ is closed, \cite[Chapter~IV, Section~1, Theorem~1.11]{Kato} implies that $L$ is closed. Moreover, $T$ is Fredholm with $\operatorname{ind}(T)=0$, and \cite[Chapter~IV, Section~5, Theorem~5.26]{Kato} yields that
\begin{equation}\label{eq:index-L-zero}
\operatorname{ind}(L)=\operatorname{ind}(T)=0.
\end{equation}
Since $L$ is injective and semi-Fredholm, \eqref{eq:index-L-zero} gives $\operatorname{codim}(\operatorname{Ran}(L))=0$ and $\operatorname{Ran}(L)$ is closed. Hence $\operatorname{Ran}(L)=L^2(0,1)$, and therefore $L$ is bijective.

Now endow $D(L)$ with the graph norm. Since $L$ is closed, $D(L)$ is Banach and the bounded inverse theorem gives $L^{-1}\in\mathcal L(L^2(0,1),D(L))$. Since the graph norm of $L$ on $D(L)$ is equivalent to the $H^2(0,1)$ norm, it follows that $L^{-1}\in\mathcal L(L^2(0,1),H^2(0,1))$. In particular, if $Lv_f=f$, then $\|v_f\|_{H^2(0,1)}\le C\|f\|_{L^2(0,1)}$. Finally, let $w_f$ be defined from $v_f$ by \eqref{eq:volterra-transform-inverse}. Then $w_f\in H^2(0,1)$ satisfies \eqref{eq:robin-bvp}--\eqref{eq:robin-bc}, and the equivalence of the $H^2(0,1)$ norms of $v_f$ and $w_f$ yields \eqref{eq:robin-bvp-estimate}. Uniqueness follows from the hypothesis. This completes the proof.
\end{proof}
\begin{remark}
The boundary value problems arising in the spectral analysis of $A$ are not
regular self-adjoint Sturm--Liouville problems. After the change of variables
$w=\eta y+z$, the advection term produces a first-order term and places a
complex parameter in the boundary conditions, see \eqref{eq:eigen-main-ODE}--\eqref{eq:eigen-main-boundary-2}, so the coefficients of the differential equation and of the boundary conditions are
in general complex. The spectral theory of regular self-adjoint Sturm--Liouville problems is therefore not directly applicable. In Proposition~\ref{prop:robin-bvp-wellposedness} we prove that uniqueness for the homogeneous problem implies unique solvability of the inhomogeneous problem together with an $H^2(0,1)$ estimate, and this result is used in the proof of Theorem~\ref{thm: complete spectrum}.
\end{remark}

Having computed the point spectrum of $A$ in Theorem~\ref{thm:point-spectrum}, we now combine it with Proposition~\ref{prop:robin-bvp-wellposedness} to show that $\sigma(A)$ is precisely the closure of its point spectrum.

\begin{framed}
\begin{theorem}\label{thm: complete spectrum}
The spectrum of $A$ is the closure of its set of eigenvalues, i.e.
$$
\sigma(A)=\overline{\sigma_p(A)}.
$$
\end{theorem}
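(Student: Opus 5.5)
Since $\sigma(A)$ is closed and contains $\sigma_p(A)$, the inclusion $\overline{\sigma_p(A)}\subset\sigma(A)$ is immediate. By Theorem~\ref{thm:point-spectrum}, $\overline{\sigma_p(A)}=\sigma_p(A)\cup\{-\omega_0\}$. So the work is to show that every $\lambda\notin\sigma_p(A)\cup\{-\omega_0\}$ lies in $\rho(A)$. Fix such a $\lambda$. We will solve $(\lambda I-A)\sbm{y\\z}=\sbm{f\\g}$ for arbitrary $\sbm{f\\g}\in X$ and show that the solution lies in $D(A)$. Injectivity is given because $\lambda\notin\sigma_p(A)$. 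Once $\lambda I-A$ is a bijection from $D(A)$ onto $X$, boundedness of the inverse follows from closedness of $A$, which we get from the analytic semigroup result, Theorem~\ref{thm:analytic-semigroup}; alternatively, we can estimate the inverse directly. The point $-\omega_0$ is included automatically as a limit of eigenvalues, so it needs no separate treatment.

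\textbf{Reduction to a scalar problem.} Consider first $\lambda\neq-\kappa$. The second equation gives $z=(\gamma y+g)/(\lambda+\kappa)$. Set $w=\eta y+z$. Then
\[
y=\frac{(\lambda+\kappa)w-g}{\eta(\lambda+\kappa)+\gamma},
\]
which is well defined since $\lambda\neq-\omega_0$. Substituting into the first equation $\lambda y-w_{\xi\xi}+\nu y_\xi=f$, and dividing by $-(\eta(\lambda+\kappa)+\gamma)/(\ldots)$ as in the proof of Theorem~\ref{thm:point-spectrum}, gives
\[
w_{\xi\xi}+\alpha_\lambda w_\xi+\beta_\lambda w=F, \qquad F\in L^2(0,1).
\]
Here $F$ is a combination of $f$, $g$ and $g_\xi$, which is in $L^2$ because $g\in H^1(0,1)$. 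The boundary conditions $w_\xi=\nu y$ at $\xi=0,1$ become $w_\xi+\alpha_\lambda w=h_0$ at the two endpoints, where the data $h_0$ are multiples of $g(0)$ and $g(1)$.

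To reach the homogeneous boundary conditions of Proposition~\ref{prop:robin-bvp-wellposedness}, we subtract a smooth lifting $\psi$, for example a quadratic polynomial matching the boundary data. Its size is bounded by $\|g\|_{H^1}$ through the trace inequality.

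\textbf{Applying Proposition~\ref{prop:robin-bvp-wellposedness}.} The homogeneous problem with $F=0$ has only the trivial solution. Indeed, a nontrivial solution $w$ would produce, via the formulas above with $f=g=0$, a nontrivial eigenvector, contradicting $\lambda\notin\sigma_p(A)$; this is the same equivalence used in the proof of Theorem~\ref{thm:point-spectrum}. The proposition then yields a unique $w\in H^2(0,1)$ with $\|w\|_{H^2}\le C(\|f\|_{L^2}+\|g\|_{H^1})$. We recover $y$ and $z$, and each is in $H^1$ because $w\in H^2$ and $g\in H^1$. Also $\eta y+z=w\in H^2$, and the boundary conditions hold, so $\sbm{y\\z}\in D(A)$ with norm controlled by $\|\sbm{f\\g}\|_X$. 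This gives the bounded resolvent directly.

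\textbf{The main obstacle.} The delicate step is the bookkeeping that keeps $F$ in $L^2$. The term $\nu y_\xi$ contains $g_\xi$, which is fine, but $g_{\xi\xi}$ must never appear. This works because the second derivative acts only on $w$, and the structure $(\eta y+z)_{\xi\xi}$ in $A$ ensures exactly that.

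One more case remains: $\lambda=-\kappa$ is itself an eigenvalue, so it needs no argument. Every other $\lambda$ satisfies $\lambda+\kappa\neq0$, so the reduction above covers all of $\cline\setminus\overline{\sigma_p(A)}$.
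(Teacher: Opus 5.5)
Your proposal is correct and follows essentially the paper's route: set $w=\eta y+z$, reduce to the Robin problem $w_{\xi\xi}+\alpha_\lambda w_\xi+\beta_\lambda w=F$, lift the inhomogeneous boundary data, and apply Proposition~\ref{prop:robin-bvp-wellposedness}, with uniqueness coming from $\lambda\notin\sigma_p(A)$. You then recover $\sbm{y\\ z}\in D(A)$ with a direct bound; the only deviations are that the paper uses an affine lifting (which requires $\alpha_\lambda\neq0$, i.e.\ $\lambda\neq-\kappa$) whereas your quadratic lifting works for any $\alpha_\lambda$, and that you should rely on the direct estimate rather than on Theorem~\ref{thm:analytic-semigroup}, because the paper's proof of that theorem picks a large real $\lambda\in\rho(A)$, which is itself obtained from the present result.
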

\end{framed}

\begin{proof}
Fix $\mu\in\mathbb{C}\setminus\overline{\sigma_p(A)}$ and let $\sbm{f\\ g}\in X$ be arbitrary. Consider
\begin{equation}\label{eq:resolvent-equation}
(\mu I-A)\bbm{y\\ z}=\bbm{f\\ g}.
\end{equation}
Since $\mu\notin\sigma_p(A)$, the operator $\mu I-A$ is injective. We show that \eqref{eq:resolvent-equation} has a unique solution $\sbm{y\\ z}\in D(A)$ for every $\sbm{f\\ g}\in X$, and that this solution depends continuously on $\sbm{f\\ g}$. Hence $\mu\in\rho(A)$. Since $\mu$ was arbitrary, $\mathbb{C}\setminus\overline{\sigma_p(A)}\subset\rho(A)$, so $\sigma(A)\subset\overline{\sigma_p(A)}$. The reverse inclusion follows from $\sigma_p(A)\subset\sigma(A)$ and the closedness of $\sigma(A)$. Therefore $\sigma(A)=\overline{\sigma_p(A)}$.

Define $w=\eta y+z$. Using the definition of the operator $A$ in \eqref{eq:Domain-A}--\eqref{eq:operator-A}, we eliminate $y$ and $z$ from \eqref{eq:resolvent-equation} in favour of $w$ and obtain
\begin{equation}\label{eq:w-eq}
-w_{\xi\xi}(\xi)-\alpha_\mu w_\xi(\xi)-\beta_\mu w(\xi)=r_\mu(\xi)\qquad \text{for a.e. } \xi\in(0,1),
\end{equation}
with the boundary conditions
\begin{equation}\label{eq:w-bc}
w_\xi(0)+\alpha_\mu w(0)=-\frac{\nu g(0)}{\gamma +\eta(\mu+\kappa)},
\qquad
w_\xi(1)+\alpha_\mu w(1)=-\frac{\nu g(1)}{\gamma +\eta(\mu+\kappa)}.
\end{equation}
Here the coefficients $\alpha_\mu$ and $\beta_\mu$, together with the function $r_\mu$, are
\begin{equation}\label{eq:mu-coe}
\alpha_\mu=-\frac{\nu(\mu+\kappa)}{\gamma +\eta(\mu+\kappa)},
\qquad
\beta_\mu=-\frac{\mu(\mu+\kappa)}{\gamma +\eta(\mu+\kappa)},
\qquad
r_\mu=f+\frac{\mu g+\nu g_\xi}{\gamma +\eta(\mu+\kappa)}.
\end{equation}
Define
\begin{equation}\label{eq:v-def}
v(\xi)
=
w(\xi)
+
\frac{\nu}{\gamma_\mu}
\left[
\frac{(\alpha_\mu+1)g(0)-g(1)}{\alpha_\mu^{2}}
+
\frac{g(1)-g(0)}{\alpha_\mu}\,\xi
\right]
\FORALL \xi\in(0,1).
\end{equation}
Here
$\gamma_\mu=\gamma+\eta(\mu+\kappa).$
Since $\mu\notin\overline{\sigma_p(A)}$ and $-\omega_0=-\kappa-\frac{\gamma}{\eta}\in\overline{\sigma_p(A)}$, we have $\gamma_\mu\neq 0$. Also, $\mu\notin\overline{\sigma_p(A)}$ implies $\mu\neq-\kappa$, and hence $\alpha_\mu$ is well defined and nonzero. Therefore $v$ is well defined. Substituting \eqref{eq:v-def} into \eqref{eq:w-eq} and \eqref{eq:w-bc} yields
\begin{equation}\label{eq:v-eq}
v_{\xi\xi}(\xi)+\alpha_\mu v_\xi(\xi)+\beta_\mu v(\xi)=-s_\mu(\xi)\qquad \text{for a.e. } \xi\in(0,1),
\end{equation}
with boundary conditions
\begin{equation}\label{eq:v-eq-boundary}
v_\xi(0)+\alpha_\mu v(0)=0,
\qquad
v_\xi(1)+\alpha_\mu v(1)=0.
\end{equation}
The function $s_\mu$ appearing on the right-hand side of \eqref{eq:v-eq} is given by
\begin{equation}\label{eq:s-mu}
s_\mu(\xi)=r_\mu(\xi)
+
\frac{\nu}{\gamma_\mu}\!\left(
\left[\,1-\frac{\beta_\mu(\alpha_\mu+1)}{\alpha_\mu^2}+\frac{\beta_\mu}{\alpha_\mu}\xi\,\right]g(0)
+
\left[\,-1+\frac{\beta_\mu}{\alpha_\mu^2}-\frac{\beta_\mu}{\alpha_\mu}\xi\,\right]g(1)
\right)
\end{equation}
for a.e. $\xi\in(0,1)$.

We claim that the only solution $v\in H^2(0,1)$ of
$$
v_{\xi\xi}(\xi)+\alpha_\mu v_\xi(\xi)+\beta_\mu v(\xi)=0 \qquad \text{for a.e. } \xi\in(0,1),
$$
together with \eqref{eq:v-eq-boundary}, is $v=0$. Indeed, if $v$ solves it, then $w:=v$ satisfies \eqref{eq:w-eq}--\eqref{eq:w-bc} with $f=g=0$, and the pair $\sbm{y\\ z}$ defined by \eqref{eq:yz-def} with $g=0$ lies in $D(A)$ and satisfies $(\mu I-A)\sbm{y\\ z}=\sbm{0\\ 0}$, by the case $f=g=0$ of the computation below. Since $\mu\notin\sigma_p(A)$, this gives $v=w=0$. Therefore Proposition~\ref{prop:robin-bvp-wellposedness} applies with $\alpha=\alpha_\mu$, $\beta=\beta_\mu$, and $f=-s_\mu$, and there exists a unique solution $v\in H^2(0,1)$ satisfying \eqref{eq:v-eq} and \eqref{eq:v-eq-boundary}. Moreover,
\begin{equation}\label{eq:estimate main}
\|v\|_{H^2(0,1)}
\le C_\mu \|s_\mu\|_{L^2(0,1)}.
\end{equation}
From \eqref{eq:v-def} we obtain that $w$ equals $v$ minus the affine function appearing in \eqref{eq:v-def}, so $w\in H^2(0,1)$ and $w$ solves \eqref{eq:w-eq}--\eqref{eq:w-bc}. Define
\begin{equation}\label{eq:yz-def}
y=\frac{(\mu+\kappa)w-g}{\gamma_\mu},
\qquad
z=w-\eta y.
\end{equation}
Since $w\in H^2(0,1)$ and $g\in H^1(0,1)$, we have $y,z\in H^1(0,1)$ and $\eta y+z=w\in H^2(0,1)$. From \eqref{eq:w-bc} and the definition of $\alpha_\mu$, for $j=0,1$ we have
$$
w_\xi(j)=-\alpha_\mu w(j)-\frac{\nu g(j)}{\gamma_\mu}
=\frac{\nu(\mu+\kappa)}{\gamma_\mu}w(j)-\frac{\nu g(j)}{\gamma_\mu}
=\nu\,\frac{(\mu+\kappa)w(j)-g(j)}{\gamma_\mu}=\nu y(j),
$$
so $\sbm{y\\ z}\in D(A)$. For the second component of $(\mu I-A)\sbm{y\\ z}$, using $z=w-\eta y$ we get
$$
(\mu+\kappa)z-\gamma y=(\mu+\kappa)(w-\eta y)-\gamma y=(\mu+\kappa)w-\gamma_\mu y=g.
$$
For the first component, using $y_\xi=\frac{(\mu+\kappa)w_\xi-g_\xi}{\gamma_\mu}$ together with \eqref{eq:w-eq} and the definitions of $\alpha_\mu$, $\beta_\mu$ and $r_\mu$, we obtain
\begin{align*}
\gamma_\mu\bigl(\mu y-w_{\xi\xi}+\nu y_\xi\bigr)
=&-\gamma_\mu w_{\xi\xi}+\nu(\mu+\kappa)w_\xi+\mu(\mu+\kappa)w-\mu g-\nu g_\xi\\
=&\gamma_\mu r_\mu-\mu g-\nu g_\xi\\
=&\gamma_\mu f.
\end{align*}
Dividing by $\gamma_\mu\neq 0$ and recalling that $w=\eta y+z$, we obtain $\mu y-\bigl((\eta y+z)_{\xi\xi}-\nu y_\xi\bigr)=f$, which gives us
$$
(\mu I-A)\bbm{y\\ z}=\bbm{f\\ g}.
$$
The affine correction in \eqref{eq:v-def} depends linearly on $g(0)$ and $g(1)$, so its $H^2(0,1)$ norm is bounded by a constant times $|g(0)|+|g(1)|$, and the trace estimate gives $|g(0)|+|g(1)|\le C\|g\|_{H^1(0,1)}$. Therefore,
$$
\|w\|_{H^2(0,1)}\le \|v\|_{H^2(0,1)}+C\|g\|_{H^1(0,1)}.
$$
Also, \eqref{eq:estimate main} and \eqref{eq:s-mu} imply that
$$
\|v\|_{H^2(0,1)}
\le C_\mu\|s_\mu\|_{L^2(0,1)}
\le C_\mu\bigl(\|f\|_{L^2(0,1)}+\|g\|_{H^1(0,1)}\bigr),
$$
and combining these estimates, we obtain
$$
\|w\|_{H^2(0,1)}\le C_\mu\Bigl(\|f\|_{L^2(0,1)}+\|g\|_{H^1(0,1)}\Bigr).
$$
Using \eqref{eq:yz-def}, we then get
$$
\|y\|_{L^2(0,1)}+\|z\|_{H^1(0,1)}
\le C_\mu\Bigl(\|w\|_{H^2(0,1)}+\|g\|_{H^1(0,1)}\Bigr)
\le C_\mu\left(\|f\|_{L^2(0,1)}+\|g\|_{H^1(0,1)}\right).
$$
Hence
$$
\left\|\bbm{y\\ z}\right\|_X
\le C_\mu\left\|\bbm{f\\ g}\right\|_X.
$$
Therefore $\mu I-A$ is bijective, and the above estimate shows that $\mu\in\rho(A)$. Hence $\sigma(A)\subset \overline{\sigma_p(A)}$. The reverse inclusion follows from $\sigma_p(A)\subset\sigma(A)$ and closedness of $\sigma(A)$. Therefore $\sigma(A)=\overline{\sigma_p(A)}$. This completes the proof of the theorem.
\end{proof}

Recall that the algebraic multiplicity of an isolated eigenvalue $\lambda$ of a linear operator $L$ is defined as the dimension of its generalized eigenspace, i.e.,
\[
m_a(\lambda,L)
=
\dim \bigcup_{m\in\mathbb N}
\ker\bigl((\lambda I-L)^m\bigr).
\]
Having identified the spectrum of $A$, we now prove that every eigenvalue of $A$ has finite algebraic multiplicity.
\begin{framed}
\begin{theorem}\label{thm:finite-multi}
Every eigenvalue $\lambda\in\sigma_p(A)$ has finite algebraic multiplicity.
\end{theorem}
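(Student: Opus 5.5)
Fix $\lambda\in\sigma_p(A)$. By Theorem~\ref{thm:point-spectrum} and Theorem~\ref{thm: complete spectrum}, $\lambda$ is an isolated point of $\sigma(A)$: the only accumulation point of $\sigma_p(A)$ is $-\omega_0$, and $-\omega_0\notin\sigma_p(A)$. The generalized eigenspace is therefore the range of the Riesz projection $P_\lambda=\frac{1}{2\pi i}\oint_{|\mu-\lambda|=\delta}(\mu I-A)^{-1}\,d\mu$ for small $\delta>0$. To prove finite algebraic multiplicity it suffices to show that $\lambda$ is a pole of finite order of the resolvent and that the residue has finite rank. The plan is to make this explicit using the reduction to the scalar problem in $w=\eta y+z$ from the proof of Theorem~\ref{thm: complete spectrum}.

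\textbf{Step 1.} For $\mu$ near $\lambda$ we have $\gamma_\mu\neq0$. The resolvent is $(y,z)=\mathcal R(\mu)(f,g)$, with $y,z$ given by \eqref{eq:yz-def} in terms of $w$, and $w$ is obtained from $v$, which solves \eqref{eq:v-eq}--\eqref{eq:v-eq-boundary}. Near $\lambda$, all coefficients $\alpha_\mu,\beta_\mu,1/\gamma_\mu$ and the data $s_\mu$ are holomorphic in $\mu$. One exception is $\lambda=-\kappa$, where $\alpha_\mu=0$ and the affine correction in \eqref{eq:v-def} is singular. There I would replace the correction by a different lifting that is holomorphic in $\mu$, for instance a quadratic polynomial $p_\mu$ with $p_{\mu,\xi}(j)+\alpha_\mu p_\mu(j)=-\nu g(j)/\gamma_\mu$, which exists for all $\mu$ near $-\kappa$.

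\textbf{Step 2.} I would solve the homogeneous-boundary Robin problem via a Green's function. Let $\phi_1(\cdot,\mu)$ and $\phi_2(\cdot,\mu)$ be the solutions of the homogeneous ODE with $\phi_1$ satisfying the left boundary condition and $\phi_2$ satisfying the right one; both can be taken entire in $\mu$ from the explicit fundamental matrix in \eqref{eq:solution}. Their Wronskian $W(\mu)$, which is $e^{-\alpha_\mu\xi}$ times a constant $D(\mu)$, is holomorphic. Up to a nonvanishing factor, $D(\mu)$ is the left-hand side of \eqref{eq:main-eigenvalue-equation-two}, namely $(\alpha_\mu+\mu^+_\mu)(\alpha_\mu+\mu^-_\mu)(e^{\mu^+}-e^{\mu^-})/(\mu^+-\mu^-)$ suitably symmetrized so that it is entire in $\alpha,\beta$. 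Then
\[
v(\xi)=\frac{1}{D(\mu)}\Bigl(\phi_2(\xi)\int_0^\xi\phi_1 e^{\alpha_\mu s}(-s_\mu)\,ds+\phi_1(\xi)\int_\xi^1\phi_2 e^{\alpha_\mu s}(-s_\mu)\,ds\Bigr).
\]
Consequently $\mathcal R(\mu)=N(\mu)/D(\mu)$, where $N(\mu)$ is an operator-valued function holomorphic near $\lambda$ with values in $\Lscr(X)$. Since $D$ is holomorphic and not identically zero (it is nonzero off $\sigma_p(A)$), $\lambda$ is a zero of $D$ of some finite order $m$, and so $\lambda$ is a pole of the resolvent of order at most $m$.

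\textbf{Step 3.} The Laurent coefficients of $\mathcal R$ at $\lambda$ have the form $\sum_k c_k\,\partial_\mu^k N(\lambda)$. Each $\partial^k_\mu N(\lambda)$ is a sum of a term with range in $\operatorname{span}\{\partial_\mu^i\phi_1,\partial_\mu^i\phi_2\}$ (after applying \eqref{eq:yz-def}) and terms from the lifting and from $g/\gamma_\mu$, which are holomorphic and hence do not enter the singular part. Thus the singular part of the Laurent expansion consists of finite-rank operators: its range lies in the span of finitely many functions $\partial_\mu^i\phi_{1,2}(\cdot,\lambda)$, $i\le m$, pushed through \eqref{eq:yz-def}. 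Hence $P_\lambda$, the coefficient of $(\mu-\lambda)^{-1}$, has finite rank, and $m_a(\lambda,A)=\operatorname{rank}P_\lambda<\infty$. Note that $\ker(\lambda I-A)^k\subset\operatorname{Ran}P_\lambda$ because $\lambda$ is isolated.

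\textbf{Main obstacle.} The hardest part is the bookkeeping in Steps 1--2: verifying that every $\mu$-dependence except the factor $1/D(\mu)$ is genuinely holomorphic at $\lambda$, especially at $\lambda=-\kappa$ and at points where $\mu^+_\mu=\mu^-_\mu$. To handle this, I would use the entire functions $\cosh(\tfrac{s}{2}\xi)$ and $\sinh(\tfrac{s}{2}\xi)/s$ with $s^2=\alpha^2-4\beta$ to build $\phi_{1,2}$, which avoids any branch choice.
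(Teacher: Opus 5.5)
Your route differs from the paper's and can be completed. The idea is to introduce a scalar characteristic function $D$, write $(\mu I-A)^{-1}=N(\mu)/D(\mu)$ near $\lambda$, get a pole of finite order, and show the residue has finite rank. But Step~3 as you justify it rests on a false claim. The Green numerator $\mathcal K_\mu h(\xi)=\phi_2(\xi)\int_0^\xi\phi_1e^{\alpha_\mu s}h\,ds+\phi_1(\xi)\int_\xi^1\phi_2e^{\alpha_\mu s}h\,ds$ has $\xi$-dependent limits of integration. So neither it nor its $\mu$-derivatives have range in $\operatorname{span}\{\partial_\mu^i\phi_1,\partial_\mu^i\phi_2\}$.

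Precisely: let $W=\phi_1\phi_2'-\phi_1'\phi_2=D(\mu)e^{-\alpha_\mu\xi}$, and let $C(\xi,s,\mu)$ be the Cauchy function of the ODE ($C=0$ and $\partial_\xi C=1$ at $\xi=s$). Then $\phi_2(\xi)\phi_1(s)-\phi_1(\xi)\phi_2(s)=D(\mu)e^{-\alpha_\mu s}C(\xi,s,\mu)$. Hence $\mathcal K_\mu=D(\mu)\mathcal V_\mu+\phi_1(\cdot,\mu)\,\ell_\mu$, where $\mathcal V_\mu h(\xi)=\int_0^\xi C(\xi,s,\mu)h(s)\,ds$ and $\ell_\mu h=\int_0^1\phi_2(s,\mu)e^{\alpha_\mu s}h(s)\,ds$. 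Up to a nonvanishing factor, $D(\mu)=\beta_\mu\sinh(\sigma_\mu/2)/(\sigma_\mu/2)$ with $\sigma_\mu^2=\alpha_\mu^2-4\beta_\mu$, so $D$ has a simple zero at $\lambda=0$. There $\partial_\mu\mathcal K_\mu|_{\mu=\lambda}$ equals $D'(\lambda)\mathcal V_\lambda$ plus a finite-rank operator. Since $\mathcal V_\lambda$ is an injective Volterra operator, this derivative is not of finite rank.

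So your claim that every $\partial_\mu^k N(\lambda)$ is finite rank modulo harmless holomorphic terms is false. What is true, and what has to be proved, is that $N_0,\dots,N_{m-1}$ have finite rank; these are the only Taylor coefficients entering the singular part. This holds exactly because $D^{(i)}(\lambda)=0$ for $i<m$ kills the Volterra contribution.

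The same identity repairs the argument. Write $D(\mu)^{-1}\mathcal K_\mu=\mathcal V_\mu+D(\mu)^{-1}\phi_1(\cdot,\mu)\,\ell_\mu$. Here $\mathcal V_\mu$ is holomorphic near $\lambda$ in $\Lscr(L^2(0,1),H^2(0,1))$, because $C=e^{-\alpha(\xi-s)/2}\sinh(\sigma(\xi-s)/2)/(\sigma/2)$ is entire in $(\alpha,\beta)$. Carrying this through your lifting and \eqref{eq:yz-def} gives $(\mu I-A)^{-1}=H(\mu)+D(\mu)^{-1}\Theta(\mu)\Lambda_\mu$, where:
\begin{itemize}
\item $H$ is holomorphic in $\Lscr(X)$;
\item $\Theta(\mu)\in X$ is the image of $\phi_1(\cdot,\mu)$ under \eqref{eq:yz-def} with $g=0$;
\item $\Lambda_\mu\in\Lscr(X,\cline)$ is holomorphic.
\end{itemize}
The residue $P_\lambda$ then has range in the span of the first $m$ Taylor coefficients of $\Theta$ at $\lambda$, so $m_a(\lambda,A)=\operatorname{rank}P_\lambda\le m$.

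Repaired this way, your argument stays in $X$ and even bounds the algebraic multiplicity by the order of vanishing of $D$. The price is the holomorphic bookkeeping you describe, including the modified lifting near $-\kappa$. The paper instead argues softly: it conjugates by $T$ and passes to the realization $A_{re}$ on $V$, where the zeroth-order coupling $P_\lambda$ becomes relatively compact. It then shows $\lambda I-A_{re}$ is Fredholm of index zero, applies Baxley's theorem to the isolated eigenvalue, and transfers generalized eigenspaces back to $A$. That needs no explicit resolvent but yields only finiteness.
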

\end{framed}

\begin{proof}
Let $\omega_0=\kappa+\frac{\gamma}{\eta}$ and $w=\eta y+z$. Define the transformation $T\colon X\to X$ by
$$
T\bbm{y\\ z}=\bbm{\eta y+z\\ z} \FORALL \bbm{y\\ z}\in X.
$$
Note that $T$ is invertible and defines a bounded isomorphism on both $X$ and $V$. Define the transformed operator $A_{tr}=TAT^{-1}\colon D(A_{tr})\subset X\to X$. More precisely, the domain of $A_{tr}$ is
\begin{equation}\label{eq:Domain-AT}
\begin{aligned}
D(A_{tr})=\left\{\bbm{w\\ z}\in V \,\middle|\,
\begin{array}{l}
w\in H^2(0,1),\\[1mm]
w_\xi(0)=\dfrac{\nu}{\eta}\bigl(w(0)-z(0)\bigr),\\
w_\xi(1)=\dfrac{\nu}{\eta}\bigl(w(1)-z(1)\bigr)
\end{array}
\right\}
\end{aligned}
\end{equation}
with
\[
A_{tr}\bbm{w\\ z}
=
\bbm{
\eta w_{\xi\xi}-\nu w_\xi+\nu z_\xi+\dfrac{\gamma}{\eta}w-\omega_0 z\\[1mm]
\dfrac{\gamma}{\eta}w-\omega_0 z
} \FORALL \bbm{w\\ z}\in D(A_{tr}).
\]
We next consider the realization of $A_{tr}$ on the space $V=H^1(0,1)\times H^1(0,1)$. Define $A_{re}\colon D(A_{re})\subset V\to V$ by
\begin{equation}\label{eq:Domain-AV}
\begin{aligned}
D(A_{re})=\left\{\bbm{w\\ z}\in V \,\middle|\,
\begin{array}{l}
w\in H^2(0,1),\\[1mm]
\eta w_{\xi\xi}+\nu z_\xi\in H^1(0,1),\\[1mm]
w_\xi(0)=\dfrac{\nu}{\eta}\bigl(w(0)-z(0)\bigr),\\
w_\xi(1)=\dfrac{\nu}{\eta}\bigl(w(1)-z(1)\bigr)
\end{array}
\right\}
\end{aligned}
\end{equation}
with
\begin{equation}\label{eq:operator-AV}
A_{re}\bbm{w\\ z}
=
\bbm{
\eta w_{\xi\xi}-\nu w_\xi+\nu z_\xi+\dfrac{\gamma}{\eta}w-\omega_0 z\\[1mm]
\dfrac{\gamma}{\eta}w-\omega_0 z
} \FORALL \bbm{w\\ z}\in D(A_{re}).
\end{equation}
The operator $A_{re}$ is defined by the same differential expression as $A_{tr}$, but on the space $V$, with the additional regularity condition $\eta w_{\xi\xi}+\nu z_\xi\in H^1(0,1)$ ensuring that $A_{re}$ maps $D(A_{re})$ into $V$.

The proof proceeds in three steps. First, we show that $A$ and $A_{tr}$ have the same spectrum and point spectrum, while $A_{re}$ has the same point spectrum and satisfies $\sigma(A_{re})\subset\sigma(A)$. Using Theorems~\ref{thm:point-spectrum} and~\ref{thm: complete spectrum}, we then show that every eigenvalue of $A_{re}$ is isolated and has geometric multiplicity one. Second, we show that the generalized eigenspaces of $A_{re}$ and $A_{tr}$ coincide and use the similarity $A_{tr}=TAT^{-1}$ to conclude that algebraic multiplicity of an eigenvalue for $A_{re}$ transfers to $A$. Third, for each $\lambda\in\sigma_p(A_{re})$, we write $\lambda I-A_{re}$ as a relatively compact perturbation of a boundedly invertible operator. By \cite[Chapter~IV, Section~5, Theorem~5.26]{Kato}, $\lambda I-A_{re}$ is Fredholm, and hence has closed range. Therefore, by \cite[Theorem~4.1]{Baxley:1972}, every eigenvalue of $A_{re}$ has finite algebraic multiplicity, and we conclude that every eigenvalue of $A$ has finite algebraic multiplicity.

The similarity $A_{tr}=TAT^{-1}$ shows that $A$ and $A_{tr}$ have the same
spectrum and point spectrum,
\begin{equation}\label{eq:spectra-equal}
\sigma_p(A_{tr})=\sigma_p(A),\qquad \sigma(A_{tr})=\sigma(A),
\end{equation}
and that $\ker(\lambda I-A_{tr})=T\ker(\lambda I-A)$ for all $\lambda\in\mathbb{C}$. By \eqref{eq:Domain-AT} and \eqref{eq:Domain-AV}, we have
\begin{equation}\label{eq:Are-part}
D(A_{re})=\bigl\{\,x\in D(A_{tr}) : A_{tr}x\in V\,\bigr\},
\qquad A_{re}x=A_{tr}x \FORALL x\in D(A_{re}) .
\end{equation}
Indeed, for $\sbm{w\\ z}\in D(A_{tr})$ the second component of $A_{tr}\sbm{w\\ z}$ always
lies in $H^1(0,1)$, while the first lies in $H^1(0,1)$ if and only if
$\eta w_{\xi\xi}+\nu z_\xi\in H^1(0,1)$.

Since $A_{re}$ is the restriction of $A_{tr}$ to $D(A_{re})\subset D(A_{tr})$, we have
$\ker(\lambda I-A_{re})\subset\ker(\lambda I-A_{tr})$ for every $\lambda\in\mathbb{C}$.
Conversely, if $\sbm{w\\ z}\in\ker(\lambda I-A_{tr})$, then
$A_{tr}\sbm{w\\ z}=\lambda\sbm{w\\ z}\in V$ because $D(A_{tr})\subset V$, so
$\sbm{w\\ z}\in D(A_{re})$ by \eqref{eq:Are-part}. Hence
$\ker(\lambda I-A_{re})=\ker(\lambda I-A_{tr})$ for all $\lambda\in\mathbb{C}$, and in
particular
\[
\sigma_p(A)=\sigma_p(A_{tr})=\sigma_p(A_{re}).
\]

We now show that $A_{re}$ is closed. By \eqref{eq:spectra-equal} we have
$\rho(A_{tr})=\rho(A)$, and $\rho(A)\neq\varnothing$ by
Theorems~\ref{thm:point-spectrum} and~\ref{thm: complete spectrum}. Since an
operator with nonempty resolvent set is closed, $A_{tr}$ is closed. Let $\sbm{w_n\\ z_n}\in D(A_{re})$ with
$$
\lim_{n\to\infty}
\left\|
\bbm{w_n\\ z_n}-\bbm{w\\ z}
\right\|_{V}
=0,
\qquad
\lim_{n\to\infty}
\left\|
A_{re}\bbm{w_n\\ z_n}-\bbm{p\\ q}
\right\|_{V}
=0.
$$
As $V$ embeds continuously in $X$, both
limits also hold in $X$, and the closedness of $A_{tr}$ gives $\sbm{w\\ z}\in D(A_{tr})$
with $A_{tr}\sbm{w\\ z}=\sbm{p\\ q}\in V$. By \eqref{eq:Are-part} we get $\sbm{w\\ z}\in D(A_{re})$
and $A_{re}\sbm{w\\ z}=\sbm{p\\ q}$. Hence $A_{re}$ is closed.

Let $\mu\in\rho(A_{tr})$ and $\sbm{f\\ g}\in V$, and set
$\sbm{w\\ z}=(\mu I-A_{tr})^{-1}\sbm{f\\ g}\in D(A_{tr})$. Then
$A_{tr}\sbm{w\\ z}=\mu\sbm{w\\ z}-\sbm{f\\ g}\in V$, so $\sbm{w\\ z}\in D(A_{re})$ by
\eqref{eq:Are-part}. Moreover, $\mu I-A_{re}$ is injective, since
$(\mu I-A_{re})\sbm{w\\ z}=0$ implies $(\mu I-A_{tr})\sbm{w\\ z}=0$ and hence
$\sbm{w\\ z}=0$. Thus $\mu I-A_{re}\colon D(A_{re})\to V$ is bijective, and since $A_{re}$ is closed,
the bounded inverse theorem gives $(\mu I-A_{re})^{-1}\in\Lscr(V)$. Therefore
$\rho(A_{tr})\subset\rho(A_{re})$, i.e.
\[
\sigma(A_{re})\subset\sigma(A_{tr})=\sigma(A)=\overline{\sigma_p(A)}.
\]
By Theorem~\ref{thm:point-spectrum} the unique accumulation point of
$\sigma_p(A)$ is $-\omega_0$, which is not an eigenvalue. Hence every
$\lambda\in\sigma_p(A_{re})=\sigma_p(A)$ is an isolated point of
$\overline{\sigma_p(A)}$, and therefore of $\sigma(A_{re})$.
Moreover, $\ker(\lambda I-A_{re})=\ker(\lambda I-A_{tr})=T\ker(\lambda I-A)$ and the
invertibility of $T$ give $\dim\ker(\lambda I-A_{re})=\dim\ker(\lambda I-A)=1$ by
Theorem~\ref{thm:point-spectrum}, so each $\lambda\in\sigma_p(A_{re})$ has
geometric multiplicity one.

We now compare the algebraic multiplicities of $A_{re}$ and $A$. Fix
$\lambda\in\sigma_p(A_{re})=\sigma_p(A)$. We prove
\begin{equation}\label{eq:algebraic-multiplicity-A-Are}
m_a(\lambda,A)=m_a(\lambda,A_{re}).
\end{equation}
We claim that
\begin{equation}\label{eq:generalized-eigenspaces-re-tr}
\ker\bigl((\lambda I-A_{tr})^m\bigr)=\ker\bigl((\lambda I-A_{re})^m\bigr)\qquad \forall m\in\mathbb{N}.
\end{equation}
Since $A_{re}$ is a restriction of $A_{tr}$, the inclusion
$\ker\bigl((\lambda I-A_{re})^m\bigr)\subseteq\ker\bigl((\lambda I-A_{tr})^m\bigr)$ is clear, and we prove the reverse
by induction on $m$. The case $m=1$ is the eigenspace equality
$\ker(\lambda I-A_{re})=\ker(\lambda I-A_{tr})$ shown above. Assume the equality for some
$m\geq1$ and let $p\in\ker\bigl((\lambda I-A_{tr})^{m+1}\bigr)$. Set $q=(\lambda I-A_{tr})p$. Since
$(\lambda I-A_{tr})^m q=(\lambda I-A_{tr})^{m+1}p=0$, we have
$q\in\ker\bigl((\lambda I-A_{tr})^m\bigr)=\ker\bigl((\lambda I-A_{re})^m\bigr)\subset V$. As $p\in D(A_{tr})\subset V$,
it follows that $A_{tr}p=\lambda p-q\in V$, so \eqref{eq:Are-part} gives $p\in D(A_{re})$ with
$(\lambda I-A_{re})p=(\lambda I-A_{tr})p=q$. Hence
$(\lambda I-A_{re})^{m+1}p=(\lambda I-A_{re})^m q=0$, that is, $p\in\ker\bigl((\lambda I-A_{re})^{m+1}\bigr)$.
This proves \eqref{eq:generalized-eigenspaces-re-tr}.

Taking the union over $m$ in \eqref{eq:generalized-eigenspaces-re-tr}, the generalized
eigenspaces of $A_{re}$ and $A_{tr}$ at $\lambda$ coincide, so
$m_a(\lambda,A_{tr})=m_a(\lambda,A_{re})$. Moreover, $A_{tr}=TAT^{-1}$ gives
$\ker\bigl((\lambda I-A_{tr})^m\bigr)=T\ker\bigl((\lambda I-A)^m\bigr)$ for every $m$, and since $T$ is an
isomorphism, $m_a(\lambda,A)=m_a(\lambda,A_{tr})$. Combining these two equalities yields
\eqref{eq:algebraic-multiplicity-A-Are}. Hence, if $\lambda$ has finite algebraic multiplicity for $A_{re}$,
then $\lambda$ has finite algebraic multiplicity for $A$.

It remains to show that every $\lambda\in\sigma_p(A_{re})$ has finite algebraic
multiplicity for $A_{re}$. Choose $\mu_0\in[\mu_1,\infty)$ as in
Lemma~\ref{lem:SL-solvability}, and fix
$\lambda\in\sigma_p(A_{re})=\sigma_p(A)$. Since
$-\omega_0\notin\sigma_p(A)$, we have $\lambda+\omega_0\neq0$. In what follows,
$C_{\lambda,\mu_0}$ denotes a positive constant depending only on $\lambda$,
$\mu_0$, and the model parameters, which may change from one occurrence to the
next. Define $A_{dl}\colon D(A_{re})\subset V\to V$ and $P_\lambda\in\Lscr(V)$ by
\[
A_{dl}\bbm{w\\ z}
=
\bbm{
\eta w_{\xi\xi}-\nu w_\xi+\nu z_\xi
+\left(\frac{\gamma}{\eta}+\lambda-\mu_0\right)w-\omega_0 z\\[1mm]
-\omega_0 z
}
\qquad\forall\,\bbm{w\\ z}\in D(A_{re}),
\]
\[
P_\lambda\bbm{w\\ z}
=
\bbm{
(\mu_0-\lambda)w\\[1mm]
\frac{\gamma}{\eta}w
}
\qquad\forall\,\bbm{w\\ z}\in V.
\]
Then $A_{re}=A_{dl}+P_\lambda$ on $D(A_{re})$, so
\[
\lambda I-A_{re}=(\lambda I-A_{dl})-P_\lambda.
\]
Given $\sbm{f\\ g}\in V$, we solve
$(\lambda I-A_{dl})\sbm{w\\ z}=\sbm{f\\ g}$. The second component gives
\begin{equation}\label{eq:z-shifted}
z=\frac{g}{\lambda+\omega_0}\in H^1(0,1).
\end{equation}
Substituting \eqref{eq:z-shifted} into the first
component gives, for a.e. $\xi\in(0,1)$,
\begin{equation}\label{eq:w-eq-shifted}
-\eta w_{\xi\xi}(\xi)+\nu w_\xi(\xi)
+\left(\mu_0-\frac{\gamma}{\eta}\right)w(\xi)
=
f(\xi)+\frac{\nu}{\lambda+\omega_0}g_\xi(\xi)
-\frac{\omega_0}{\lambda+\omega_0}g(\xi),
\end{equation}
together with the boundary conditions
\begin{equation}\label{eq:w-bc-shifted}
w_\xi(0)-\frac{\nu}{\eta}w(0)
=
-\frac{\nu}{\eta(\lambda+\omega_0)}g(0),
\qquad
w_\xi(1)-\frac{\nu}{\eta}w(1)
=
-\frac{\nu}{\eta(\lambda+\omega_0)}g(1).
\end{equation}
Define the affine function
\begin{equation}\label{eq:q-lambda}
q_\lambda(\xi)=c_\lambda\xi+d_\lambda,\qquad
c_\lambda=\frac{g(1)-g(0)}{\lambda+\omega_0},\qquad
d_\lambda=\frac{1}{\lambda+\omega_0}
\left[
\frac{\eta}{\nu}g(1)
+
\left(1-\frac{\eta}{\nu}\right)g(0)
\right].
\end{equation}
A direct calculation shows that $q_\lambda$ satisfies the two boundary
relations in \eqref{eq:w-bc-shifted} in place of $w$. Hence, setting
$w=v+q_\lambda$, the inhomogeneous boundary terms cancel and the problem
\eqref{eq:w-eq-shifted}--\eqref{eq:w-bc-shifted} becomes
\begin{equation}\label{eq:v-shifted-eq}
-\eta v_{\xi\xi}(\xi)+\nu v_\xi(\xi)
+\left(\mu_0-\frac{\gamma}{\eta}\right)v(\xi)
=
\widetilde h_\lambda(\xi)
\qquad \text{for a.e. } \xi\in(0,1),
\end{equation}
with the homogeneous boundary conditions
\begin{equation}\label{eq:v-shifted-bc}
v_\xi(0)-\frac{\nu}{\eta}v(0)=0,
\qquad
v_\xi(1)-\frac{\nu}{\eta}v(1)=0,
\end{equation}
where $\widetilde h_\lambda\in L^2(0,1)$ is given by
\begin{equation}\label{eq:h-tilde}
\widetilde h_\lambda(\xi)
=
f(\xi)+\frac{\nu}{\lambda+\omega_0}g_\xi(\xi)
-\frac{\omega_0}{\lambda+\omega_0}g(\xi)
-\nu c_\lambda
-\left(\mu_0-\frac{\gamma}{\eta}\right)(c_\lambda\xi+d_\lambda)
\end{equation}
for a.e. $\xi\in (0,1)$. 

We now show that \eqref{eq:v-shifted-eq}--\eqref{eq:v-shifted-bc} has a unique
solution $v\in H^2(0,1)$ and estimate it in terms of $\sbm{f\\ g}$. The change
of variable $v=e^{\nu\xi/(2\eta)}u$ transforms
\eqref{eq:v-shifted-eq}--\eqref{eq:v-shifted-bc} into
\eqref{eq:u-ode}--\eqref{eq:u-bc} with $h=e^{-\nu\xi/(2\eta)}\widetilde h_\lambda$,
which by Lemma~\ref{lem:SL-solvability} has a unique solution $u\in H^2(0,1)$
with $\|u\|_{H^2(0,1)}\le C_{\mu_0}\|e^{-\nu\xi/(2\eta)}\widetilde h_\lambda\|_{L^2(0,1)}$.
Since $e^{\pm\nu\xi/(2\eta)}$ and their derivatives are bounded on $[0,1]$,
the function $v=e^{\nu\xi/(2\eta)}u\in H^2(0,1)$ is the unique solution of
\eqref{eq:v-shifted-eq}--\eqref{eq:v-shifted-bc} and
\begin{equation}\label{eq:v-est}
\|v\|_{H^2(0,1)}\le C_{\lambda,\mu_0}\|\widetilde h_\lambda\|_{L^2(0,1)}.
\end{equation}
Next, since $g\in H^1(0,1)\hookrightarrow C[0,1]$, we have
$|g(0)|+|g(1)|\le C\|g\|_{H^1(0,1)}$. As $c_\lambda$ and $d_\lambda$ in
\eqref{eq:q-lambda} depend linearly on $g(0)$ and $g(1)$, and every term in
\eqref{eq:h-tilde} is bounded in $L^2(0,1)$ by $\|f\|_{L^2(0,1)}$,
$\|g\|_{H^1(0,1)}$, or $|g(0)|+|g(1)|$, it follows that
\begin{equation}\label{eq:qh-est}
\|q_\lambda\|_{H^2(0,1)}+\|\widetilde h_\lambda\|_{L^2(0,1)}
\le C_{\lambda,\mu_0}\bigl(\|f\|_{L^2(0,1)}+\|g\|_{H^1(0,1)}\bigr).
\end{equation}
Finally, combining $w=v+q_\lambda$ with \eqref{eq:v-est} and \eqref{eq:qh-est}
gives $w\in H^2(0,1)$ with
$\|w\|_{H^2(0,1)}\le C_{\lambda,\mu_0}(\|f\|_{L^2(0,1)}+\|g\|_{H^1(0,1)})$,
while \eqref{eq:z-shifted} gives
$\|z\|_{H^1(0,1)}\le C_{\lambda,\mu_0}\|g\|_{H^1(0,1)}$. Since
$\|f\|_{L^2(0,1)}\le\|f\|_{H^1(0,1)}$, we conclude that
\begin{equation}\label{eq:resolvent-est-shifted}
\left\|\bbm{w\\ z}\right\|_{H^2(0,1)\times H^1(0,1)}
\le
C_{\lambda,\mu_0}\left\|\bbm{f\\ g}\right\|_V
\qquad\forall\,\bbm{f\\ g}\in V.
\end{equation}

We now verify that the pair $\sbm{w\\ z}$, with $w=v+q_\lambda$ and $z$ given by
\eqref{eq:z-shifted}, lies in $D(A_{re})$. Since $v\in H^2(0,1)$ and $q_\lambda$
is affine, we have $w\in H^2(0,1)$, and since $g\in H^1(0,1)$, we have
$z\in H^1(0,1)$. Using $z(j)=g(j)/(\lambda+\omega_0)$ for $j=0,1$, the boundary
conditions \eqref{eq:w-bc-shifted} take the form
$w_\xi(j)=\frac{\nu}{\eta}\bigl(w(j)-z(j)\bigr)$. Finally, substituting
\eqref{eq:z-shifted} into \eqref{eq:w-eq-shifted} and rearranging gives
\[
\eta w_{\xi\xi}+\nu z_\xi=\mu_0 w+\nu w_\xi-\frac{\gamma}{\eta}w+\omega_0 z-f .
\]
The right-hand side lies in $H^1(0,1)$ because $f,g\in H^1(0,1)$,
$w\in H^2(0,1)$, and $z\in H^1(0,1)$. Hence
$\eta w_{\xi\xi}+\nu z_\xi\in H^1(0,1)$, and all the conditions in
\eqref{eq:Domain-AV} hold, so $\sbm{w\\ z}\in D(A_{re})$.

Thus, for every $\sbm{f\\ g}\in V$, the equation
$(\lambda I-A_{dl})\sbm{w\\ z}=\sbm{f\\ g}$ has a solution
$\sbm{w\\ z}\in D(A_{re})$, and the solution is unique because $z$ is determined
by \eqref{eq:z-shifted} and $v$ is the unique solution of
\eqref{eq:v-shifted-eq}--\eqref{eq:v-shifted-bc}. Therefore
$\lambda I-A_{dl}\colon D(A_{re})\subset V\to V$ is bijective. Moreover,
$A_{dl}\sbm{w\\ z}=\lambda\sbm{w\\ z}-\sbm{f\\ g}$, so \eqref{eq:resolvent-est-shifted}
yields
$$
\left\|\bbm{w\\ z}\right\|_V+\left\|A_{dl}\bbm{w\\ z}\right\|_V
\le C_{\lambda,\mu_0}\left\|\bbm{f\\ g}\right\|_V,
$$
that is, $(\lambda I-A_{dl})^{-1}\in\Lscr\bigl(V,D(A_{re})\bigr)$ when
$D(A_{re})$ carries the graph norm of $A_{dl}$.

Since $A_{re}=A_{dl}+P_\lambda$ with $P_\lambda\in\Lscr(V)$, the operators
$A_{dl}$ and $A_{re}$ have the same domain $D(A_{re})$ and equivalent graph
norms on it. Hence $(\lambda I-A_{dl})^{-1}\in\Lscr\bigl(V,D(A_{re})\bigr)$ also
when $D(A_{re})$ is endowed with the graph norm of $A_{re}$.

We next show that $-P_\lambda$ is $(\lambda I-A_{dl})$-compact. Let $\sbm{u_n\\ v_n}\in D(A_{re})$ be such that $\sbm{u_n\\ v_n}$ and $(\lambda I-A_{dl})\sbm{u_n\\ v_n}$ are bounded in $V$. By
\eqref{eq:resolvent-est-shifted}, $\sbm{u_n\\ v_n}$ is then bounded in $H^2(0,1)\times H^1(0,1)$, so $(u_n)_{n\in\nline}$ is bounded in $H^2(0,1)$. Since the embedding $H^2(0,1)\hookrightarrow H^1(0,1)$ is compact, a subsequence of $(u_n)_{n\in\nline}$ converges in $H^1(0,1)$.
As
\[
(-P_\lambda)\bbm{u_n\\ v_n}
=
\bbm{
(\lambda-\mu_0)u_n\\[1mm]
-\dfrac{\gamma}{\eta}u_n
},
\]
the corresponding subsequence of $\bigl((-P_\lambda)\sbm{u_n\\ v_n}\bigr)_{n\in\nline}$ converges in $V$.
Thus $-P_\lambda$ is $(\lambda I-A_{dl})$-compact \cite[Chapter~IV, Section~1]{Kato}.

Since $A_{re}$ is closed and $P_\lambda\in\Lscr(V)$, the identity $A_{dl}=A_{re}-P_\lambda$ on $D(A_{re})$ implies that $A_{dl}$, and hence $\lambda I-A_{dl}$, is closed. We have already shown that $\lambda I-A_{dl}\colon D(A_{re})\subset V\to V$ is bijective. Therefore, $\lambda I-A_{dl}$ is boundedly invertible and, in particular, Fredholm with index zero. Since $-P_\lambda$ is $(\lambda I-A_{dl})$-compact, by
\cite[Chapter~IV, Section~5, Theorem~5.26]{Kato},
\[
\lambda I-A_{re}
=
(\lambda I-A_{dl})+(-P_\lambda)
\]
is semi-Fredholm and
\[
\operatorname{ind}(\lambda I-A_{re})
=
\operatorname{ind}(\lambda I-A_{dl})
=
0.
\]
Since $\lambda I-A_{re}$ is semi-Fredholm with finite index zero, it is Fredholm. In particular, $\operatorname{Ran}(\lambda I-A_{re})$ is closed in $V$. 

Since $A_{re}$ is closed and densely defined in $V$ by Lemma~\ref{lem:density},
and since $\lambda$ is isolated in $\sigma(A_{re})$ with
$\dim\ker(\lambda I-A_{re})=1$, the Fredholm property of $\lambda I-A_{re}$
gives the remaining closed-range condition required in
\cite[Theorem~4.1]{Baxley:1972}. Hence $\lambda$ has finite algebraic
multiplicity for $A_{re}$. By \eqref{eq:algebraic-multiplicity-A-Are}, the same
holds for $A$. Since $\lambda\in\sigma_p(A)$ was arbitrary, the proof is
complete.
\end{proof}

\begin{remark}\label{rem:omega-ge-omega0}
Suppose that $\omega\geq\omega_0$, and note that $B\in\Lscr(\cline,X)$ is of finite rank,
hence compact. If $\omega>\omega_0$, then Theorem~\ref{thm:point-spectrum} shows that
$A+\omega I$ has infinitely many eigenvalues in $\cline_0^{+}$, so by
\cite[Theorem~5.2.6]{CuZw:1995} there is no $K_\omega\in\Lscr(X,\cline)$ for which
$A+\omega I+BK_\omega$ is exponentially stable.

If $\omega=\omega_0$, then by Theorems~\ref{thm:point-spectrum} and~\ref{thm: complete spectrum} the point $0$ belongs to $\sigma(A+\omega_0 I)$ but not to $\sigma_p(A+\omega_0 I)$, and lies on the imaginary axis, so the previous argument does not apply. Suppose, for
contradiction, that $A+\omega_0 I+BK_{\omega_0}$ is exponentially stable for some
$K_{\omega_0}\in\Lscr(X,\cline)$. Then $(A+\omega_0 I,B)$ is exponentially stabilizable,
and \cite[Theorem~5.2.3]{CuZw:1995} gives a constant $\delta<0$ such that
$\sigma(A+\omega_0 I)\cap\cline_\delta^{+}$ consists only of eigenvalues. Since
$0\in\cline_\delta^{+}$, the point $0$ would then be an eigenvalue of $A+\omega_0 I$, a
contradiction. Therefore Problem~\ref{prob:w-stab} has no solution for $\omega\geq\omega_0$.
\end{remark}


\subsection{\textbf{$A$ generates an analytic semigroup}}\label{sec:analytic-semigroup}
Recall $X=L^2(0,1)\times H^1(0,1)$ and $V=H^1(0,1)\times H^1(0,1)$. Consider the sesquilinear form $a:V\times V\to \mathbb{C}$ defined as follows
\begin{align}\label{eq:sesquilinear form}
a\left(\bbm{y\\z},\bbm{p\\q}\right)
=\langle \eta y_\xi+z_\xi-\nu y,\;p_\xi \rangle_{L^2(0,1)}
-\langle \gamma y-\kappa z,\;q\rangle_{H^1(0,1)}
\end{align}
for all $\bbm{y & z}^{\top},\bbm{p &q}^{\top}\in V$. It is easy to check that $a$ is continuous
\begin{align}\label{eq:continous estimate}
|a(v_1,v_2)|\leq \left(1+\eta+\gamma+\nu+\kappa\right)\|v_1\|_{V}\|v_2\|_{V}\FORALL v_1,v_2\in V.
\end{align}
A straightforward calculation yields G\aa rding's inequality
\begin{equation}\label{eq:gardings-equality}
\Re\, a(v,v)
+\left(\frac{\nu^2}{\eta}+\frac{\gamma^2}{2\kappa}+\frac{(1-\gamma)^2}{\eta}+\frac{\eta+\kappa}{2}\right)\|v\|_X^2 \geq
\min\left\{\frac{\eta}{4},\frac{\kappa}{2}\right\}\|v\|_V^2
\end{equation}
for all $v\in V$. The operator $\hat{A}$ associated with the sesquilinear form $a$ is defined as follows:
$$
D(\hat{A})=\left\{u \in V \mid a(u, v)=\langle w, v\rangle_X \text { for some } w \in X \text { and all } v \in V\right\}
$$
and for $u \in D(\hat{A})$ and $w$ such that $a(u, v)=\langle w, v\rangle_X$ for all $v \in V$ we let $\hat{A} u=-w$. Because the sesquilinear form $a$ is continuous and satisfies G\aa rding's inequality, it follows that $\hat{A}$ generates an analytic semigroup on $X$. In the next theorem, we show that $\hat{A}$ is the same as $A$ and that it generates an analytic semigroup on $X$.
\begin{framed}
\begin{theorem}\label{thm:analytic-semigroup}
The state operator $A$ in \eqref{eq:Domain-A}-\eqref{eq:operator-A} is the operator associated with the sesquilinear form $a$ in \eqref{eq:sesquilinear form} and it generates an analytic semigroup $\mathbb{T}$ on $X$.
\end{theorem}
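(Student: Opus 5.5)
We will show $\hat{A}=A$ by proving the two inclusions $A\subset\hat{A}$ and $\hat{A}\subset A$. Analytic semigroup generation then follows from the fact already noted before the statement: $a$ is continuous by \eqref{eq:continous estimate} and satisfies G\aa rding's inequality \eqref{eq:gardings-equality}, so $\hat{A}$ generates an analytic semigroup on $X$.

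\emph{Step 1 ($A\subset\hat{A}$).} Take $\sbm{y\\ z}\in D(A)$ and $\sbm{p\\ q}\in V$, and set $w=\eta y+z\in H^2(0,1)$. In the first term of $a$, integrate by parts:
\[
\langle w_\xi-\nu y,p_\xi\rangle_{L^2} = \bigl[(w_\xi-\nu y)\bar p\bigr]_0^1-\langle (w_\xi-\nu y)_\xi,p\rangle_{L^2}.
\]
The boundary term vanishes by the boundary conditions in \eqref{eq:Domain-A}. The function $w_\xi-\nu y$ lies in $H^1$ since $y\in H^1$. Hence the first term equals $-\langle w_{\xi\xi}-\nu y_\xi,p\rangle_{L^2}$. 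The second term of $a$ is $-\langle\gamma y-\kappa z,q\rangle_{H^1}$, and $\gamma y-\kappa z\in H^1$. Therefore
\[
a\left(\sbm{y\\ z},\sbm{p\\ q}\right)=-\left\langle A\sbm{y\\ z},\sbm{p\\ q}\right\rangle_X ,
\]
which gives $\sbm{y\\ z}\in D(\hat A)$ and $\hat A\sbm{y\\ z}=A\sbm{y\\ z}$.

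\emph{Step 2 ($\hat{A}\subset A$).} Take $u=\sbm{y\\ z}\in D(\hat A)$, so $a(u,v)=\langle \sbm{f\\ g},v\rangle_X$ for some $\sbm{f\\ g}\in X$ and all $v\in V$.
\begin{itemize}
\item Choosing $v=\sbm{0\\ q}$ gives $-\langle\gamma y-\kappa z,q\rangle_{H^1}=\langle g,q\rangle_{H^1}$ for all $q\in H^1$, hence $g=-(\gamma y-\kappa z)$.
\item Choosing $v=\sbm{p\\ 0}$ gives $\langle w_\xi-\nu y,p_\xi\rangle_{L^2}=\langle f,p\rangle_{L^2}$ for all $p\in H^1(0,1)$.
\item Taking first $p\in C_c^\infty(0,1)$ shows that the weak derivative of $w_\xi-\nu y$ equals $-f\in L^2$. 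So $w_\xi-\nu y\in H^1$, and since $\nu y\in H^1$, also $w_\xi\in H^1$, i.e. $w=\eta y+z\in H^2(0,1)$.
\item Integrating by parts again with general $p\in H^1$ leaves the boundary term $\bigl[(w_\xi-\nu y)\bar p\bigr]_0^1=0$. Choosing $p$ with $p(0)=1,p(1)=0$ and vice versa yields $w_\xi(j)=\nu y(j)$ for $j=0,1$.
\end{itemize}
Thus $u\in D(A)$, and $\hat A u=-\sbm{f\\ g}=A u$.

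The only delicate point is the second component, where the pairing is the $H^1$ inner product rather than $L^2$. One must check that $\gamma y-\kappa z$ is genuinely in $H^1$ (true since $y,z\in H^1$ for elements of $V$), so that the identification via the Riesz representation in $H^1$ is legitimate. With this, no integration by parts is needed in the $z$-component. The regularity bootstrap in Step~2, recovering $H^2$ regularity only for the combination $\eta y+z$ and not for $y$ separately, is the main step to handle carefully. Otherwise the argument is routine, and generation follows from the standard form-method theorem for continuous $V$-elliptic (after shift) forms with $V$ densely and continuously embedded in $X$. Density of $V$ in $X$ is clear, since $H^1$ is dense in $L^2$.
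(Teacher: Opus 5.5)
Your proof is correct. Step 1 matches the paper, but Step 2 establishes $D(\hat A)\subset D(A)$ by a different route. The paper never analyzes the variational identity for $u\in D(\hat A)$. Instead it fixes a real $\lambda$ large enough that $\lambda\in\rho(\hat A)\cap\rho(A)$. It then writes $(\lambda I-\hat A)D(\hat A)=X=(\lambda I-A)D(A)=(\lambda I-\hat A)D(A)$ and applies $(\lambda I-\hat A)^{-1}$ to conclude $D(\hat A)=D(A)$.

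That argument is shorter, but it needs large positive reals to lie in $\rho(A)$, i.e.\ $\lambda I-A$ must be surjective. In the paper this comes from Theorems~\ref{thm:point-spectrum} and~\ref{thm: complete spectrum}: all eigenvalues have nonpositive real part, and $\sigma(A)=\overline{\sigma_p(A)}$. Those results in turn rest on the Fredholm argument of Proposition~\ref{prop:robin-bvp-wellposedness}.

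Your regularity bootstrap avoids all of that. You test with $\sbm{0\\ q}$ to identify the second component. You test with $p\in C_c^\infty(0,1)$ to get $(\eta y+z)_\xi-\nu y\in H^1(0,1)$, hence $\eta y+z\in H^2(0,1)$. You then test with general $p\in H^1(0,1)$ to recover $(\eta y+z)_\xi(j)=\nu y(j)$ as natural boundary conditions. This is self-contained and independent of the spectral analysis, and it shows why $D(A)$ has exactly the form it does.

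Combined with the form-method generation result, your argument also gives directly that $\rho(A)$ contains a half-line $(\lambda_0,\infty)$. That fact could replace the appeal to the spectral theorems wherever the paper only needs $\rho(A)\neq\varnothing$, for instance in the closedness step of Theorem~\ref{thm:finite-multi}.
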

\end{framed}

\begin{proof}
Let $\bbm{y & z}^{\top} \in D(A)$. Then for all $\bbm{p &  q}^{\top} \in V$ we have
\begin{align*}
\left\langle -A\bbm{y\\ z},\bbm{p\\ q}\right\rangle_X
&= -\big\langle (\eta y+z)_{\xi\xi} - \nu y_\xi,\,p\big\rangle_{L^2(0,1)}
   - \big\langle -\kappa z + \gamma y,\,q\big\rangle_{H^1(0,1)} \\
&= \big\langle \eta y_\xi + z_\xi - \nu y,\,p_\xi\big\rangle_{L^2(0,1)}
   - \big\langle \gamma y - \kappa z,\,q\big\rangle_{H^1(0,1)} \\
&= a\left(\bbm{y\\ z},\bbm{p\\ q}\right).
\end{align*}
In the above calculations we used integration by parts and the boundary conditions $(\eta y+z)_\xi(0)=\nu y(0)$ and $(\eta y+z)_\xi(1)=\nu y(1)$. Hence the boundary term vanishes. It now follows from the definition of $\hat{A}$ that $D(A)\subset D(\hat{A})$ and $\hat{A}u=Au$ for all $u\in D(A)$.

Fix $\lambda>0$ large enough such that $\lambda\in\rho(\hat{A})\cap\rho(A)$. Then,
$$
(\lambda I-\hat{A}) D(\hat{A})
= X
= (\lambda I-A) D(A)
= (\lambda I-\hat{A}) D(A).
$$
Applying $(\lambda I-\hat{A})^{-1} \in \mathcal{L}(X)$ to the first and last terms above we get $D(A)=D(\hat{A})$.
We have already shown that $A u=\hat{A} u$ for all $u \in D(A)$. Therefore $A$ is the same as the operator $\hat{A}$
associated with the sesquilinear form $a$ and it is the generator of the analytic semigroup $\mathbb{T}$ on $X$.
\end{proof}


\subsection{\textbf{$\omega$-stabilizability of $(A,B)$ for all $\omega<\omega_0$}}\label{sec:stab-A-B}
\ \ \ Recall the state operator $A$ defined in \eqref{eq:Domain-A}-\eqref{eq:operator-A}, the constants $\eta,\gamma,\kappa,\nu>0$, and the control operator $B$ in \eqref{eq:operator-B}. Set $\omega_0=\kappa+\frac{\gamma}{\eta}$. We call the pair $(A,B)$ $\omega$-stabilizable if there exists $K\in\mathcal{L}(X,\mathbb{C})$ such that the closed-loop semigroup $\mathbb{T}^{cl}$ generated by $A+BK$ satisfies \eqref{eq:problem} for some $M,\epsilon>0$. To apply the Hautus test, we first determine the adjoint operator $A^*$. We then prove in Theorem~\ref{thm:omega-stab} that $(A,B)$ is $\omega$-stabilizable for every $0<\omega<\omega_0$, provided $B$ satisfies a suitable non-orthogonality condition.

Consider the sesquilinear form $a^* : V \times V \to \mathbb{C}$
\begin{equation}\label{eq:adjoint-form}
a^*\!\left(
\begin{bmatrix} y \\ z \end{bmatrix},
\begin{bmatrix} p \\ q \end{bmatrix}
\right)
=
\left\langle y_\xi,\; \eta p_\xi + q_\xi - \nu p \right\rangle_{L^2(0,1)}
-
\left\langle z,\; \gamma p - \kappa q \right\rangle_{H^1(0,1)}
\quad \forall
\begin{bmatrix} y \\ z \end{bmatrix},
\begin{bmatrix} p \\ q \end{bmatrix} \in V.
\end{equation}
Since $a^*(u,v)=\overline{a(v,u)}$ for all $u,v\in V$, the continuity estimate \eqref{eq:continous estimate} and the G\aa rding inequality \eqref{eq:gardings-equality} derived for $a$ apply for $a^*$ as well. 
\begin{equation}\label{eq:adjoint-continuity}
\bigl| a^*(v_1, v_2) \bigr|
\le
(1 + \eta + \gamma + \nu + \kappa)\, \|v_1\|_V \, \|v_2\|_V
\FORALL v_1, v_2 \in V,
\end{equation}
\begin{equation}\label{eq:gardings-adjoint}
\Re a^{*}(v,v)
+\left(\frac{\nu^2}{\eta}+\frac{\gamma^2}{2\kappa}+\frac{(1-\gamma)^2}{\eta}+\frac{\eta+\kappa}{2}\right)\|v\|_X^2 \geq 
\min\left\{\frac{\eta}{4},\frac{\kappa}{2}\right\}\|v\|_V^2
\end{equation}
for all $v\in V$.
\begin{framed}
\begin{lemma}\label{lem:adjoint-A}
Define the operator $A^*:D(A^*)\subset X\to X$ by
\begin{equation}\label{eq:Domain-Astar}
\begin{aligned}
D(A^*)=\left\{\bbm{y\\ z}\in V \,\middle|\,
\begin{array}{l}
\eta y-\gamma z \in H^{2}(0,1),\\[1mm]
(\eta y-\gamma z)_\xi(0)=0,\\
(\eta y-\gamma z)_\xi(1)=0
\end{array}
\right\},
\end{aligned}
\end{equation}
and
\begin{equation}\label{eq:A-star-def}
A^*\bbm{y\\ z}
=
\bbm{
(\eta y - \gamma z)_{\xi\xi} + \nu y_\xi + \gamma z\\
\psi_{y,z}
}
\FORALL  \bbm{y\\ z} \in D(A^*).
\end{equation}
Here $\psi_{y,z}\in H^1(0,1)$ is the unique solution of
\begin{equation}\label{eq:adjoint-weak-comp}
\left\langle \psi_{y,z}, \phi \right\rangle_{H^1(0,1)}
=
-\kappa \langle z, \phi \rangle_{L^2(0,1)}
-
\left\langle y_\xi + \kappa z_\xi, \phi_\xi \right\rangle_{L^2(0,1)} \quad \forall \phi \in H^1(0,1).
\end{equation}
The operator $A^*$ is the adjoint of the state operator $A$ defined in \eqref{eq:Domain-A}-\eqref{eq:operator-A}.
\end{lemma}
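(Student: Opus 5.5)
The plan is to use the form approach, exactly as in Theorem~\ref{thm:analytic-semigroup}. Let $\hat A^*$ denote the operator associated with the adjoint form $a^*$ in \eqref{eq:adjoint-form}: $u\in D(\hat A^*)$ if $a^*(u,v)=\langle w,v\rangle_X$ for some $w\in X$ and all $v\in V$, and then $\hat A^*u=-w$. Since $a^*(u,v)=\overline{a(v,u)}$ and $A$ is the operator associated with $a$ (Theorem~\ref{thm:analytic-semigroup}), the standard result for continuous, $X$-elliptic (G\aa rding) forms gives $\hat A^*=A^*$. It therefore suffices to show that the operator defined in \eqref{eq:Domain-Astar}--\eqref{eq:A-star-def} coincides with $\hat A^*$. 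I will do this in two inclusions.

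\textbf{Step 1 (the defined operator is contained in $\hat A^*$).} Take $\sbm{y\\ z}\in D(A^*)$ as in \eqref{eq:Domain-Astar}. First, $\psi_{y,z}$ is well defined: the right side of \eqref{eq:adjoint-weak-comp} is a bounded antilinear functional on $H^1(0,1)$, so Riesz representation applies. Next, for $\sbm{p\\ q}\in V$ I compute $-\langle A^*\sbm{y\\ z},\sbm{p\\ q}\rangle_X$. For the $L^2$ part, I integrate $\langle(\eta y-\gamma z)_{\xi\xi},p\rangle$ by parts, using the Neumann conditions $(\eta y-\gamma z)_\xi(0)=(\eta y-\gamma z)_\xi(1)=0$ to kill boundary terms. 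This gives $\langle \eta y_\xi-\gamma z_\xi,p_\xi\rangle-\langle \nu y_\xi+\gamma z,p\rangle$. For the $H^1$ part, I use \eqref{eq:adjoint-weak-comp} with $\phi=q$. I then expand $a^*$: its first term is $\langle y_\xi,\eta p_\xi+q_\xi-\nu p\rangle$, and $\langle z,\gamma p-\kappa q\rangle_{H^1}$ splits into $L^2$ and derivative parts. Matching terms, the $q$-terms, namely $\langle y_\xi,q_\xi\rangle$ and $\kappa\langle z,q\rangle_{H^1}$, are exactly supplied by $\psi_{y,z}$. The $p$-terms match after noting that $-\gamma\langle z,p\rangle_{H^1}=-\gamma\langle z,p\rangle-\gamma\langle z_\xi,p_\xi\rangle$. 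I expect a sign or conjugation bookkeeping check here, and if the signs do not match as written, the $\psi$ equation is the place to adjust. The conclusion is $D(A^*)\subset D(\hat A^*)$ with agreement of the operators.

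\textbf{Step 2 (reverse inclusion).} Here I copy the resolvent trick from Theorem~\ref{thm:analytic-semigroup}. Fix a large real $\lambda$ such that $\lambda\in\rho(\hat A^*)$, which holds by \eqref{eq:gardings-adjoint}. I must also show that $\lambda I-A^*$ (the defined operator) is surjective onto $X$. Given $\sbm{f\\ g}\in X$, the second row $(\lambda-\psi)$ is nonlocal. To handle it, I will instead show $\ker(\lambda I - \hat A^*)=0$, which already holds, together with surjectivity of the defined operator. This is the main obstacle.

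To get surjectivity, I use Lax--Milgram directly. Let $u\in D(\hat A^*)$ solve $\lambda\langle u,v\rangle_X+a^*(u,v)=\langle F,v\rangle_X$. I then show $u$ satisfies the regularity and boundary conditions of \eqref{eq:Domain-Astar}, which proves $D(\hat A^*)\subset D(A^*)$ directly. Concretely, I take test functions $\sbm{p\\ 0}$ with $p\in H^1$. This gives, for all $p\in H^1$,
\[
\langle \eta y_\xi-\gamma z_\xi,p_\xi\rangle_{L^2}=\langle h,p\rangle_{L^2}
\]
for some $h\in L^2$ built from $w_1$, $\nu y_\xi$ and $\gamma z$. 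First choosing $p\in C_c^\infty$ yields $\eta y-\gamma z\in H^2(0,1)$. Then taking general $p$ and integrating back by parts yields the Neumann conditions at $\xi=0,1$. Finally, test functions $\sbm{0\\ q}$ identify the second component of $\hat A^*u$ as $\psi_{y,z}$ via \eqref{eq:adjoint-weak-comp}. This gives $\hat A^*=A^*$ as defined, completing the proof.
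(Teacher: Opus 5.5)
Your proposal is correct and follows essentially the paper's route. Integration by parts with the Neumann conditions gives containment in the form operator with agreement; the signs check out as stated, so the $\psi_{y,z}$ equation needs no adjustment. Testing an arbitrary element of the form domain with $\sbm{p\\ 0}$ and $\sbm{0\\ q}$ then gives the reverse inclusion, so the resolvent/Lax--Milgram detour at the start of Step~2 is unnecessary, while your explicit appeal to the standard fact that the operator associated with $a^*(u,v)=\overline{a(v,u)}$ is the Hilbert-space adjoint of the operator associated with $a$ is slightly cleaner than the paper's closing remark, which only records $\langle Au,v\rangle_X=\langle u,A^*v\rangle_X$.
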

\end{framed}

\begin{proof}
The operator $A^*$ is well defined since it maps $D(A^*)$ into $X$. Indeed, if $\sbm{y\\ z}\in D(A^*)$ then $\eta y-\gamma z\in H^2(0,1)$ so $(\eta y-\gamma z)_{\xi\xi}\in L^2(0,1)$ and $y,z\in H^1(0,1)$ implies $y_\xi,z\in L^2(0,1)$ hence $(\eta y-\gamma z)_{\xi\xi}+\nu y_\xi+\gamma z\in L^2(0,1)$. Also, for fixed $y,z\in H^1(0,1)$ the right-hand side of \eqref{eq:adjoint-weak-comp} defines a bounded linear functional on $H^1(0,1)$ so by the Riesz representation theorem there exists a unique $\psi_{y,z}\in H^1(0,1)$ satisfying \eqref{eq:adjoint-weak-comp}.
The operator $\tilde{A}$ associated with the sesquilinear form $a^*$ is defined as follows:
$$
D(\tilde{A})
=
\left\{
u \in V \ \middle| \
a^*(u, v)=\langle w, v\rangle_X 
\text{ for some } w \in X \text{ and all } v \in V
\right\},
$$
for all $u \in D(\tilde{A})$ and $w$ such that $a^*(u, v)=\langle w, v\rangle_X$ for all $v \in V$, we define $\tilde{A}u=-w$.  
On the other hand, by the definition of $A^*$ and an integration by parts argument, one has $a^*(u,v)=-\langle A^*u, v\rangle_X$ for all $u\in D(A^*)$ and all $v\in V$.  
Comparing these two representations of $a^*(u,v)$, we deduce that $A^*u=\tilde{A}u$ for all $u\in D(A^*)$, and hence $D(A^*)\subset D(\tilde{A})$.
We now prove the reverse inclusion $D(\tilde{A}) \subset D(A^*)$.  
Let $\sbm{y\\ z} \in D(\tilde{A})$. Then there exists $\sbm{f\\g} \in X$ such that
$$
a^*\!\left(\bbm{y\\ z}, \bbm{p\\ 0}\right)
=
\left\langle -\bbm{f\\ g}, \bbm{p\\ 0} \right\rangle_X
\qquad \forall p \in H^1(0,1).
$$
Expanding this identity using \eqref{eq:adjoint-form} and rearranging terms, we obtain
\begin{equation}\label{eq:weakform}
\left\langle(\eta y-\gamma z)_\xi, p_\xi\right\rangle_{L^2(0,1)}=\left\langle\nu y_\xi+\gamma z-f, p\right\rangle_{L^2(0,1)}
\FORALL p\in H^1(0,1).
\end{equation}
Since the above identity holds for all $p\in H^1(0,1)$, it follows by definition of weak derivatives that
$\eta y-\gamma z\in H^2(0,1)$. Moreover, by choosing test functions $p\in H^1(0,1)$ with arbitrary boundary traces
at $\xi=0$ and $\xi=1$ and the absence of boundary terms in \eqref{eq:weakform} forces
$(\eta y-\gamma z)_\xi(0)=0$ and $(\eta y-\gamma z)_\xi(1)=0$. Next, taking test functions of the form
$\sbm{0\\ q}$ with arbitrary $q\in H^1(0,1)$ identifies the second component via the relation
$\langle g,q\rangle_{H^1}=-\kappa\langle z,q\rangle_{L^2}-\langle y_\xi+\kappa z_\xi,q_\xi\rangle_{L^2}$, hence
$g=\psi_{y,z}$ by uniqueness in $H^1(0,1)$. Hence $\sbm{y\\ z}\in D(A^*)$, and therefore
$D(\tilde{A})\subset D(A^*)$. Combined with the previously established inclusion
$D(A^*)\subset D(\tilde{A})$, we conclude that $D(A^*)=D(\tilde{A})$ and $A^*=\tilde{A}$. Finally, since
$a^*(u,v)=\overline{a(v,u)}$ for all $u,v\in V$, it follows that $\langle Au,v\rangle_X=\langle u,A^*v\rangle_X$
for all $u\in D(A)$ and $v\in D(A^*)$, which completes the proof that $A^*$ is the adjoint of $A$.
\end{proof}

It is easy to check that the adjoint of $B$ defined in \eqref{eq:operator-B} is the operator
\begin{equation}\label{eq:adjoint-B}
    B^{*}\bbm{p\\q}=\langle p,b \rangle_{L^2(0,1)} \FORALL \bbm{p\\q}\in X.
\end{equation}
We now present conditions for the $\omega$-stabilizability of the pair $(A, B)$.
\begin{framed}
\begin{theorem}\label{thm:omega-stab}
Let $\omega_0=\kappa+\gamma/\eta$ and fix $0<\omega<\omega_0$. Assume that
$$
\int_{0}^{1} \overline{b(\xi)}\,d\xi\neq 0,
$$
and set
$$
\mathcal{J}
=\Bigl\{(k,j)\in\{1,2,3\}\times\mathbb{N}\ \Bigm|\ \Re(\lambda_{k,j})\in[-\omega,0]\Bigr\}.
$$
For each $(k,j)\in\mathcal{J}$, define
$$
\alpha_{k,j}=-\frac{\nu(\lambda_{k,j}+\kappa)}{\eta(\lambda_{k,j}+\kappa)+\gamma}.
$$
Suppose that, for every $(k,j)\in\mathcal{J}$, the following condition also holds
$$
\int_{0}^{1} \overline{b(\xi)}\,e^{\frac{\alpha_{k,j}}{2}\xi}
\left(\cos(\pi j \xi)-\frac{\alpha_{k,j}}{2\pi j}\sin(\pi j \xi)\right)\,d\xi\neq 0.
$$
Then the pair $(A,B)$ is $\omega$-stabilizable.
\end{theorem}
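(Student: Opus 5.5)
We would prove the theorem with the Hautus (PBH) test for the shifted pair $(A+\omega I,B)$. The test applies to systems whose unstable spectrum is a finite set of eigenvalues of finite multiplicity and whose remaining part generates an exponentially stable semigroup.

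\textbf{Spectral decomposition.} By Theorems~\ref{thm:point-spectrum} and~\ref{thm: complete spectrum}, $\sigma(A)\cap\overline{\cline^{+}_{-\omega}}$ is finite and consists only of eigenvalues. These are $0$, possibly $-\kappa$ (if $\kappa\le\omega$), and the $\lambda_{k,j}$ with $(k,j)\in\mathcal J$. The set $\mathcal J$ is finite because $\lambda_{1,j}\to-\infty$ and $\lambda_{2,j},\lambda_{3,j}\to-\omega_0<-\omega$ by \eqref{eq:eigen-limits}. The rest of $\sigma(A)$, including the accumulation point $-\omega_0$, lies in $\cline^{-}_{-\omega-2\epsilon}$ for some $\epsilon>0$. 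Each of the finitely many eigenvalues is isolated and has finite algebraic multiplicity by Theorem~\ref{thm:finite-multi}.

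Let $P$ be the Riesz projection onto these eigenvalues. Then $X=PX\oplus(I-P)X$ with $\dim PX<\infty$. By Theorem~\ref{thm:analytic-semigroup}, $\tline$ is analytic, so the spectrum-determined growth condition holds on $(I-P)X$. The restricted semigroup therefore decays faster than $e^{-(\omega+\epsilon)t}$. By the standard characterization (e.g. \cite[Theorem~5.2.11]{CuZw:1995}), $(A+\omega I,B)$ is exponentially stabilizable if and only if the finite-dimensional part is controllable. By the Hautus test, this holds if and only if
\[
\ker(\overline{\lambda}I-A^*)\cap\ker B^*=\{0\}\qquad\text{for each such }\lambda .
\]
Since $\dim\ker(\lambda I-A)=1$ and $\lambda$ is an isolated eigenvalue of finite algebraic multiplicity (the Riesz projection is then of finite rank), $\overline\lambda\in\sigma_p(A^*)$ with one-dimensional eigenspace. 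It therefore suffices to exhibit the adjoint eigenvector $\phi_\lambda=\sbm{p_\lambda\\ q_\lambda}$ and check, via \eqref{eq:adjoint-B}, that $\langle p_\lambda,b\rangle_{L^2(0,1)}\neq0$.

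\textbf{Computing the adjoint eigenvectors.} This is the main step. We solve $A^*\phi=\overline{\lambda}\phi$ using Lemma~\ref{lem:adjoint-A}. The second component is given only weakly through $\psi_{y,z}$ in \eqref{eq:adjoint-weak-comp}, which is the obstacle. We would first take $\phi$ smooth and integrate \eqref{eq:adjoint-weak-comp} by parts. This converts $\psi=\overline\lambda z$ into an ODE of the form $(1-\partial_\xi^2)(\overline\lambda z)=-\kappa z+(y_\xi+\kappa z_\xi)_\xi$, together with natural boundary conditions. Coupling it with the first-component equation $(\eta y-\gamma z)_{\xi\xi}+\nu y_\xi+\gamma z=\overline\lambda y$ and the Neumann conditions on $\eta y-\gamma z$ gives a constant-coefficient boundary value problem. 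Any classical solution lies in $D(A^*)$, so it is the unique (up to scaling) eigenvector.

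We would solve this system with exponential ansätze, as in the proof of Theorem~\ref{thm:point-spectrum}. The expected outcomes are:
\begin{itemize}
\item[(i)] For $\lambda=0$ (and likewise for $\lambda=-\kappa$ when it is relevant), $p_\lambda$ is a nonzero constant. The Hautus condition then reduces to $\int_0^1\overline{b}\,d\xi\neq0$.
\item[(ii)] For $\lambda=\lambda_{k,j}$, $p_\lambda$ is a multiple of the conjugate of $e^{\alpha_{k,j}\xi/2}\bigl(\cos(\pi j\xi)-\tfrac{\alpha_{k,j}}{2\pi j}\sin(\pi j\xi)\bigr)$. This is the eigenfunction of $A$ from Theorem~\ref{thm:point-spectrum} with the sign of the advection exponent reversed, reflecting $\nu\mapsto-\nu$ in $A^*$.
\end{itemize}
A shortcut to cross-check the guesses is to verify $\langle A u,\phi\rangle_X=\lambda\langle u,\phi\rangle_X$ for all $u\in D(A)$ using the form $a$.

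\textbf{Conclusion.} With (i) and (ii) established, the hypotheses of the theorem give $B^*\phi_\lambda\neq0$ for every eigenvalue in $\overline{\cline^{+}_{-\omega}}$. The Hautus test then yields a $K$ such that $A+\omega I+BK$ is exponentially stable with some margin $\epsilon>0$, which is exactly \eqref{eq:problem}.

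The case $\lambda=-\kappa$ with $\kappa\le\omega$ needs a separate check. We expect its adjoint eigenvector to have a constant (or zero) first component, so that it is covered by $\int_0^1\overline{b}\,d\xi\neq0$. If the first component turns out to be zero, the hypotheses would not suffice, and the statement would need an additional assumption such as $\omega<\kappa$.
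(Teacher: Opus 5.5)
Your route is the one the paper takes. You isolate the finitely many eigenvalues in $\overline{\cline^{+}_{-\omega}}$, use analyticity of $\tline$ for the growth bound on the complementary spectral subspace, and reduce to a Hautus condition on eigenvectors of $A^*$. Your argument that $\ker(\overline{\lambda}I-A^*)$ is one-dimensional is correct, though the elimination below makes it unnecessary.

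The gap is that the step you yourself call the main one is never carried out. Items (i) and (ii) are only ``expected outcomes,'' and for $\lambda=-\kappa$ you leave open whether the theorem needs an extra hypothesis such as $\omega<\kappa$. It does not, and no smoothness ansatz is needed to see this. Suppose $A^*\sbm{f\\ g}=\mu\sbm{f\\ g}$. Testing \eqref{eq:adjoint-weak-comp} with all $\phi\in H^1(0,1)$ gives $m:=f_\xi+(\mu+\kappa)g_\xi\in H^1(0,1)$, $m_\xi=(\mu+\kappa)g$ and $m(0)=m(1)=0$. Combine this with $(\eta f-\gamma g)_\xi\in H^1(0,1)$, its Neumann conditions, and $\eta(\mu+\kappa)+\gamma\neq 0$. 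You get $f,g\in H^2(0,1)$ and $f_\xi=g_\xi=0$ at $\xi=0,1$.

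At $\mu=-\kappa$ one has $m=f_\xi$, so $f_{\xi\xi}=0$ and $f$ is constant. The first component then reads $\gamma(g-g_{\xi\xi})=-\kappa f$ with Neumann data, so $g=-\kappa f/\gamma$. Hence $\ker(-\kappa I-A^*)$ is spanned by $\sbm{1\\ -\kappa/\gamma}$, and $B^*$ maps this vector to $\int_0^1\overline{b}\,d\xi\neq 0$.

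For $\mu\neq-\kappa$, substitute $(\mu+\kappa)g_{\xi\xi}=(\mu+\kappa)g-f_{\xi\xi}$ into the first component. The $\gamma g$ terms cancel and you get $f_{\xi\xi}-\alpha_\mu f_\xi+\beta_\mu f=0$ with $f_\xi(0)=f_\xi(1)=0$. Moreover $f\neq 0$ for a nonzero eigenvector, since $f=0$ would force $g_{\xi\xi}=g$ with Neumann data, hence $g=0$. This gives a constant at $\mu=0$ and exactly your function in (ii) at $\mu=\overline{\lambda_{k,j}}$. So all your predictions are right, and this is precisely the paper's computation.

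One more link is missing in your conclusion. Write $f_{k,j}(\xi)=e^{\alpha_{k,j}\xi/2}\bigl(\cos(\pi j\xi)-\frac{\alpha_{k,j}}{2\pi j}\sin(\pi j\xi)\bigr)$. Because you test $\ker(\overline{\lambda}I-A^*)$, your first component is proportional to $\overline{f_{k,j}}$. Then $B^*\phi_\lambda$ is proportional to $\int_0^1\overline{b}\,\overline{f_{k,j}}\,d\xi$, which is not literally the hypothesis at $(k,j)$ unless $b$ is real.

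To close this, note that the cubic \eqref{eq:cubic-eq} has real coefficients. So $\overline{\lambda_{k,j}}=\lambda_{k',j}$ for some $k'$ with $(k',j)\in\mathcal{J}$, and $\alpha_{k',j}=\overline{\alpha_{k,j}}$ gives $\overline{f_{k,j}}=f_{k',j}$. The hypothesis at $(k',j)$ then applies. The paper avoids this bookkeeping by checking $\ker(\mu I-A_\omega^*)$ directly over $\sigma_\omega^+$, which is conjugation-symmetric.
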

\end{framed}

\begin{proof}
Let $\omega_0$ and $\omega$ be as in the theorem and define
$$
A_\omega = A+\omega I.
$$
Since $A$ generates an analytic semigroup on $X$ by Theorem~\ref{thm:analytic-semigroup}, the shifted operator $A_\omega$ also generates an analytic semigroup on $X$. Moreover, by Theorems~\ref{thm:point-spectrum}, \ref{thm: complete spectrum}, and \ref{thm:finite-multi}, the set
$$
\sigma_\omega^+
:=
\{\lambda+\omega \mid \lambda\in\sigma(A),\ \Re\lambda\ge -\omega\}
$$
is a finite set of eigenvalues of $A_\omega$, each of finite algebraic multiplicity. Denoting the remainder of the spectrum of $A_\omega$ by $\sigma_\omega^-$, we have $\Re\mu<0$ for every $\mu\in\sigma_\omega^-$. Since the only accumulation point of $\sigma(A_\omega)$ is $\omega-\omega_0<0$ and the set $\sigma_\omega^+$ is finite, it follows that there exists $\delta>0$ such that $\Re\mu<-\delta$ for all $\mu\in\sigma_\omega^-$.
In particular,
\begin{equation}\label{eq:specdet}
\lambda\in \sigma^-_\omega \implies \Re\lambda<-\delta. \vspace{-1mm}
\end{equation}
Let $\tline_\omega$ denote the analytic semigroup generated by $A_\omega$. Let $\Pi^-_\omega$ be the spectral projection on $\sigma^-_\omega$. Define $X^-_\omega=\Pi^-_\omega X$ and let $A^-_\omega$ and $\tline^-_\omega$ be the restrictions of $A_\omega$ and $\tline_\omega$, respectively, to $X^-_\omega$. From \cite[Lemma 2.5.7]{CuZw:1995} we get that the spectrum of $A^-_\omega$ is $\sigma^-_\omega$ and that $A^-_\omega$ is the generator of the strongly continuous semigroup $\tline^-_\omega$. Furthermore, $\tline^-_\omega$ is an analytic semigroup since it is the restriction of an analytic semigroup. Hence it satisfies the spectrum determined growth condition \cite[Part II, Chapter 1, Corollary 2.5]{BePrDeMi:2007} and so using \eqref{eq:specdet} we get \vspace{-1mm}
$$\|\tline^-_\omega(t)\|\leq C e^{-\delta t} \FORALL t\geq0 \vspace{-1mm}$$
and some $C>0$. From the above discussion it follows that the pair $(A_\omega, B)$ satisfies all the assumptions required for applying \cite[Part V, Chapter 1, Proposition 3.3]{BePrDeMi:2007}. 

We now verify the second condition in that result, namely
\begin{equation}\label{eq:conditons-verified}
\ker(\mu I-A_\omega^*)\cap \ker(B^*)=\{0\}
\FORALL \mu\in\sigma_\omega^{+}.
\end{equation}
Fix $\mu\in\sigma_\omega^{+}$ and consider 
\begin{equation}\label{eq:condition-2}
(\mu I-A_\omega^*)\bbm{f\\g}=0.
\end{equation}
Since $\mu\in\sigma_\omega^{+}$, we can write $\mu=\lambda+\omega$ for some eigenvalue $\lambda$ of $A$ with its real part in $[-\omega,0]$. Substituting $A_\omega^*=A^*+\omega I$ in \eqref{eq:condition-2} and using the definition of $A^*$ in \eqref{eq:Domain-Astar}--\eqref{eq:A-star-def}, we obtain
\begin{align}
&(\eta f-\gamma g)_{\xi\xi}(\xi)+\nu f_\xi(\xi)+\gamma g(\xi)=\lambda f(\xi) \qquad \text{for a.e. } \xi\in(0,1), \label{eq:adjoint one}\\
&\langle \lambda g,\phi\rangle_{H^1(0,1)}
= -\kappa\langle g,\phi\rangle_{L^2(0,1)}
-\left\langle f_\xi+\kappa g_\xi,\phi_\xi\right\rangle_{L^2(0,1)}
\FORALL \phi\in H^1(0,1), \label{eq:adjoint two}
\end{align}
with the boundary conditions
\begin{equation}\label{eq:adjoint-bound}
(\eta f-\gamma g)_\xi(0)=0,
\qquad
(\eta f-\gamma g)_\xi(1)=0.
\end{equation}
Because \eqref{eq:adjoint two} holds for all test functions $\phi\in C_c^\infty(0,1)$ and $g\in H^1(0,1)\subset L^2(0,1)$, it follows that $f_\xi+(\lambda+\kappa)g_\xi\in H^1(0,1)$. Furthermore,
\begin{equation}\label{eq:stab-ODE}
\bigl(f_\xi(\xi)+(\lambda+\kappa)g_\xi(\xi)\bigr)_\xi=(\lambda+\kappa)g(\xi) \qquad \text{for a.e. } \xi\in(0,1).
\end{equation}
Integrating \eqref{eq:adjoint two} by parts and substituting \eqref{eq:stab-ODE} we get that a boundary term
$\bigl(f_\xi(1)+(\lambda+\kappa)g_\xi(1)\bigr)\phi(1)-\bigl(f_\xi(0)+(\lambda+\kappa)g_\xi(0)\bigr)\phi(0)$, which must vanish for all
$\phi\in H^1(0,1)$. Since $\phi(0)$ and $\phi(1)$ can be prescribed independently, we obtain the boundary conditions for \eqref{eq:stab-ODE}
\begin{equation}\label{eq:stab-boud}
\bigl(f_\xi(0)+(\lambda+\kappa)g_\xi(0)\bigr)=0,
\qquad
\bigl(f_\xi(1)+(\lambda+\kappa)g_\xi(1)\bigr)=0.
\end{equation}
Since $\Re\lambda\in[-\omega,0]$, we have $\eta(\lambda+\kappa)+\gamma\neq 0$. Using this fact together with $f_\xi+(\lambda+\kappa)g_\xi\in H^1(0,1)$ and $\eta f-\gamma g\in H^2(0,1)$, a simple calculation shows that $f,g\in H^2(0,1)$. 

Next, we eliminate $g$ to obtain an ODE in terms of $f$. Using \eqref{eq:stab-ODE}, we express $g_{\xi\xi}$ in terms of $g$ and $f_{\xi\xi}$. Substituting this identity into \eqref{eq:adjoint one} cancels the $\gamma g$ terms,
and yields the second-order ODE in $f$.
\begin{equation}\label{eq:f-ODE}
f_{\xi \xi}(\xi)-\alpha_\lambda f_\xi(\xi)+\beta_\lambda f(\xi)=0 \qquad \text{for a.e. } \xi\in(0,1).
\end{equation}
Here,
\begin{equation}\label{eq:coefficieents-stabilization}
\alpha_\lambda=-\frac{\nu(\lambda+\kappa)}{\eta(\lambda+\kappa)+\gamma},
\qquad
\beta_\lambda=-\frac{\lambda(\lambda+\kappa)}{\eta(\lambda+\kappa)+\gamma}.
\end{equation}
The boundary conditions \eqref{eq:adjoint-bound} and \eqref{eq:stab-boud} couple $f$ and $g$. Using
$\eta(\lambda+\kappa)+\gamma\neq 0$, we eliminate $g_\xi(0)$ and $g_\xi(1)$ from these relations and obtain the
boundary conditions for \eqref{eq:f-ODE} to be
\begin{equation}\label{eq:f-bc}
f_\xi(0)=0, \qquad f_\xi(1)=0.
\end{equation}
We now consider the following cases.
\begin{itemize}
\item[Case 1.] If $\lambda=0$, then for $\mu=\omega\in[0,\omega]$. Solving \eqref{eq:f-ODE}--\eqref{eq:f-bc} gives the constant solution
$f= 1$ for all $\xi\in(0,1)$. Moreover,
$$
B^{*}\bbm{f\\g}=\langle f,b\rangle_{L^2(0,1)}=\int_{0}^{1} \overline{b(\xi)}\,d\xi\neq 0.
$$
Therefore, $\sbm{f\\g}\notin\ker(B^{*})$. Hence, the condition \eqref{eq:conditons-verified} holds for $\mu=\omega$.
\item[Case 2.] If $\lambda=-\kappa$ and $\mu=\omega-\kappa\in\sigma_\omega^+$, then \eqref{eq:f-ODE}--\eqref{eq:f-bc} again admits a constant solution, which we normalize by taking $f(\xi)=1$ for all $\xi\in(0,1)$. The same argument as in Case 1 then
shows that $B^{*}\sbm{f\\g}\neq 0$. Hence, \eqref{eq:conditons-verified} holds for $\mu=\omega-\kappa$.
\item[Case 3.]Suppose that $\mu=\lambda_{k,j}+\omega$ with $\Re\lambda_{k,j}\in[-\omega,0]$. The coefficients in \eqref{eq:coefficieents-stabilization} associated with
$\lambda_{k,j}$ satisfy
$\sqrt{\alpha_{\lambda_{k,j}}^{\,2}-4\beta_{\lambda_{k,j}}}=2\pi i\,j$, see Case 3 in the proof of Theorem~\ref{thm:point-spectrum}. Solving \eqref{eq:f-ODE} together with \eqref{eq:f-bc} yields the normalized solution $f=f_{k,j}$ with 
$$
f_{k,j}(\xi)=e^{\frac{\alpha_{\lambda_{k,j}}}{2}\xi}
\left(\cos(\pi j \xi)-\frac{\alpha_{\lambda_{k,j}}}{2\pi j}\sin(\pi j \xi)\right)
\FORALL \xi\in(0,1).
$$
By the standing hypothesis of the theorem, we have $\langle f_{k,j},b\rangle_{L^2}\neq 0$ for all $(k,j)\in \mathcal{J}$. Therefore, \eqref{eq:conditons-verified} holds for $\mu=\lambda_{k,j}+\omega$.
\end{itemize}
We have shown that $(A_\omega,B)$ satisfies the hypotheses of \cite[Part~V, Chapter~1, Proposition 3.3]{BePrDeMi:2007} and that
\eqref{eq:conditons-verified} holds. Therefore the stabilizability criterion in that proposition applies, and $(A_\omega,B)$ is stabilizable.
Consequently, there exists $K\in\mathcal{L}(X,\mathbb{C})$ such that the semigroup generated by $A_\omega+BK=A+\omega I+BK$ is exponentially stable. It follows that the closed-loop semigroup $\mathbb{T}^{cl}$ generated by $A+BK$ satisfies \eqref{eq:problem} for some $M,\epsilon>0$. This completes the proof.
\end{proof}

\section{LQR controller design \vspace{-1mm}} \label{sec4} \setcounter{equation}{0} 

\ \ \ Recall the operators $A$ and $B$ from Section \ref{sec:Math-model-problem}. Consider the abstract evolution equation
\begin{equation} \label{eq:HM_absomega}
 \bbm{\dot{y}(t)\\ \dot{z}(t)}=A_\omega\bbm{y(t)\\z(t)}+Bu(t) \FORALL t>0
\end{equation}
on the state space $X=L^2(0,1)\times H^1(0,1)$, where $A_\omega=A+\omega I$ for some constant $\omega>0$. Equation \eqref{eq:HM_absomega} is obtained from \eqref{eq:abstract-evolution} by replacing $A$ with $A_\omega$. Since $A$ generates an analytic semigroup $\tline$ on $X$, the operator $A_\omega$ generates the analytic semigroup $\tline_\omega$ on $X$, given by $\tline_\omega(t)=e^{\omega t}\tline(t)$ for all $t\geq0$. For every input $u\in L^2((0,\infty);\cline)$ and initial state $\sbm{y^0 \\ z^0}\in X$, the corresponding mild solution $\sbm{y \\ z}\in C([0,\infty);X)$ of \eqref{eq:HM_absomega} is
$$
 \bbm{y(t) \\ z(t)} =\tline_\omega(t) \bbm{y^0\\ z^0} + \int_0^t \tline_\omega(t-s) B u(s)\dd s \FORALL t\geq0. \vspace{-1mm}
$$
For \eqref{eq:HM_absomega} we consider the quadratic cost functional $J:L^2((0,\infty);\cline)\times X\to \rline$ defined by
\begin{equation}\label{eq:Cost}
  J\left(u,\bbm{y^0\\z^0}\right)=\int_{0}^{\infty} \left( \left\langle \bbm{y(t)\\z(t)}, Q\bbm{y(t)\\z(t)} \right\rangle_{X}+R|u(t)|^2 \right) \dd t.
\end{equation}
Here $Q\in \Lscr(X)$ is self-adjoint and coercive, $R\in \rline$ is a positive scalar, and $\sbm{y \\ z}$ denotes the mild solution of \eqref{eq:HM_absomega} corresponding to $u$ and $\sbm{y^0 \\ z^0}$. In particular, $Q$ and $R$ are boundedly invertible and admit coercive square roots which are also boundedly invertible. The algebraic Riccati equation associated with minimizing $J$ is
\begin{equation} \label{eq:Riccati}
 A_\omega^{*} \Pi+\Pi A_\omega - \Pi B R^{-1} B^{*}\Pi + Q = 0.
\end{equation}
Recall that $\omega_0=\kappa+\frac{\gamma}{\eta}$. Let $0<\omega<\omega_0$ and assume that $B$ satisfies the hypothesis in Theorem \ref{thm:omega-stab}. Then, by Theorem \ref{thm:omega-stab}, there exists $K\in\Lscr(X,\cline)$ such that the semigroup generated by $A_\omega+BK$ is exponentially stable. Fix any $\varepsilon>0$. Moreover, the semigroup generated by $A_\omega+LQ^{\frac{1}{2}}$ is exponentially stable, where $Q^{\frac{1}{2}}$ is the coercive square root of $Q$ and $L=-(\omega+\varepsilon)Q^{-\frac{1}{2}} \in\Lscr(X)$, because
$$
A_\omega+LQ^{\frac{1}{2}}=A-\varepsilon I.
$$
Since $\sigma(A)\subset\{\lambda\in\cline\mid \Re\lambda\le 0\}$, it follows that $\sigma(A-\varepsilon I)\subset\{\lambda\in\cline\mid \Re\lambda\le -\varepsilon\}$. As $A-\varepsilon I$ generates an analytic semigroup, it is exponentially stable. Hence \eqref{eq:Riccati} admits a unique nonnegative solution $\Pi\in\Lscr(X)$ and $A_\omega-BR^{-1} B^*\Pi$ generates an exponentially stable semigroup, see \cite[Theorem 6.2.7]{CuZw:1995}. Consequently, $K_\infty=-R^{-1}B^*\Pi$ stabilizes the pair $(A+\omega I,B)$. Thus a feedback $K_\omega$ solving Problem \ref{prob:w-stab} can be obtained by computing the nonnegative solution $\Pi$ of \eqref{eq:Riccati}.

We now outline a practical approach for approximating $\Pi$. In Section~\ref{sec4.1} we introduce, for each $n$, finite-dimensional approximations of the dynamics in \eqref{eq:HM_absomega} and of the cost functional $J$ in \eqref{eq:Cost}, together with the corresponding finite-dimensional algebraic Riccati equation. In Section~\ref{sec4.2} we show that these approximations are uniformly (in $n$) stabilizable, and, by \cite[Theorem~2.2]{BaKu:1984}, their nonnegative solutions $\Pi_n$ converge strongly to the nonnegative solution $\Pi$ of \eqref{eq:Riccati} as $n\to\infty$, see Theorem~\ref{thm:main-result}. The uniform stabilizability is the crucial ingredient here. In the compact-resolvent setting of \cite{BaKu:1984} it follows directly, but the operator $A$ in our problem does not have a compact resolvent, and we therefore establish it by adapting the arguments from \cite[Propositions~4.2 and~4.3]{SiAkMiNa:2025}. The resulting gain $K_{\omega}=-R^{-1}B^{*}\Pi_n$ solves Problem~\ref{prob:w-stab} once $n$ is sufficiently large, and since $\Pi_n$ is straightforward to compute, so is $K_{\omega}$, see the example in Section~\ref{sec:numerical-example}.

\subsection{Finite-dimensional approximations \vspace{0mm}} \label{sec4.1} 

\ \ \ Let $(V_n)_{n\in\nline}$ be a sequence of finite-dimensional subspaces of $V=H^1(0,1)\times H^1(0,1)$ with $V_n=H_n \times H_n$ for each $n\in\nline$, where $H_n$ is a finite-dimensional subspace of $H^1(0,1)$. We regard $V_n$ as a subspace of $X$ and equip it with the inner product and norm inherited from $X$. Recall the sesquilinear form $a:V\times V\to\cline$ from \eqref{eq:sesquilinear form} associated with the state operator $A$ in \eqref{eq:Domain-A}--\eqref{eq:operator-A}. The approximation $A_n$ of $A$ is defined by restricting $a$ to $V_n\times V_n$, i.e., $A_n\in\Lscr(V_n)$ is determined by
\begin{equation}\label{eq:A-approx}
 \langle -A_n v_1,v_2 \rangle_X = a(v_1,v_2) \FORALL v_1,v_2\in V_n.
\end{equation}
Let $P_n$ be the orthogonal projection from $X$ onto $V_n$. Recall $B$ from \eqref{eq:operator-B}. We define $B_n\in\Lscr(\cline ,V_n)$ by
$$
 B_n \alpha= \alpha P_n\bbm{ b \\ 0} \FORALL \alpha\in \cline.
$$
Using $A_n$ and $B_n$, we obtain the finite-dimensional approximation of \eqref{eq:HM_absomega} as the linear ordinary differential equation
\begin{equation} \label{eq:approx system}
 \bbm{\dot{y_n}(t) \\ \dot{z_n}(t)} = A_{\omega,n} \bbm{y_n(t) \\ z_n(t)} + B_n u(t) \FORALL t>0
\end{equation}
on the state space $V_n$, where $A_{\omega,n}= A_n+\omega I$. Clearly $y_n(t)$ and $z_n(t)$ belong to $H_n$.

Recall $Q$ and $R$ from \eqref{eq:Cost} and set $Q_n=P_n Q P_n|_{V_n}$. Then $Q_n\in\Lscr(V_n)$ is self-adjoint and coercive. Define the associated quadratic cost functional $J_n:L^2((0,\infty);\cline)\times V_n\to \rline$ by
\begin{equation}\label{eq:Cost_approx}
  J_n\left(u,\bbm{y^0_n\\z^0_n}\right)=\int_{0}^{\infty} \left( \left\langle \bbm{y_n(t)\\z_n(t)}, Q_n\bbm{y_n(t)\\z_n(t)} \right\rangle_{X}+R|u(t)|^2 \right) \dd t.
\end{equation}
Here $\sbm{y_n \\ z_n}$ denotes the solution of \eqref{eq:approx system} corresponding to $u$ and $\sbm{y_n^0 \\ z_n^0}\in V_n$. Thus $J_n$ is the finite-dimensional approximation of $J$ associated with the dynamics in \eqref{eq:approx system}. The algebraic Riccati equation associated with minimizing $J_n$ is
\begin{equation} \label{eq:Riccati_approx}
 A_{\omega,n}^{*} \Pi_n + \Pi_n A_{\omega,n} - \Pi_n B_n R^{-1} B_n^{*}\Pi_n + Q_n = 0.
\end{equation}
In Section \ref{sec4.2} we show that, under suitable conditions, \eqref{eq:Riccati_approx} admits a unique nonnegative solution $\Pi_n\in\Lscr(V_n)$ for large enough $n$ and that $\Pi_nP_n$ converges strongly to the unique nonnegative solution $\Pi\in \Lscr(X)$ of \eqref{eq:Riccati} as $n\to\infty$.
`
\subsection{Convergence of $\Pi_n$ to $\Pi$} \label{sec4.2} 

\ \ \ Recall $\omega_0=\kappa+\gamma/\eta$ and let $0<\omega<\omega_0$. Suppose that the control operator $B$ satisfies the hypothesis of Theorem~\ref{thm:omega-stab}. Then, by Theorem~\ref{thm:omega-stab}, there exists a feedback operator $K\in\Lscr(X,\cline)$ such that $A_\omega+BK$ generates an exponentially stable semigroup on $X$. Fix one such operator $K$. Define
\begin{equation} \label{eq:lenovo}
A_\omega^K = A_\omega+BK, \qquad A_{\omega,n}^K = A_{\omega,n}+B_n K P_n.
\end{equation}
In Proposition \ref{pr:approxeig} we show that if $\lambda_n$ is an eigenvalue of $A_{\omega,n}^K$ for each $n$, then every accumulation point of the sequence $(\lambda_n)_{n\in\nline}$ is an eigenvalue of $A_{\omega}^K$. Using this result we establish in Proposition \ref{pr:unifstab} that the pair $(A_{\omega,n},B_n)$ is uniformly (in $n$) stabilizable. Finally, appealing to \cite[Theorem 2.2]{BaKu:1984} we conclude that the nonnegative solution $\Pi_n$ of \eqref{eq:Riccati_approx} converges to the nonnegative solution $\Pi$ of \eqref{eq:Riccati} as $n\to\infty$. We also show that $K_\omega=-R^{-1}B_n^*\Pi_n P_n$ solves Problem \ref{prob:w-stab} if $n$ is sufficiently large.

We require the approximating subspaces $V_n$ to satisfy the following natural assumption:
\begin{framed}
\begin{assumption}\label{as:subspace}
 For every $v\in V$, there exists a sequence $(v_n)_{n\in\nline}$ in $V$ with $v_n\in V_n$ for all $n\in\nline$ such that
 $$\lim_{n\to \infty}\|v_n-v\|_V=0.$$
\end{assumption}
\end{framed}
The above assumption implies that 
\begin{equation}\label{eq:assumconv}
 \lim_{n\to \infty}\|P_n x - x\|_X=0 \FORALL x\in X. 
\end{equation}
Recall $a^{*}$ defined in \eqref{eq:adjoint-form}. Define the sesquilinear form ${a_\omega^K}^*:V\times V\to \cline$ as follows:
\begin{equation}\label{eq:sesq-AstarK}
 {a_\omega^K}^*(v_1,v_2) = a^{*}(v_1,v_2)-\langle (BK)^{*}v_1,v_2 \rangle_X - \omega\langle v_1,v_2 \rangle_{X} \FORALL v_1,v_2\in V.
\end{equation}
We have already shown that $A^*$ is the operator associated with the sesquilinear form $a^*$, see Lemma~\ref{lem:adjoint-A}. Since $BK\in\Lscr(X)$, the operator associated with the sesquilinear form $({a_\omega^K})^*$ coincides with the adjoint operator $(A+BK+\omega I)^*$. Using \eqref{eq:adjoint-continuity} and \eqref{eq:gardings-adjoint}, we get
\begin{equation}\label{eq:a_Kstar1}
\big|{a_\omega^K}^*(v_1, v_2)\big|
\leq C_1\, \|v_1\|_V \, \|v_2\|_V
\FORALL v_1,v_2\in V,
\end{equation}
and
\begin{equation}\label{eq:a_Kstar2}
\Re\,{a_\omega^K}^*(v,v)+C_2\,\|v\|^2_{X}
\geq \min\Big\{\frac{\eta}{4},\frac{\kappa}{2}\Big\}\,\|v\|^2_V
\FORALL  v\in V.
\end{equation}
Here
\begin{align*}
C_1=&1+\eta+\gamma+\nu+\kappa+\|BK\|_{\Lscr(X)}+\omega,\\
C_2=&\|BK\|_{\Lscr(X)}+\omega+\frac{\nu^2}{\eta}+\frac{\gamma^2}{2 \kappa}+\frac{(1-\gamma)^2}{\eta}+\frac{\eta+\kappa}{2}.
\end{align*}
Now observe from \eqref{eq:sesquilinear form} and \eqref{eq:adjoint-form} that
$$
a^*(v_1,v_2)=\overline{a(v_2,v_1)} \FORALL v_1,v_2\in V.
$$
Hence, for every $v_1,v_2\in V_n$, using \eqref{eq:A-approx} we get
$$
a^*(v_1,v_2)
=
\overline{a(v_2,v_1)}
=
-\overline{\langle A_n v_2,v_1\rangle_X}
=
-\langle v_1,A_n v_2\rangle_X
=
-\langle A_n^*v_1,v_2\rangle_X.
$$
Using this, \eqref{eq:sesq-AstarK} and the expression $A_{\omega,n}^K = A_n + B_n K P_n + \omega I$ we get
\begin{equation}
 {a_\omega^K}^*(v_1,v_2)=-\left\langle \left(A^K_{\omega,n}\right)^{*}v_1, v_2 \right\rangle_{X} \FORALL v_1,v_2\in V_n.
\end{equation}
Recall Assumption \ref{as:subspace}. Let $\zeta= C_2$. Applying \cite[Theorem 2.2]{BaIt:1988} to the sesquilinear form ${a_\omega^K}^*$ it follows that $\zeta$ is in the resolvent set of $(A^{K}_{\omega})^{*}$ and $(A^{K}_{\omega,n})^{*}$ for all $n\in\nline$. Since $\zeta=C_2\in\rline$ and
$\sigma(T^*)=\overline{\sigma(T)}$ for every densely defined closed operator $T$,
it follows that the same $\zeta$ also belongs to the resolvent sets of
$A^K_{\omega}$ and $A^K_{\omega,n}$ for all $n\in\nline$. Moreover,
\begin{equation} \label{eq:resolest}
  \lim_{n\to \infty}\big\| \left(\zeta I-\left(A^{K}_{\omega,n}\right)^{*}\right)^{-1}P_n x - \left(\zeta I-\left(A^{K}_{\omega}\right)^{*}\right)^{-1} x\big\|_{X}=0 \FORALL x\in X. \vspace{-1mm}
\end{equation}

\begin{framed} 
\begin{proposition}\label{pr:approxeig}
Let $(n_k)_{k\in\nline}$ be an increasing sequence of positive integers, $(\lambda_k)_{k\in\nline}$ a sequence in $\cline$, and $(v_k)_{k\in\nline}$, $(w_k)_{k\in\nline}$ two sequences in $X$ such that $\|v_k\|_X=1$ and $v_k,w_k\in V_{n_k}$ for every $k\in\nline$. Recall the constants $\omega$, $\omega_0$ and the operators $A^K_{\omega}$ and $A^K_{\omega,n}$ from \eqref{eq:lenovo} and the discussion preceding it. Assume that
\begin{align}
 &A_{\omega,{n_k}}^K v_k = \lambda_k v_k + w_k \FORALL k\in\nline, \label{eq:assump1}\\
 &\lim_{k\to\infty}\|w_k\|_X=0, \qquad \lim_{k\to\infty} \lambda_k=\lambda \label{eq:assump2}
\end{align}
for some $\lambda\in\cline$ with $\lambda\neq \omega-\omega_0$. Then there exists a non-zero $v\in D(A)$ and a subsequence $(v_{k_r})_{r\in \nline}$ of $(v_k)_{k\in \nline}$ such that \vspace{-1mm}
\begin{equation}\label{eq:weakconv}
  \lim_{r\to\infty}\langle v_{k_r},x \rangle_{X}=\langle v,x \rangle_{X} \FORALL x\in X \vspace{-1mm}
\end{equation}
and $\lambda$ is an eigenvalue of $A_{\omega}^K$ with eigenvector $v$, i.e. $A_{\omega}^K v =\lambda v$. \vspace{-1mm}
\end{proposition}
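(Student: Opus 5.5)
We follow the standard Galerkin spectral-convergence argument, as in \cite[Proposition~4.2]{SiAkMiNa:2025}, built on the resolvent convergence \eqref{eq:resolest} at $\zeta=C_2$. The main difficulty is that $A$ lacks compact resolvent, so a weak limit of $(v_k)$ may vanish. We will rule this out using the hypothesis $\lambda\neq\omega-\omega_0$, which is exactly the point where compactness fails.

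\textbf{Step 1 (resolvent reformulation and weak limit).} Since $\|v_k\|_X=1$, we may extract a subsequence converging weakly in $X$ to some $v$; this gives \eqref{eq:weakconv}. Rewrite \eqref{eq:assump1} as
$$v_k=(\zeta-\lambda_k)R_k v_k - R_k w_k,\qquad R_k=(\zeta I-A^K_{\omega,n_k})^{-1}.$$
The G\aa rding estimate \eqref{eq:a_Kstar2} holds uniformly in $n$, and the corresponding primal estimate for $a$ is \eqref{eq:gardings-equality}. Testing \eqref{eq:assump1} against $v_k$ in the form $a$ then gives a uniform bound on $\|v_k\|_V$ in terms of $|\lambda_k|$, $\|w_k\|_X$ and the constants. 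Hence $v_k$ is bounded in $V$, and after a further subsequence $v_k\rightharpoonup v$ weakly in $V$. Since the embedding $H^1\hookrightarrow L^2$ is compact, the first components converge strongly in $L^2(0,1)$. The second components converge only weakly in $H^1(0,1)$, which is where compactness is lost.

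\textbf{Step 2 (the limit is an eigenvector).} For any $x\in X$, pair the resolvent identity with $x$. Then $\langle R_k v_k,x\rangle=\langle v_k,(R_k)^*x\rangle$, and $(R_k)^*P_{n_k}x\to(\zeta-(A^K_\omega)^*)^{-1}x$ strongly by \eqref{eq:resolest}. Combined with weak convergence of $v_k$ and $\|w_k\|\to0$, this yields
$$v=(\zeta-\lambda)(\zeta I-A^K_\omega)^{-1}v.$$
Therefore $v\in D(A)$ and $A^K_\omega v=\lambda v$. It remains only to show $v\neq0$.

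\textbf{Step 3 (nontriviality; the main obstacle).} Write $v_k=\sbm{y_k\\ z_k}$. Suppose $v=0$. Then $y_k\to0$ strongly in $L^2(0,1)$, so $\|z_k\|_{H^1}\to1$. The second component of \eqref{eq:assump1} follows from the form restricted to test functions $\sbm{0\\ q}$ with $q\in H_{n_k}$. It reads, in the Galerkin sense,
$$(\lambda_k+\kappa-\omega)z_k=\gamma P^{(2)}_{n_k} y_k+(\text{terms from }B_nKP_n\text{ and }w_k).$$
The $B_nK$ contribution lies only in the first component, since $B$ acts only there. The $y$-part is small in $L^2$ but not automatically in $H^1$. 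We therefore use the first-component equation, tested with $p\in H_{n_k}$, to express $(\eta y_k+z_k)$ through $z_k$. Heuristically, the singular combination is $\eta y_k+z_k\approx0$, which gives $y_k\approx -z_k/\eta$. Substituting into the second equation yields $(\lambda_k+\kappa-\omega+\gamma/\eta)z_k\to0$ in a suitable norm. Since $\lambda\neq\omega-\omega_0$, the coefficient stays away from zero, so $z_k\to0$, contradicting $\|z_k\|_{H^1}\to1$.

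To make this rigorous, take the first equation tested with $p=\eta y_k+z_k$ (projected) and the second tested with $q=z_k$. Combine them to bound $\|(\eta y_k+z_k)_\xi\|_{L^2}$ by $o(1)$, and then use the nonzero coefficient $\lambda-\omega+\omega_0$ to control $\|z_k\|_{H^1}$. We expect this energy computation, with the advection term $\nu y_k$ handled by the strong $L^2$ convergence of $y_k$, to be the delicate part of the proof.
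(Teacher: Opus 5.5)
Your proposal is correct. Steps 1 and 2 are the paper's argument: a uniform $V$-bound from testing with $v_k$ and using the G\aa rding inequality, compact extraction, and passage to the limit in the resolvent identity at $\zeta=C_2$ via \eqref{eq:resolest}. Your nontriviality step takes a genuinely different route, and it works.

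\textbf{How the paper does Step 3.} It tests \eqref{eq:mastereq} once, with the single vector $\sbm{z_k\\ (\eta/\gamma)z_k}\in V_{n_k}$ (in your notation $v_k=\sbm{y_k\\ z_k}$). This vector is chosen so that the uncontrollable cross term $\eta\langle y_{k,\xi},z_{k,\xi}\rangle_{L^2}$, contributed by both components, cancels. One identity results: $\mu_k\|z_k\|_{H^1}^2=o(1)$ with $\mu_k\to\frac{\eta}{\gamma}(\lambda-\omega+\omega_0)\neq0$.

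\textbf{How your version goes.} You split this into two steps. Write $w_k=\sbm{p_k\\ q_k}$ and $W_k=\eta y_k+z_k$, which lies in $H_{n_k}$ because $V_{n_k}=H_{n_k}\times H_{n_k}$, so no projection is needed. Testing the first component with $W_k$ alone gives
\[
\|W_{k,\xi}\|_{L^2}^2=\nu\langle y_k,W_{k,\xi}\rangle+(\omega-\lambda_k)\langle y_k,W_k\rangle+(Kv_k)\langle b,W_k\rangle-\langle p_k,W_k\rangle=o(1),
\]
so $W_k\to0$ in $H^1(0,1)$. The memory component of the Galerkin equation is an exact identity in $H^1(0,1)$: $(\lambda_k+\kappa-\omega)z_k=\gamma y_k-q_k$. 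This holds because $y_k\in H_{n_k}$, so your $P^{(2)}_{n_k}$ acts as the identity. Substituting $\gamma y_k=\frac{\gamma}{\eta}(W_k-z_k)$ gives $(\lambda_k-\omega+\omega_0)z_k=\frac{\gamma}{\eta}W_k-q_k\to0$ in $H^1(0,1)$. Testing with $q=z_k$, as you suggest, works equally well.

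\textbf{One detail to make explicit.} Your Step 1 suggests the second components converge only weakly. In fact the $V$-bound and the compact embedding $H^1\hookrightarrow L^2$ also give $z_k\to0$ strongly in $L^2(0,1)$ along the subsequence. You need this both for $\|W_k\|_{L^2}\to0$ and to kill the term $(Kv_k)\langle b,W_k\rangle$.

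\textbf{What each route buys.} The paper's route is shorter: one energy identity with a cleverly chosen test vector. Yours is more transparent. It shows that a vanishing weak limit forces the approximate eigenvectors onto the singular combination $\eta y+z\approx0$. This is the same mechanism that makes $-\omega_0$ a non-eigenvalue accumulation point in Theorem~\ref{thm:point-spectrum}. Your route also yields the extra information $\|\eta y_k+z_k\|_{H^1}\to0$. Both routes rely on the same structural assumption $V_n=H_n\times H_n$.
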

\end{framed}

\begin{proof}
The proof follows by a direct adaptation of the arguments in
\cite[Proposition~4.2]{SiAkMiNa:2025}. Suppose that
$$
v_k=\bbm{\phi_k \\ \psi_k}, \qquad w_k=\bbm{p_k \\ q_k},
$$
so that $\phi_k,\psi_k,p_k,q_k\in H_{n_k} \subset H^1(0,1)$. Using $v_k\in V_{n_k}$ and
$A^K_{\omega,n_k}=A_{n_k}+P_{n_k}BKP_{n_k}+\omega I$, we can rewrite
\eqref{eq:assump1} as
\begin{equation}\label{eq:mastereq}
 -A_{n_k}\bbm{\phi_k \\ \psi_k}
 =P_{n_k}BK\bbm{\phi_k \\ \psi_k}-(\lambda_k-\omega)\bbm{\phi_k \\ \psi_k}-\bbm{p_k \\ q_k}.
\end{equation}
Taking the inner product in $X$ of \eqref{eq:mastereq} with
$\sbm{\phi_k \\ \psi_k}\in V_{n_k}$, and using \eqref{eq:A-approx} together with
$\big\|\sbm{\phi_k \\ \psi_k}\big\|_{X}=1$, we obtain
\begin{equation} \label{eq:masterinp}
 a\!\left(\bbm{\phi_k \\ \psi_k},\bbm{\phi_k \\ \psi_k}\right)
 =\left\langle BK\bbm{\phi_k \\ \psi_k},\bbm{\phi_k \\ \psi_k}\right\rangle_{X}
 +\left(\omega-\lambda_k\right)
 -\left\langle \bbm{p_k \\ q_k},\bbm{\phi_k \\ \psi_k} \right\rangle_{X}.
\end{equation}
Taking real parts in \eqref{eq:masterinp}, using \eqref{eq:gardings-equality} on the left-hand side, and estimating the terms on the right-hand side by Cauchy--Schwarz, we obtain
\begin{align*}
\min\Big\{\frac{\eta}{4},\frac{\kappa}{2}\Big\}
\left\|\bbm{\phi_k \\ \psi_k}\right\|_{V}^{2}
&\leq
\Re\,a\!\left(\bbm{\phi_k \\ \psi_k},\bbm{\phi_k \\ \psi_k}\right)+C \\
&=
\Re\!\left(\left\langle BK\bbm{\phi_k \\ \psi_k},\bbm{\phi_k \\ \psi_k}\right\rangle_{X}
+\omega-\lambda_k
-\left\langle \bbm{p_k \\ q_k},\bbm{\phi_k \\ \psi_k} \right\rangle_{X}\right)
+C \\
&\le
\|BK\|_{\Lscr(X)}+\omega+|\lambda_k|
+\left\|\bbm{p_k \\ q_k}\right\|_{X}
+C.
\end{align*}
Here
$$
C=\frac{\nu^2}{\eta}+\frac{\gamma^2}{2\kappa}+\frac{(1-\gamma)^2}{\eta}+\frac{\eta+\kappa}{2}.
$$
Hence
\begin{equation}\label{eq:Vbound}
\left\|\bbm{\phi_k \\ \psi_k}\right\|_{V}^{2}
\leq
\frac{1}{\min\{\frac{\eta}{4},\frac{\kappa}{2}\}}
\left(
\|BK\|_{\Lscr(X)}+\omega+|\lambda_k|
+\left\|\bbm{p_k \\ q_k}\right\|_{X}
+C
\right).
\end{equation}
By \eqref{eq:assump2}, the right-hand side of \eqref{eq:Vbound} is bounded
uniformly in $k$. Hence $\big(\phi_k\big)_{k\in\nline}$ and
$\big(\psi_k\big)_{k\in\nline}$ are uniformly bounded in $H^1(0,1)$. Since
$H^1(0,1)\hookrightarrow L^2(0,1)$ compactly, there exist
$\phi,\psi\in H^1(0,1)$ and an increasing sequence $(k_r)_{r\in\nline}$ such that
\begin{equation}\label{eq:weakphipsi}
\lim_{r\to\infty}\langle \phi_{k_r}-\phi, q\rangle_{H^1(0,1)}=0,
\qquad
\lim_{r\to\infty}\langle \psi_{k_r}-\psi, q\rangle_{H^1(0,1)}=0
\end{equation}
for all $q\in H^1(0,1)$ and
\begin{equation}\label{eq:strongphipsi}
\lim_{r\to\infty}\|\phi_{k_r}-\phi\|_{L^2(0,1)}=0,
\qquad
\lim_{r\to\infty}\|\psi_{k_r}-\psi\|_{L^2(0,1)}=0 .
\end{equation}
Note that, weak convergence in $H^1(0,1)$ implies weak
convergence in $L^2(0,1)$ with the same limit, and any weak and strong limits in
$L^2(0,1)$ must coincide. Define $v=\sbm{\phi \\ \psi}$ and recall that $v_{k_r}=\sbm{\phi_{k_r}\\ \psi_{k_r}}$.
It follows from \eqref{eq:weakphipsi}-\eqref{eq:strongphipsi} that
\eqref{eq:weakconv} holds. We next show that
\begin{equation}\label{eq:vprop}
 v\in D(A), \qquad A^K_{\omega}v=\lambda v .
\end{equation}
Let $\zeta=C_2$, where $C_2$ is the constant appearing in \eqref{eq:a_Kstar2}. From \eqref{eq:assump1} we have
$$
\big(\zeta I-A^{K}_{\omega,n_{k_r}}\big)v_{k_r}
= (\zeta-\lambda_{k_r})v_{k_r}-w_{k_r}.
$$
Since $\zeta\in\rho(A^{K}_{\omega,n_{k_r}})$ (see the discussion above
\eqref{eq:resolest}), this can be written as
$$
v_{k_r}=\big(\zeta I -A^{K}_{\omega,n_{k_r}}\big)^{-1}
\big[(\zeta -\lambda_{k_r}) v_{k_r}-w_{k_r}\big].
$$
Taking the inner product in $X$ with $P_{n_{k_r}}x\in V_{n_{k_r}}$ and using
$\big((\zeta I -A^K_{\omega,n_{k_r}})^{-1}\big)^{*}=(\zeta I - (A^K_{\omega,n_{k_r}})^{*})^{-1},$
we obtain
\begin{equation}\label{eq:resolvlim}
 \langle v_{k_r},P_{n_{k_r}} x \rangle_{X}
 =
 \big \langle (\zeta -\lambda_{k_r})v_{k_r}-w_{k_r},
 \big(\zeta I - \big(A^K_{\omega,n_{k_r}}\big)^{*}\big)^{-1} P_{n_{k_r}} x
 \big \rangle_{X}
\end{equation}
for $x\in X$. From \eqref{eq:assumconv} and \eqref{eq:weakconv} we get $\lim_{r\to \infty}\langle v_{k_r}, P_{n_{k_r}} x \rangle_{X} = \langle v, x \rangle_{X}$.
Moreover, from \eqref{eq:resolest}, \eqref{eq:assump2} and \eqref{eq:weakconv} we get,
$$
 \lim_{r\to\infty}
 \left\|\big(\zeta I- \big(A^K_{\omega,n_{k_r}}\big)^{*}\big)^{-1} P_{n_{k_r}}x
 - \big(\zeta I - \big(A^K_{\omega}\big)^{*}\big)^{-1} x\right\|_X = 0 \FORALL x\in X,
$$
and
$$
 \lim_{r\to\infty}\langle(\zeta-\lambda_{k_r})v_{k_r}-w_{k_r}, x\rangle_X
 = \langle (\zeta-\lambda)v,x\rangle_X \FORALL x\in X.
$$
Using the above limits and letting $r\to\infty$ in \eqref{eq:resolvlim} yields
$$
\langle v, x\rangle_X
=
\left\langle (\zeta-\lambda)v,\ (\zeta I-(A_\omega^K)^*)^{-1}x\right\rangle_X
\qquad \forall x\in X.
$$
Using $\big((\zeta I-(A_\omega^K)^*)^{-1}\big)^*=(\zeta I-A_\omega^K)^{-1}$, we get
$$
\langle v,x\rangle_X
=
\left\langle (\zeta-\lambda)(\zeta I-A_\omega^K)^{-1}v,\ x\right\rangle_X
\qquad \forall x\in X,
$$
and hence
$$
v=(\zeta-\lambda)(\zeta I-A_\omega^K)^{-1}v.
$$
In particular, $v\in D(A)$ and by a straightforward rearrangement $A_\omega^K v=\lambda v$, proving \eqref{eq:vprop}. 

We now complete the proof by showing that $v\neq 0$, which ensures that $\lambda$ is an eigenvalue of $A_\omega^K$ with corresponding eigenvector $v$. To this end, assume for contradiction that
$$
\phi=0,\qquad \psi=0.
$$
We show below that this assumption leads to a contradiction, and hence $v\neq 0$.

Since $v_{k_r}=\sbm{\phi_{k_r}\\ \psi_{k_r}}\in V_{n_{k_r}}=H_{n_{k_r}}\times H_{n_{k_r}}$, we have $\sbm{\psi_{k_r}\\ (\eta/\gamma)\psi_{k_r}}\in V_{n_{k_r}}$. Letting $k=k_r$ in \eqref{eq:mastereq} and taking the inner product in $X$ with $\sbm{\psi_{k_r}\\ (\eta/\gamma)\psi_{k_r}}$, we obtain
\begin{align}
 a\left(\bbm{\phi_{k_r} \\ \psi_{k_r}},\bbm{\psi_{k_r} \\ \frac{\eta}{\gamma}\psi_{k_r}}\right)
 =&\left\langle BK \bbm{\phi_{k_r} \\ \psi_{k_r}},\bbm{\psi_{k_r} \\ \frac{\eta}{\gamma}\psi_{k_r}} \right\rangle_{X}
 +\left(\omega-\lambda_{k_r}\right)
 \left\langle \bbm{\phi_{k_r}\\ \psi_{k_r}}, \bbm{\psi_{k_r} \\ \frac{\eta}{\gamma}\psi_{k_r}} \right\rangle_{X}
 \nonumber\\[0.5ex]
 &
 -\left\langle \bbm{p_{k_r} \\ q_{k_r}},\bbm{\psi_{k_r} \\ \frac{\eta}{\gamma}\psi_{k_r}} \right\rangle_{X}.
 \label{eq:testpsi}
\end{align}
Expanding the left-hand side of \eqref{eq:testpsi} using the definition of $a$
and rearranging terms, yields
\begin{align}
\mu_r\|\psi_{k_r}\|_{H^1(0,1)}^2
=&
\left\langle BK\begin{bmatrix}\phi_{k_r}\\ \psi_{k_r}\end{bmatrix},
\begin{bmatrix}\psi_{k_r}\\ 0\end{bmatrix}\right\rangle_X
-\left\langle \begin{bmatrix}p_{k_r}\\ q_{k_r}\end{bmatrix},
\begin{bmatrix}\psi_{k_r}\\ \frac{\eta}{\gamma}\psi_{k_r}\end{bmatrix}\right\rangle_X +\nu\langle \phi_{k_r},\psi_{k_r,\xi}\rangle_{L^2(0,1)}
\nonumber\\
&+\eta\langle \phi_{k_r},\psi_{k_r}\rangle_{L^2(0,1)}+(\omega-\lambda_{k_r})\langle \phi_{k_r},\psi_{k_r}\rangle_{L^2(0,1)}
+\|\psi_{k_r}\|_{L^2(0,1)}^2.
\label{eq:H1_identity}
\end{align}
Here $\mu_r=1+\frac{\eta}{\gamma}\big(\kappa-\omega+\lambda_{k_r}\big)$. Since $\lim_{r\to\infty}\lambda_{k_r}=\lambda$ and $\lambda\neq \omega-\omega_0$, we have
$\lim_{r\to\infty}\mu_r=\frac{\eta}{\gamma}\big(\omega_0-\omega+\lambda\big)\neq 0.$
Assume $\phi=\psi=0$. Then \eqref{eq:strongphipsi} yields $\lim_{r\to\infty}\|\phi_{k_r}\|_{L^2(0,1)}=0$ and $\lim_{r\to\infty}\|\psi_{k_r}\|_{L^2(0,1)}=0$.

Consider \eqref{eq:H1_identity}. We claim that every term on its right-hand side converges to $0$. First, by Cauchy--Schwarz, boundedness of $BK\in\Lscr(X)$, and $\lim_{r\to\infty}\|\psi_{k_r}\|_{L^2(0,1)}=0$, we get
\begin{equation}\label{eq:conv1}
\left|\left\langle BK\,v_{k_r},\bbm{\psi_{k_r}\\0}\right\rangle_X\right|
\le \|BK\|_{\Lscr(X)}\,\|\psi_{k_r}\|_{L^2(0,1)}
\xrightarrow[r\to\infty]{}0.
\end{equation}
Next, since $(\psi_{k_r})_{r\in \nline}$ is uniformly bounded in $H^1(0,1)$, the sequence
$\sbm{\psi_{k_r}\\ \frac{\eta}{\gamma}\psi_{k_r}}$ is uniformly bounded in $X$, together with $\lim_{r\to\infty}\|w_{k_r}\|_X=0$ yields
\begin{equation}\label{eq:conv2}
\left|\left\langle w_{k_r},\bbm{\psi_{k_r}\\ \frac{\eta}{\gamma}\psi_{k_r}}\right\rangle_X\right|
\le \|w_{k_r}\|_X\,\left\|\bbm{\psi_{k_r}\\ \frac{\eta}{\gamma}\psi_{k_r}}\right\|_X
\xrightarrow[r\to\infty]{}0.
\end{equation}
Moreover, since $(\lambda_{k_r})_{r\in\nline}$ converges, the sequence $(\omega-\lambda_{k_r})_{r\in\nline}$ is bounded. Therefore, using $\|\phi_{k_r}\|_{L^2(0,1)}\to 0$ and $\|\psi_{k_r}\|_{L^2(0,1)}\to 0$ as $r\to\infty$, we get
\begin{equation}\label{eq:conv3}
\lim_{r\to\infty}(\omega-\lambda_{k_r})\langle\phi_{k_r},\psi_{k_r}\rangle_{L^2(0,1)}= 0
\quad\text{and}\quad
\lim_{r\to\infty}\eta\langle\phi_{k_r},\psi_{k_r}\rangle_{L^2(0,1)}= 0.
\end{equation}
Also, the mixed term $\nu\langle\phi_{k_r},\psi_{k_r,\xi}\rangle_{L^2(0,1)}$ converges to $0$. Indeed, $\|\phi_{k_r}\|_{L^2(0,1)}\to 0$, while $(\psi_{k_r})_{r\in\nline}$ is uniformly bounded in $H^1(0,1)$, so $(\psi_{k_r,\xi})_{r\in\nline}$ is uniformly bounded in $L^2(0,1)$. Therefore, by Cauchy--Schwarz,
\begin{equation}\label{eq:conv4}
\big|\nu\langle\phi_{k_r},\psi_{k_r,\xi}\rangle_{L^2(0,1)}\big|
\le \nu\|\phi_{k_r}\|_{L^2(0,1)}\|\psi_{k_r,\xi}\|_{L^2(0,1)}
\xrightarrow[r\to\infty]{}0.
\end{equation}
Finally, $\|\psi_{k_r}\|_{L^2(0,1)}^2\to 0$. Combining this with \eqref{eq:conv1}--\eqref{eq:conv4}, we conclude that every term on the right-hand side of \eqref{eq:H1_identity} converges to $0$, and hence
$$
\lim_{r\to\infty}\mu_r\|\psi_{k_r}\|_{H^1(0,1)}^2=0.
$$
Since $\lim_{r\to\infty}\mu_r\neq 0$, it follows that $\lim_{r\to\infty}\|\psi_{k_r}\|_{H^1(0,1)}=0$. Combining this with
$\lim_{r\to\infty}\|\phi_{k_r}\|_{L^2(0,1)}=0$ gives $\lim_{r\to\infty}\|v_{k_r}\|_X=0$, contradicting $\|v_{k_r}\|_X=1$.
Hence the assumption $\phi=\psi=0$ is false, and therefore $v\neq 0$.
\end{proof} 
In the following proposition, we prove uniform (with respect to $n$) stabilizability of the pair $(A_{\omega,n},B_n)$. More precisely, we show that for each $n\in\nline$ sufficiently large there exists a $K_n\in\Lscr(V_n,\cline)$ such that $\|e^{(A_{\omega,n}+B_n K_n)t} x\|_X \leq M e^{-\delta t} \|x\|_X$ for all $x\in V_n$, each $t\geq0$ and some $M,\delta >0$ independent of $n$.
\begin{framed}
\begin{proposition}\label{pr:unifstab}
Let $A^K_{\omega,n}=A_{\omega,n}+B_nKP_n$ be defined as in \eqref{eq:lenovo}. Then there exist $n_0\in\nline$ and constants $C,\delta>0$, independent of $n$, such that for all $n\ge n_0$,
\begin{equation}\label{eq:exponential-decay}
\|e^{A^K_{\omega,n} t}x\|_X \le Ce^{-\delta t}\|x\|_X \FORALL x\in V_n, \ \forall t>0 .
\end{equation}
\end{proposition}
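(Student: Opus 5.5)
We argue by contradiction, combining Proposition~\ref{pr:approxeig} with a uniform resolvent (Gearhart--Prüss type) bound, as in \cite[Proposition~4.3]{SiAkMiNa:2025}. The argument runs in four steps.

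\emph{Step 1: a uniform sector and resolvent bound away from a compact set.} Since $A^K_{\omega,n}$ is defined through the restriction of the form ${a_\omega^K}$ to $V_n\times V_n$, the estimates \eqref{eq:a_Kstar1}--\eqref{eq:a_Kstar2} hold with constants independent of $n$. Hence the numerical ranges of $-A^K_{\omega,n}+C_2 I$ lie in a common sector $\{|\Im\mu|\le c(\Re\mu)\}$. This gives a uniform bound
\[
\|(\lambda I-A^K_{\omega,n})^{-1}\|\le \frac{M_0}{|\lambda-C_2|}
\]
for all $\lambda$ outside a shifted sector $S$ and all $n$. It also gives a uniform analytic semigroup bound $\|e^{A^K_{\omega,n}t}\|\le M_1e^{C_2t}$.

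By the standard representation of an analytic semigroup as a contour integral, it then suffices to produce $\delta>0$, $n_0$ and $M_2$ such that
\[
\|(\lambda I-A^K_{\omega,n})^{-1}\|\le M_2
\]
for all $n\ge n_0$ and all $\lambda$ with $\Re\lambda\ge-\delta$. Indeed, the contour $\Re\lambda=-\delta$ can then be deformed into the sector boundary outside a large ball, where the uniform sectorial estimate applies. This yields \eqref{eq:exponential-decay} with constants independent of $n$.

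\emph{Step 2: reduction to a compact region.} Fix $\delta>0$ small enough that $\sigma(A^K_\omega)\subset\cline^-_{-2\delta}$, which is possible because $A^K_\omega$ is exponentially stable. Choose $\delta$ also with $\omega-\omega_0<-2\delta$, which is possible since $\omega<\omega_0$. For $\Re\lambda\ge-\delta$ and $|\lambda|$ large, Step 1 already gives a uniform bound. So we only need the bound on a compact set
\[
\Omega=\{\lambda : \Re\lambda\ge-\delta,\ |\lambda|\le R\}.
\]

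\emph{Step 3: the uniform bound on $\Omega$.} Suppose it fails. Then there are $n_k\to\infty$, $\lambda_k\in\Omega$ and unit vectors $v_k\in V_{n_k}$ with
\[
w_k:=(A^K_{\omega,n_k}-\lambda_k)v_k\to0.
\]
This case also covers $\lambda_k$ being an exact eigenvalue, where $w_k=0$. By compactness of $\Omega$ we may pass to a subsequence with $\lambda_k\to\lambda\in\Omega$. In particular $\lambda\neq\omega-\omega_0$, because $\Re(\omega-\omega_0)<-2\delta$.

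Proposition~\ref{pr:approxeig} then gives a nonzero $v\in D(A)$ with $A^K_\omega v=\lambda v$ and $\Re\lambda\ge-\delta$. This contradicts $\sigma(A^K_\omega)\subset\cline^-_{-2\delta}$. Hence the resolvents are uniformly bounded on $\Omega$ for $n\ge n_0$, and Steps 1--2 give \eqref{eq:exponential-decay}.

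\emph{Step 4: the expected obstacle.} The main difficulty is Step 1: obtaining the uniform sectorial estimates for the finite-dimensional operators. The plan is to derive them directly from the Gårding inequality \eqref{eq:a_Kstar2} and the continuity estimate \eqref{eq:a_Kstar1} for the adjoint form, restricted to $V_n$. The same estimates for $A^K_{\omega,n}$ itself then follow by taking adjoints, since the resolvent norms of an operator and its adjoint agree.

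A related subtlety is uniformity of the contour integral near $|\lambda|=R$. This is handled by choosing $R$ so that the sector bound from Step 1 already applies on $\{\Re\lambda\ge-\delta,\ |\lambda|\ge R\}$, which is where the requirement that the accumulation point $\omega-\omega_0$ lie strictly to the left of $-\delta$ is used.
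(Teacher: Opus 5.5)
Your proposal is correct and follows essentially the same route as the paper. The paper likewise derives an $n$-uniform sector, resolvent bound and growth bound for $A^K_{\omega,n}$ from the uniform continuity and G\aa rding estimates of the restricted form, and then runs the contradiction argument of \cite[Proposition~4.3]{SiAkMiNa:2025}, with Proposition~\ref{pr:approxeig} supplying a limiting eigenvalue of $A^K_\omega$ in $\{\Re\lambda\ge-\delta\}$. (Two small remarks: the condition $\omega-\omega_0<-\delta$ is used only to check the hypothesis $\lambda\neq\omega-\omega_0$ of Proposition~\ref{pr:approxeig} in Step~3, not in choosing $R$; and for small $t$ the contour-integral bound should be combined with the uniform growth bound $e^{C_2t}$.)
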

\end{framed}

\begin{proof}
The proof proceeds exactly as in \cite[Proposition~4.3]{SiAkMiNa:2025}. It is therefore sufficient to verify that the relevant hypotheses of that result are met in the present setting. Once this is done, the proposition follows by the same argument. Define
$a^K_{\omega,n}\colon V_n\times V_n\to\cline$ by
$$
a^K_{\omega,n}(x_1,x_2)
=
a(x_1,x_2)-\langle B_nKP_nx_1,x_2\rangle_X-\omega\langle x_1,x_2\rangle_X\FORALL x_1,x_2\in V_n .
$$
By \eqref{eq:A-approx} and \eqref{eq:lenovo},
$$
a^K_{\omega,n}(x_1,x_2)=-\langle A^K_{\omega,n}x_1,x_2\rangle_X
\qquad \FORALL x_1,x_2\in V_n .
$$
Also, since $B_n=P_nB$ and $\|P_n\|_{\Lscr(X)}\le 1$, we have
$$
\|B_nKP_n\|_{\Lscr(X)}\le \|BK\|_{\Lscr(X)} \FORALL n\in\nline .
$$
Hence, arguing exactly as in the derivation of \eqref{eq:a_Kstar1} and \eqref{eq:a_Kstar2}, there exist constants $C_1,C_2>0$, independent of $n$, such that
$$
|a^K_{\omega,n}(x_1,x_2)|
\le C_1\|x_1\|_V\|x_2\|_V \FORALL x_1,x_2\in V_n ,
$$
and
$$
\Re\ a^K_{\omega,n}(x,x)+C_2\|x\|_X^2
\ge \min\Big\{\frac{\eta}{4},\frac{\kappa}{2}\Big\}\|x\|_V^2 \FORALL x\in V_n .
$$
Applying \cite[Chapter~IV, Theorem~6.1]{Sh:2010} exactly as in Step~1 of \cite[Proposition~4.3]{SiAkMiNa:2025}, we obtain a sector $\Sigma_0$, an angle $\theta_0\in(\pi/2,3\pi/4)$, and a constant $M_0>0$, all independent of $n$, such that
$$
\sigma(A^K_{\omega,n})\subset \Sigma_0\FORALL n\in\nline ,
$$
together with the uniform resolvent estimate outside $\Sigma_0$ and the growth bound
$$
\|e^{A^K_{\omega,n}t}\|_{\Lscr(X)}\le e^{C_2t}\FORALL t>0,\ \forall n\in\nline .
$$
Now choose $\beta\in(0,\omega_0-\omega)$ such that $\sigma(A_\omega^K)\subset \cline^-_{-\beta}$. The proof of \cite[Proposition~4.3]{SiAkMiNa:2025} uses, in addition to the sectorial form bounds established above, the eigenvalue approximation result \cite[Proposition~4.2]{SiAkMiNa:2025}. In our paper, the corresponding statement is precisely Proposition~\ref{pr:approxeig}. Hence the contradiction arguments in Steps~1 and~2 of the proof of \cite[Proposition~4.3]{SiAkMiNa:2025} apply verbatim to the operators $A_{\omega,n}^K$ considered here. It follows that there exist $n_0\in\nline$ and constants $C,\delta>0$, independent of $n$, such that
$$
\|e^{A^K_{\omega,n} t}x\|_X \le Ce^{-\delta t}\|x\|_X \FORALL x\in V_n,\ \forall t>0,\ \forall n\ge n_0 .
$$
This proves the proposition.
\end{proof}

Next we present the main result of this paper. Recall the state operator $A$ defined in \eqref{eq:Domain-A}-\eqref{eq:operator-A} with parameters $\eta,\gamma,\kappa,\nu>0$, the control operator $B\in\Lscr(\mathbb{C},X)$ in \eqref{eq:operator-B}, and its adjoint $B^*$ in \eqref{eq:adjoint-B}. From Section~\ref{sec4.1}, recall the finite-dimensional subspaces $V_n\subset V$, the orthogonal projections $P_n:X\to V_n$, and the corresponding approximations $A_n$ and $B_n$ of $A$ and $B$, respectively. Finally, recall the self-adjoint coercive operator $Q$ and the positive scalar $R$ defining the cost functional in \eqref{eq:Cost}, and set $Q_n=P_n Q P_n$. We now state the theorem.

\begin{framed} \vspace{-2mm}
\begin{theorem} \label{thm:main-result}
Let $\omega_0=\kappa+\frac{\gamma}{\eta}$. Fix $0<\omega<\omega_0$. Define $A_\omega = A + \omega I$ and $A_{\omega,n} = A_n + \omega I$. Let Assumption \ref{as:subspace} hold. Suppose that $B$ satisfies the hypothesis in Theorem \ref{thm:omega-stab}. Then there exists a unique nonnegative solution $\Pi\in\Lscr(X)$ to the operator Riccati equation \eqref{eq:Riccati} and a unique nonnegative solution $\Pi_n\in \Lscr(V_n)$ to the finite-dimensional Riccati equation \eqref{eq:Riccati_approx} for each $n>n_0$ (here $n_0>0$ is the integer in Proposition \ref{pr:unifstab}) such that \vspace{-2.5mm}
\begin{equation}\label{eq:Ricc-sol-conv}
  \lim_{\substack{n\to \infty\\ n>n_0}}\|\Pi_nP_nv-\Pi v\|_{X}=0 \FORALL v\in X. \vspace{-3mm}
\end{equation}
The feedback gain $K_\infty=-R^{-1}B^{*}\Pi$ stabilizes \eqref{eq:HM_absomega} and the feedback gain $K_n=-R^{-1}B^{*}_n\Pi_n$ stabilizes \eqref{eq:approx system} for each $n>n_0$ and \vspace{-2.5mm}
\begin{equation}\label{eq:Con-gain-conv}
  \lim_{\substack{n\to \infty\\ n>n_0}} \|K_n P_n-K_\infty\|_{\Lscr(X,\cline)}=0. \vspace{-3mm}
\end{equation}
In particular, the controller gain $K_\omega=K_n P_n$ with $n$ sufficiently large solves the $\omega$-stabilization problem, Problem \ref{prob:w-stab}.
\end{theorem}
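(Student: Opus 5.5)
The plan is to reduce everything to the abstract convergence theorem \cite[Theorem~2.2]{BaKu:1984} for Riccati approximations, whose hypotheses are (i) strong convergence of the semigroups, (ii) the same for adjoint semigroups, (iii) $B_n\to B$, $Q_n P_n\to Q$ strongly, (iv) uniform stabilizability of $(A_{\omega,n},B_n)$ and (v) uniform detectability of $(A_{\omega,n},Q_n^{1/2})$. The existence and uniqueness of the nonnegative $\Pi$ and the exponential stability of $A_\omega+BK_\infty$ were already shown before Section~\ref{sec4.1}, using Theorem~\ref{thm:omega-stab} and \cite[Theorem 6.2.7]{CuZw:1995}. So the work is checking (i)--(v) and then passing from convergence of $\Pi_n$ to stabilization by $K_nP_n$.

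First, for (i) and (ii), I would use the resolvent convergence \eqref{eq:resolest}. The same \cite[Theorem 2.2]{BaIt:1988} argument applied to the form $a$ (and $a^*$) with $K=0$ gives strong resolvent convergence for $A_{\omega,n}$ and $A_{\omega,n}^*$. The forms satisfy G\aa rding bounds uniform in $n$, so the semigroups are uniformly analytic/quasi-contractive, and Trotter--Kato yields $e^{A_{\omega,n}t}P_nx\to e^{A_\omega t}x$ uniformly on compact $t$-intervals, likewise for adjoints. Next, (iii) is immediate: $B_n=P_nB$ and $P_n\to I$ strongly by \eqref{eq:assumconv}, and $Q_n=P_nQP_n$. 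For (v), detectability is trivial because $Q_n$ is uniformly coercive. Take $L_n=-(\omega+\varepsilon)Q_n^{-1/2}$ as in the infinite-dimensional argument, so $A_{\omega,n}+L_nQ_n^{1/2}=A_n-\varepsilon I$. The uniform bound $\|e^{A_nt}\|\le e^{C_2t}$ is not enough, so instead I would take $L_n=-c\,Q_n^{-1/2}$ with $c>C_2+\omega+1$, where the uniform coercivity constant of $Q$ bounds $\|Q_n^{-1/2}\|$. Then the G\aa rding estimate gives $\|e^{(A_{\omega,n}-cI)t}\|\le e^{-t}$ uniformly. Condition (iv) is exactly Proposition~\ref{pr:unifstab} with the feedbacks $KP_n$, and $\|KP_n\|\le\|K\|$ is uniformly bounded.

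With (i)--(v), \cite[Theorem 2.2]{BaKu:1984} gives, for $n>n_0$, unique nonnegative solutions $\Pi_n$ of \eqref{eq:Riccati_approx} with $\sup_n\|\Pi_n\|<\infty$ and \eqref{eq:Ricc-sol-conv}. It also gives uniform exponential stability of $A_{\omega,n}+B_nK_n$, so $K_n$ stabilizes \eqref{eq:approx system}. For \eqref{eq:Con-gain-conv}, the gains are scalar functionals: $K_nP_nx=-R^{-1}\langle \Pi_nP_nx,P_nb_0\rangle_X$ with $b_0=\sbm{b\\0}$. By self-adjointness this equals $-R^{-1}\langle x,P_n\Pi_nP_nb_0\rangle_X$. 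Hence $\|K_nP_n-K_\infty\|=R^{-1}\|P_n\Pi_nP_nb_0-\Pi b_0\|_X\to0$, using strong convergence at the single vector $b_0$. This finite-rank trick is what upgrades strong convergence to norm convergence.

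Finally, to solve Problem~\ref{prob:w-stab}, write $A_\omega+BK_nP_n=(A_\omega+BK_\infty)+B(K_nP_n-K_\infty)$. The first operator generates an exponentially stable semigroup, and the perturbation is bounded with norm tending to zero. By the standard bounded perturbation estimate, if $\|e^{(A_\omega+BK_\infty)t}\|\le Me^{-\delta t}$ and $M\|B(K_nP_n-K_\infty)\|<\delta/2$, the perturbed semigroup decays at rate $\delta/2$. Unshifting by $\omega$ gives \eqref{eq:problem} with $\epsilon=\delta/2$. I expect the main obstacle to be the verification of the Banks--Kunisch hypotheses, since uniform stabilizability (already delivered by Proposition~\ref{pr:unifstab}) and the uniform semigroup convergence needed for the adjoints must both be checked against the precise formulation of \cite[Theorem 2.2]{BaKu:1984}. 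The rest is routine.
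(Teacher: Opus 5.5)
Your proposal is correct and follows essentially the same route as the paper. The steps match: reduction to \cite[Theorem~2.2]{BaKu:1984}, semigroup and adjoint-semigroup convergence via the form-based resolvent convergence plus Trotter--Kato, uniform stabilizability from Proposition~\ref{pr:unifstab}, the rank-one argument that upgrades strong convergence of $\Pi_nP_n$ at $\sbm{b\\0}$ to norm convergence of the gains, and a bounded-perturbation finish. The one difference is cosmetic: you prove uniform detectability explicitly via the large shift $L_n=-cQ_n^{-1/2}$, whereas the paper folds it into conditions (C1) and (C5) and defers to the arguments of \cite{BaKu:1984}.
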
 \vspace{-2mm}
\end{framed}
\vspace{-4mm}
\begin{proof}
Under the assumptions of this theorem, the operator Riccati equation \eqref{eq:Riccati} admits a unique nonnegative solution $\Pi\in\Lscr(X)$, and $A_\omega-BR^{-1}B^*\Pi$ generates an exponentially stable semigroup, see the discussion below \eqref{eq:Riccati}. By Proposition~\ref{pr:unifstab}, $A_{\omega,n}+B_n K P_n$ generates an exponentially stable semigroup for all $n>n_0$. Hence $(A_{\omega,n},B_n)$ is stabilizable for $n>n_0$. Also, $Q_n$ is coercive. Therefore, by \cite[Theorem~6.2.7]{CuZw:1995}, for each $n>n_0$ the finite-dimensional Riccati equation \eqref{eq:Riccati_approx} has a unique nonnegative solution $\Pi_n\in\Lscr(V_n)$. By \cite[Theorem~2.2]{BaKu:1984}, the limit in \eqref{eq:Ricc-sol-conv} follows once the following conditions hold for all $n>n_0$:
\begin{enumerate}
  \item[(C1)] For each initial state of \eqref{eq:approx system}, there exists an input $u\in L^2((0,\infty);\cline)$ such that the cost \eqref{eq:Cost_approx} is finite. Moreover, whenever the cost \eqref{eq:Cost_approx} is finite, the corresponding trajectory of \eqref{eq:approx system} converges to zero as $t\to\infty$.
  \item[(C2)] The semigroup $\tline_\omega$ generated by $A+\omega I$ and the adjoint semigroup $\tline_\omega^*$ satisfy
      $$
      \lim_{n\to\infty}\big\|\tline_\omega(t)v-e^{A_{\omega,n}t}P_nv\big\|_X=0
      \quad \forall\,v\in X,
      $$
      $$
      \lim_{n\to\infty}\big\|\tline_\omega^*(t)v-e^{A_{\omega,n}^*t}P_nv\big\|_X=0
      \quad \forall\,v\in X,
      $$
      and both limits are uniform in $t$ on bounded subsets of $[0,\infty)$.
\item[(C3)] The operators $B,B_n$ and their adjoints satisfy
     $$
     \lim_{n\to\infty}\|B_n\alpha-B\alpha\|_X=0,
     \qquad
     \lim_{n\to\infty}\|B_n^*P_nv-B^*v\|_{\cline}=0
     $$
     for all $\alpha\in\cline$ and $v\in X$.
  \item[(C4)] The operators $Q$ and $Q_n$ satisfy
      $$
      \lim_{n\to\infty}\|Q_nP_nv-Qv\|_X=0
      \FORALL v\in X.
      $$
  \item[(C5)] There exist constants $M_1,M_2,\delta>0$, independent of $n$, such that $\Pi_n$ and $K_n=-R^{-1}B_n^*\Pi_n$ satisfy
       $$
       \|\Pi_n v\|_X\le M_1\|v\|_X
       \quad \forall\,v\in V_n,\ \forall\,n>n_0,
       $$
       $$
       \|e^{(A_{\omega,n}+B_nK_n)t}v\|_X\le M_2e^{-\delta t}\|v\|_X
       \quad \forall\,v\in V_n,\ \forall\,n>n_0.
       $$
\end{enumerate}

We verify (C1)--(C5) using the same arguments as in \cite{BaKu:1984}, and we only sketch the main steps. Condition (C1) holds since $(A_{\omega,n},B_n)$ is stabilizable for $n>n_0$ and $Q_n$ is coercive. Recall $\zeta=C_2$ from \eqref{eq:a_Kstar2}, which exceeds the constant in the
G\aa rding inequalities \eqref{eq:gardings-equality} and \eqref{eq:gardings-adjoint}
for $a$ and $a^*$. As in the proof of \eqref{eq:resolest}, we obtain the resolvent
convergences
$$
\lim_{n\to\infty}\big\|(\zeta I-A_n)^{-1}P_nx-(\zeta I-A)^{-1}x\big\|_X=0
\quad \forall\,x\in X,
$$
$$
\lim_{n\to\infty}\big\|(\zeta I-A_n^*)^{-1}P_nx-(\zeta I-A^*)^{-1}x\big\|_X=0
\quad \forall\,x\in X.
$$
Using these limits and the G\aa rding inequalities for $a$ and $a^*$, see \eqref{eq:gardings-equality} and \eqref{eq:gardings-adjoint}, the Trotter--Kato theorem \cite[Chapter~3, Theorem~4.4]{Paz:83} gives (C2). Condition (C3) follows from the definitions of $B$ and $B_n$ and the limit in \eqref{eq:assumconv}. For (C4), write
$Q_nP_nv-Qv=P_nQP_nv-P_nQv+P_nQv-Qv$, so
$$
\|Q_nP_nv-Qv\|_X
\le \|Q\|_{\Lscr(X)}\|P_nv-v\|_X+\|P_nQv-Qv\|_X.
$$
This and \eqref{eq:assumconv} imply (C4). Finally, (C5) follows by combining the uniform stabilizability estimate \eqref{eq:exponential-decay} from Proposition~\ref{pr:unifstab} with the arguments after the proof of Lemma~3.3 in \cite{BaKu:1984}. Hence (C1)--(C5) hold, and \eqref{eq:Ricc-sol-conv} follows from \cite[Theorem~2.2]{BaKu:1984}.

Next, since $P_n\Pi_n=\Pi_n$ and $P_nB=B_n$, taking $x=B\alpha$ in \eqref{eq:Ricc-sol-conv} yields
$$
\lim_{\substack{n\to\infty\\ n>n_0}}\|P_n\Pi_nB_n\alpha-\Pi B\alpha\|_X=0
\quad \forall\,\alpha\in\cline.
$$
Therefore,
$$
\lim_{\substack{n\to\infty\\ n>n_0}}\|P_n\Pi_nB_n-\Pi B\|_{\Lscr(\cline,X)}=0.
$$
Since $P_n=P_n^*$, $\Pi_n=\Pi_n^*$, and $\Pi=\Pi^*$, we have
$$
(P_n\Pi_nB_n)^*=B_n^*\Pi_nP_n,
\qquad
(\Pi B)^*=B^*\Pi.
$$
Taking adjoints and using $\|T^*\|=\|T\|$, we obtain
$$
\lim_{\substack{n\to\infty\\ n>n_0}}\|B_n^*\Pi_nP_n-B^*\Pi\|_{\Lscr(X,\cline)}=0.
$$
The limit in \eqref{eq:Con-gain-conv} follows from this limit and the definitions of $K_\infty$ and $K_n$. Finally, $A_\omega+BK_\infty$ generates an exponentially stable semigroup and \eqref{eq:Con-gain-conv} holds. Standard perturbation arguments then imply that $A_\omega+BK_nP_n$ also generates an exponentially stable semigroup for all sufficiently large $n$. Hence, for $n$ large enough, $K_nP_n$ solves Problem~\ref{prob:w-stab}. This completes the proof.
\end{proof}


\section{Numerical example}\label{sec:numerical-example}

\ \ \ Consider the heat equation with memory \eqref{eq:PDE-ODE-1}--\eqref{eq:boundary-PDE-ODE} with $\eta=0.01$, $\kappa=0.01$, $\nu=1$, $\gamma=1$, and the input shape function $b$ given by $b(\xi)=10$ for $\xi\in(0.1,0.7)$ and $b(\xi)=0$ otherwise. For these parameters, $\omega_0=\kappa+\frac{\gamma}{\eta}=100.01$. Since $A$ has an eigenvalue at $0$, the open-loop system \eqref{eq:PDE-ODE-1}--\eqref{eq:boundary-PDE-ODE} is not exponentially stable and its response does not decay. In this example, we design a state-feedback controller so that the closed-loop system attains an exponential decay rate equal to $1$. Thus, we solve Problem~\ref{prob:w-stab} with $\omega=1<\omega_0$.

The eigenvalues of $A$ with real parts in $[-1,0]$ consist of the real eigenvalues $0$ and $-\kappa=-0.01$, together with the complex-conjugate pairs $\lambda_{k,j}$ for $k\in\{2,3\}$, computed according to the convention in Theorem~\ref{thm:point-spectrum}. For $j=1,2,3,4$, these eigenvalues are listed below.
\begin{table}[ht]
\centering
\[
\begin{array}{|c|c|c|}
\hline
j & \lambda_{2,j} & \lambda_{3,j} \\[0.3ex]
\hline
1 & -0.05 - 2.81\,i & -0.05 + 2.81\,i \\
\hline
2 & -0.20 - 5.62\,i & -0.20 + 5.62\,i \\
\hline
3 & -0.43 - 8.42\,i & -0.43 + 8.42\,i \\
\hline
4 & -0.76 - 11.21\,i & -0.76 + 11.21\,i \\
\hline
\end{array}
\]
\vspace{-4mm}
\end{table}

Furthermore, for all $j\ge 5$, we have $\Re(\lambda_{2,j})<-1$ and $\Re(\lambda_{3,j})<-1$. Moreover, $\Re(\lambda_{1,j})<-100$ for every $j\in\mathbb{N}$. We next verify that the pair $(A,B)$ is $\omega$-stabilizable for $\omega=1$. For the chosen input shape function $b$, we have $\int_0^1 \overline{b(\xi)}\,d\xi=6$. For each $k\in\{1,2,3\}$ and $j\in\mathbb{N}$, define
$$
I_{k,j}=\int_0^1 \overline{b(\xi)}\, e^{\frac{\alpha_{k,j}}{2}\xi}\left(\cos(\pi j \xi)-\frac{\alpha_{k,j}}{2\pi j}\sin(\pi j \xi)\right)\,d\xi,
$$
where
$$
\alpha_{k,j}=-\frac{\nu(\lambda_{k,j}+\kappa)}{\eta(\lambda_{k,j}+\kappa)+\gamma}.
$$
For the eigenvalues with real parts in $[-1,0]$, the corresponding values of $I_{2,j}$ and $I_{3,j}$ are listed below.
\begin{table}[htp!]
\centering
\[
\begin{array}{|c|c|c|}
\hline
j & I_{2,j} & I_{3,j} \\[0.3ex]
\hline
1 & 2.90 - 1.47\,i & 2.90 + 1.47\,i \\
\hline
2 & 0.47 - 3.54\,i & 0.47 + 3.54\,i \\
\hline
3 & -1.41 - 2.10\,i & -1.41 + 2.10\,i \\
\hline
4 & -1.61 - 0.88\,i & -1.61 + 0.88\,i \\
\hline
\end{array}
\]
\vspace{-4mm}
\end{table}

Since the above quantities are nonzero, it follows from Theorem~\ref{thm:omega-stab} that the pair $(A,B)$ is $\omega$-stabilizable for $\omega=1$.

Let $H_n=\operatorname{span}\{\psi_0,\psi_1,\dots,\psi_n\}$, where $\psi_0=1$ and $\psi_j(\xi)=\sqrt{2}\cos(j\pi\xi)$ for $j\ge1$. If $v\in H^1(0,1)$ satisfies $\langle v,\psi_j\rangle_{H^1(0,1)}=0$ for all $j\ge0$, then for $j\ge1$, integration by parts gives
$$
\langle v,\psi_j\rangle_{H^1(0,1)}=(1+j^2\pi^2)\langle v,\psi_j\rangle_{L^2(0,1)}.
$$
For $j=0$, we have
$$
\langle v,\psi_0\rangle_{H^1(0,1)}=\langle v,\psi_0\rangle_{L^2(0,1)}.
$$
Hence $\langle v,\psi_j\rangle_{L^2(0,1)}=0$ for all $j\ge0$. Since $(\psi_j)_{j\ge0}$ is complete in $L^2(0,1)$, it follows that $v=0$. Thus $\operatorname{span}\{\psi_j:\,j\ge0\}$ is dense in $H^1(0,1)$. Therefore, for every $v\in H^1(0,1)$, there exists $v_n\in H_n$ such that $\|v-v_n\|_{H^1(0,1)}\to0$ as $n\to\infty$. Hence, with $V_n=H_n\times H_n$, Assumption~\ref{as:subspace} holds, and all the hypotheses of Theorem~\ref{thm:main-result} are satisfied.

Although the theoretical analysis in the previous sections is carried out on the complexified state space, the present numerical example is implemented on the corresponding real state space. This is justified because the parameters $\eta,\kappa,\nu,\gamma$, the actuator profile $b$, the basis functions $(\psi_j)_{j\ge0}$, the operators $Q$ and $R$, and the chosen initial data are all real-valued. Consequently, the finite-dimensional matrices arising in this example are real, the corresponding matrix Riccati equations are solved over $\mathbb{R}$, and the computed feedback gains are real-valued.

We take $Q=I$ and $R=1$. By Theorem~\ref{thm:main-result}, there exists a unique nonnegative solution $\Pi\in\Lscr(X)$ of \eqref{eq:Riccati} such that the gain $K_\infty=-R^{-1}B^*\Pi\in\Lscr(X,\rline)$ stabilizes \eqref{eq:HM_absomega}. Moreover, for each sufficiently large $n$, there exists a unique nonnegative solution $\Pi_n\in\Lscr(V_n)$ of \eqref{eq:Riccati_approx} such that $K_n=-R^{-1}B_n^*\Pi_n\in\Lscr(V_n,\rline)$ stabilizes \eqref{eq:approx system}. Since $K_\infty$ is a bounded real-valued linear functional on the real Hilbert space $X=L^2(0,1)\times H^1(0,1)$, the Riesz representation theorem yields unique real-valued functions $\alpha\in L^2(0,1)$ and $\beta\in H^1(0,1)$ such that
$$
K_\infty \bbm{p \\ q}= \langle\alpha, p \rangle_{L^2(0,1)} + \langle \beta, q \rangle_{H^1(0,1)} \FORALL \bbm{p \\ q} \in X.
$$
Likewise, for each sufficiently large $n$, there exist unique real-valued functions $\alpha_n\in H_n$ and $\beta_n\in H_n$ such that
$$
K_n P_n \bbm{p \\ q}= \langle\alpha_n, p \rangle_{L^2(0,1)} + \langle \beta_n, q \rangle_{H^1(0,1)} \FORALL \bbm{p \\ q} \in X.
$$
For several values of $n$, we solved \eqref{eq:Riccati_approx} numerically through its matrix representation to compute $\Pi_n$. We then formed $K_n$ and extracted $\alpha_n$ and $\beta_n$. Table~\ref{tab:convex1} shows that both $\|\alpha_{n+1}-\alpha_n\|_{L^2(0,1)}$ and $\|\beta_{n+1}-\beta_n\|_{H^1(0,1)}$ decrease as $n$ increases. This indicates that $\alpha_n$ converges in $L^2(0,1)$ and $\beta_n$ converges in $H^1(0,1)$ as $n\to\infty$. The same behavior is also seen in the plots of $\alpha_n$, $\beta_n$, and $\frac{d}{d\xi}\beta_n$ in Figures~1--2, which illustrate the gain convergence stated in \eqref{eq:Con-gain-conv}.

The plots further suggest that $\alpha_{25}$ and $\beta_{25}$ are already close to their limiting profiles, and we therefore choose $K_\omega=K_{25}P_{25}$. We then use the $n=500$ approximate model \eqref{eq:approx system} as a sufficiently accurate approximation of \eqref{eq:HM_absomega} and implement the feedback $K_{25}P_{25}$. Figure~3 compares the spectra of $A_{500}+I$ and $A_{500}+I+B_{500}K_{25}P_{25}$. As expected, the closed-loop eigenvalues lie in the open left half-plane. This shows that $K_\omega=K_{25}P_{25}$ achieves $\omega$-stabilization in this example and corroborates Theorem~\ref{thm:main-result}.
\vspace{-2mm}

\begin{table}[ht]
\centering
\[
\begin{array}{|c|c|c|}
\hline
n & \|\alpha_{n+1}-\alpha_n\|_{L^2(0,1)} & \|\beta_{n+1}-\beta_n\|_{H^1(0,1)} \\[0.5ex]
\hline
2  & 3.43 & 1.71 \\ \hline
5  & 1.46 & 1.37 \\ \hline
10 & 7.37\times 10^{-2} & 9.84\times 10^{-2} \\ \hline
20 & 6.61\times 10^{-3} & 5.70\times 10^{-3} \\ \hline
25 & 4.82\times 10^{-3} & 2.53\times 10^{-3} \\ \hline
\end{array}
\]
\vspace{-5mm}
\parbox{5.2in}{\caption{The quantities $\|\alpha_{n+1}-\alpha_n\|_{L^2(0,1)}$ and $\|\beta_{n+1}-\beta_n\|_{H^1(0,1)}$ decrease as $n$ increases, indicating that $\alpha_n$ and $\beta_n$ converge in $L^2(0,1)$ and $H^1(0,1)$, respectively, as $n\to\infty$.\vspace{-2mm}}}
\label{tab:convex1}
\end{table}

$$ \vspace{-4mm}
{\includegraphics[width=0.5\textwidth]{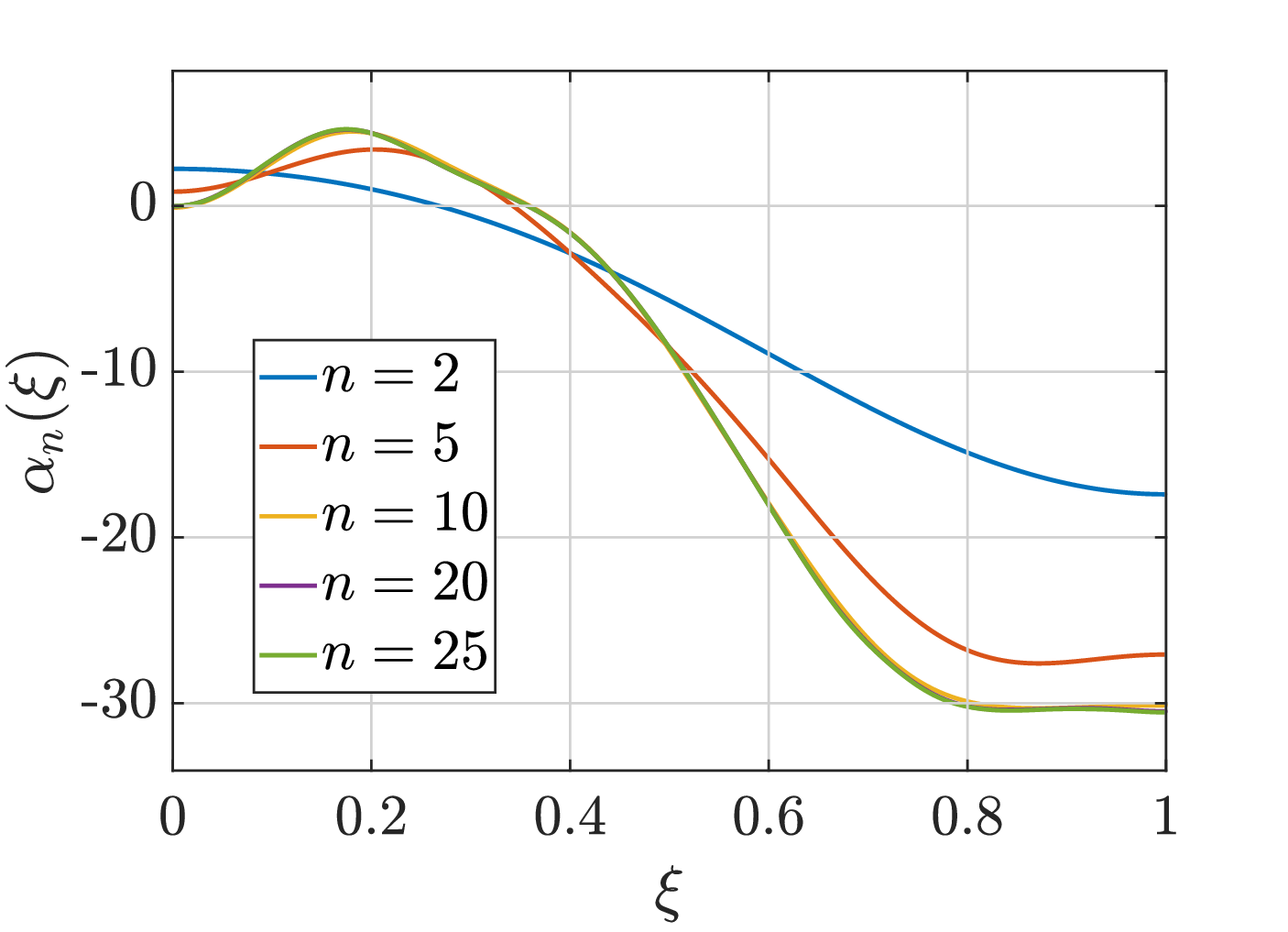}}
$$
\begin{center}
{\parbox{5.2in}{\small{Figure 1. Plots of $\alpha_n$ for different values of $n$ indicate that $\alpha_n$ converges to a limit in $L^2(0,1)$ as $n\to\infty$.}}} \vspace{-4mm}
\end{center}

$$ \vspace{-4mm}
{\includegraphics[width=0.5\textwidth]{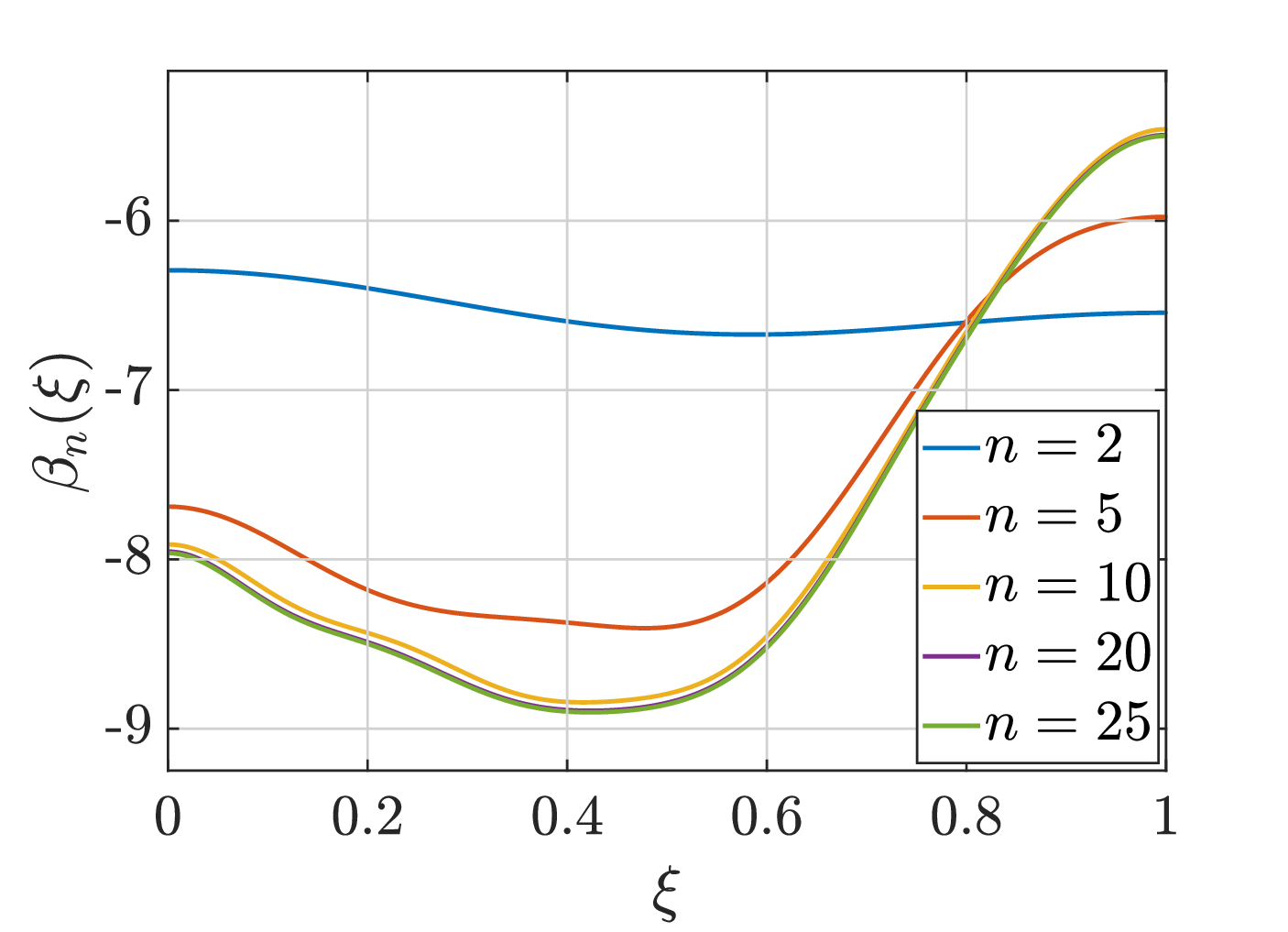}}
{\includegraphics[width=0.5\textwidth]{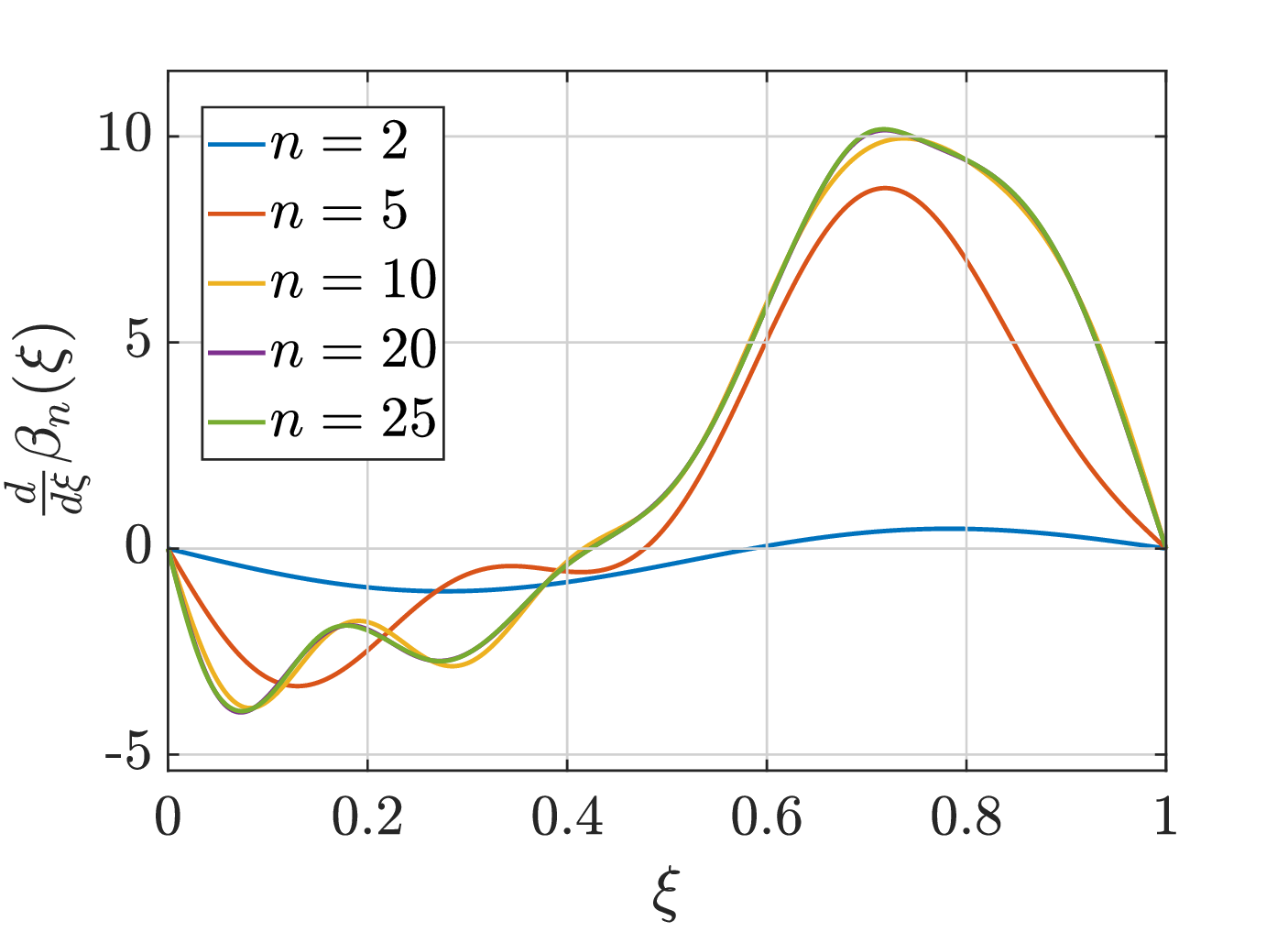}}
$$
\begin{center}
{\parbox{5.2in}{\small{Figure 2. Plots of $\beta_n$ and $\frac{d}{d\xi}\beta_n$ for different values of $n$ indicate that $\beta_n$ converges to a limit in $H^1(0,1)$ as $n\to\infty$.}}} \vspace{-4mm}
\end{center}

$$ \vspace{-2mm}
{\includegraphics[width=0.48\textwidth]{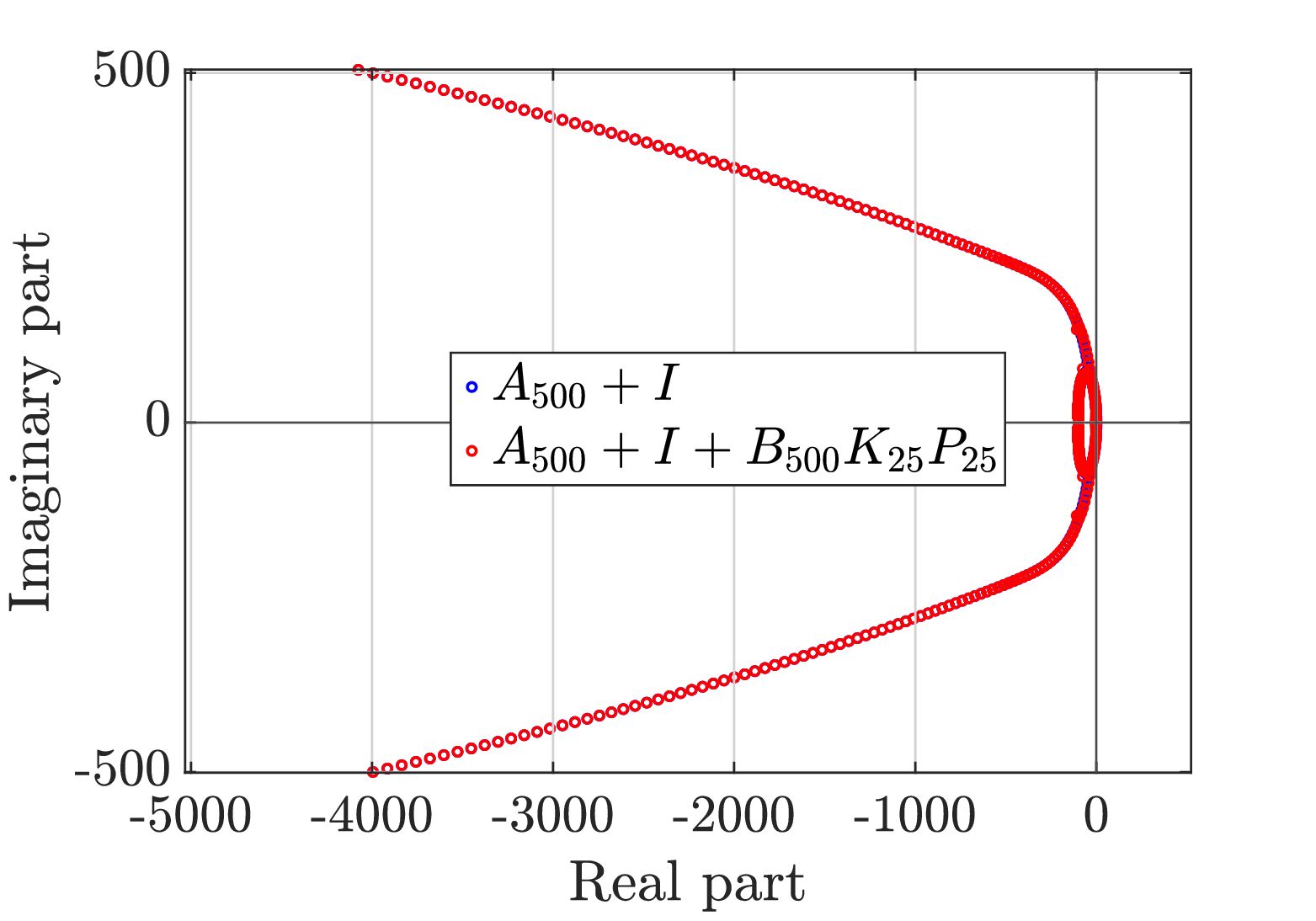}}
{\includegraphics[width=0.48\textwidth]{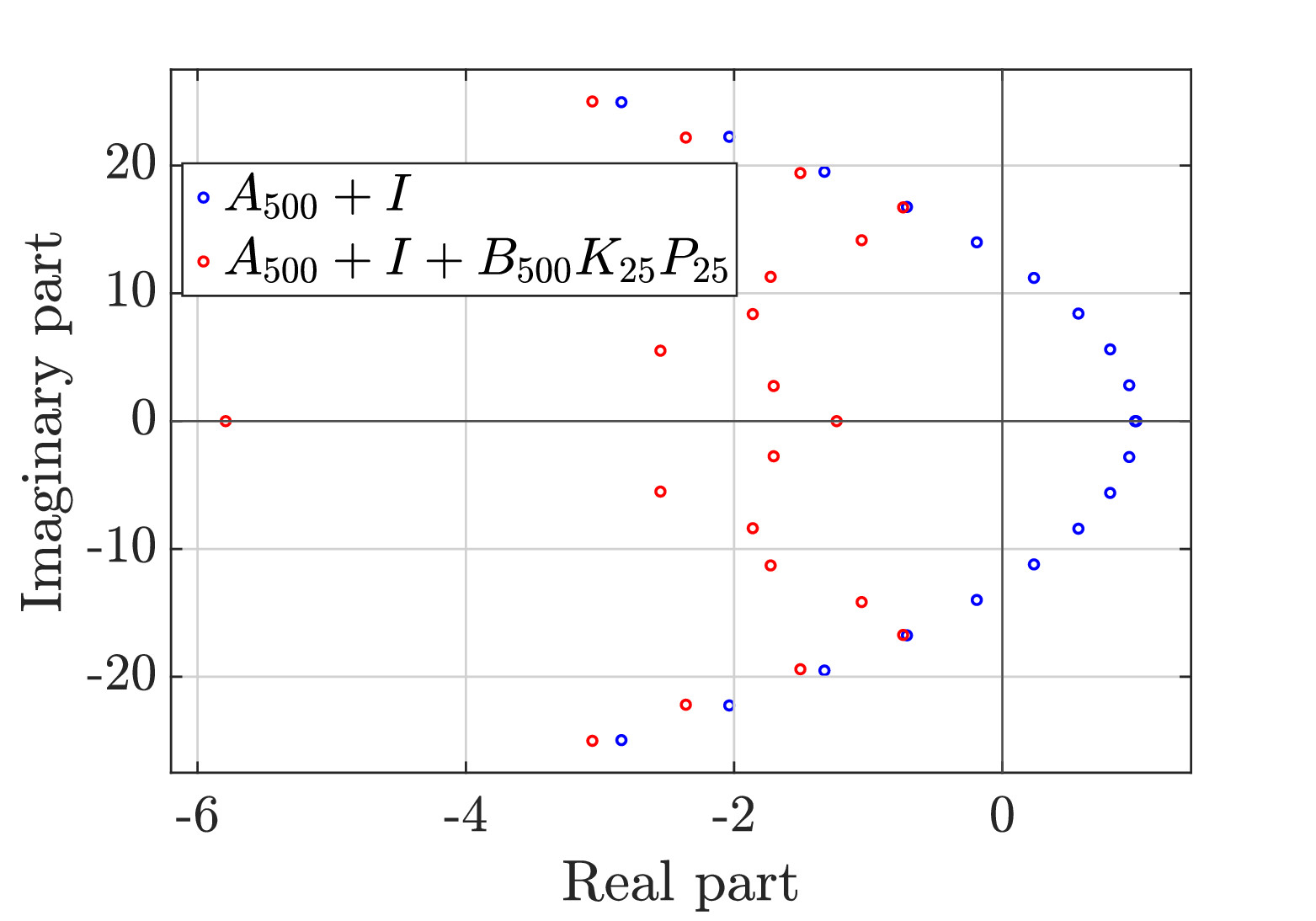}}
$$
\begin{center}
\parbox{5.2in}{{\small{Figure 3. The plots show the eigenvalues of $A_{500}+I$ and $A_{500}+I+B_{500}K_{25}P_{25}$, together with a zoomed-in view near the origin. The shifted open-loop model has eigenvalues in the open right half-plane, whereas all eigenvalues of $A_{500}+I+B_{500}K_{25}P_{25}$ have negative real parts. Hence, $K_\omega=K_{25}P_{25}$ solves the $\omega$-stabilization problem for $\omega=1$.
\vspace{-4mm}}}}
\end{center}

$$
{\includegraphics[width=0.48\textwidth]{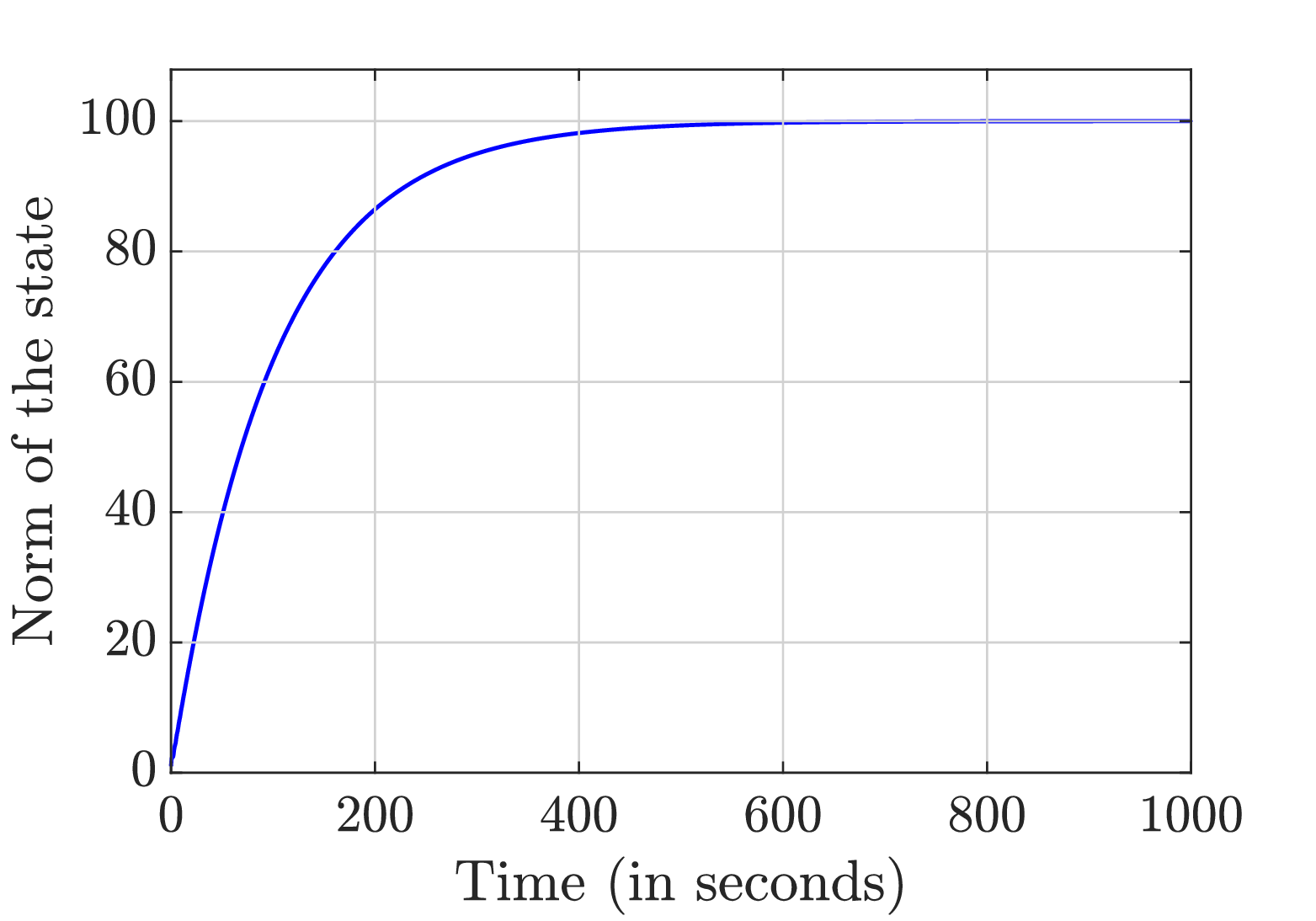}}
{\includegraphics[width=0.48\textwidth]{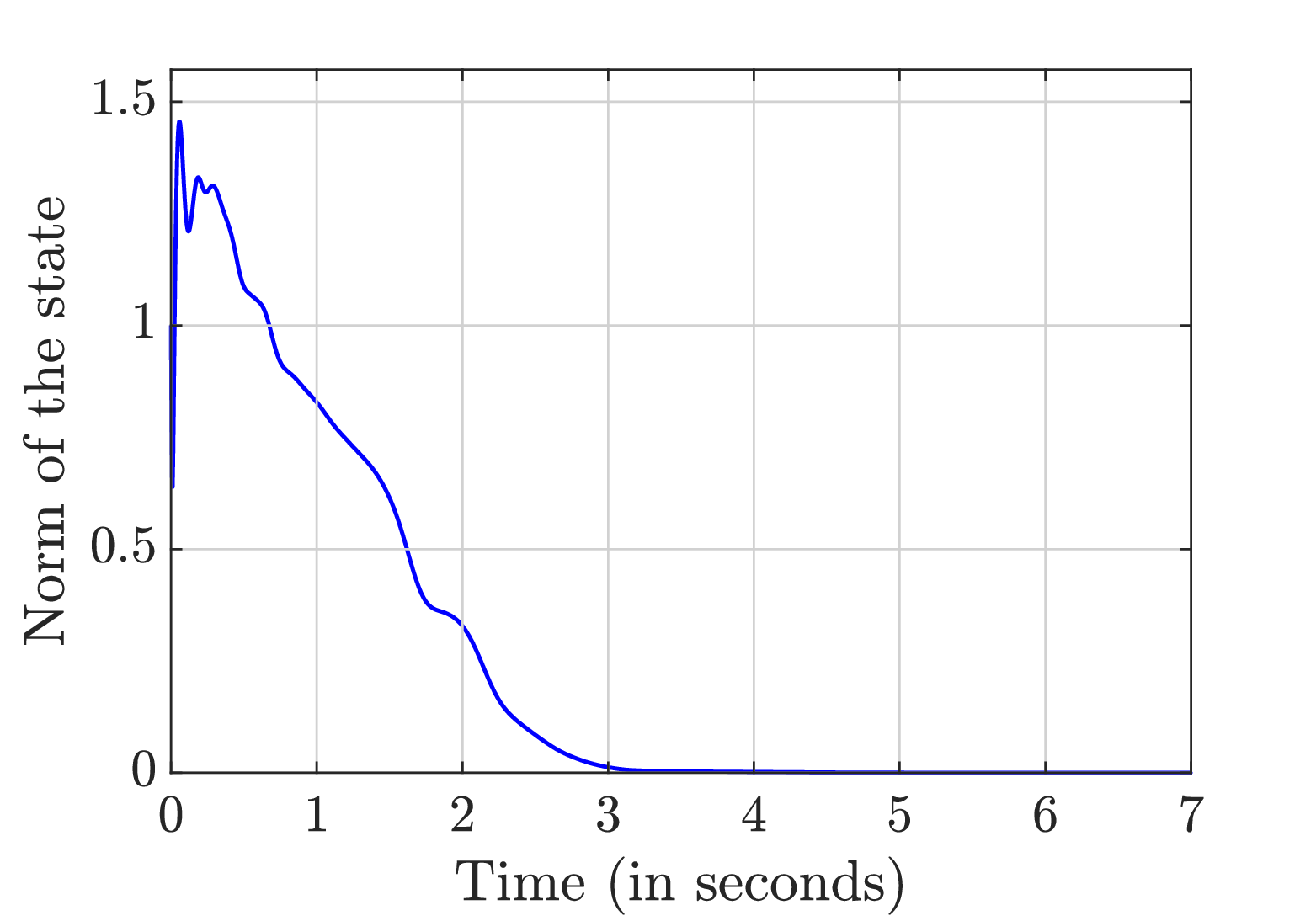}}
$$
\begin{center}
\parbox{5.2in}{{\small{Figure 4. The plot on the left shows the open-loop response of \eqref{eq:PDE-ODE-1}--\eqref{eq:boundary-PDE-ODE} for the chosen initial state, while the plot on the right shows the closed-loop response for the same initial state under the controller gain $K_{25}P_{25}$. The open-loop response saturates to a nonzero steady state because $A$ has an eigenvalue at $0$, whereas the closed-loop feedback stabilizes the system with the desired decay rate and drives the state to $0$.}}}
\end{center}

Figure~4 shows the response of \eqref{eq:PDE-ODE-1}--\eqref{eq:boundary-PDE-ODE} from the initial state
$\sbm{y^0\\ z^0}=\sbm{1\\0}$ under open-loop control $u=0$ and under the state feedback
$u=K_{25}P_{25}\sbm{y\\ z}$. In open loop, the constant mode is not damped because $A$ has an eigenvalue at $0$. Hence, $y(\xi,t)$ remains close to $1$ for all $\xi\in(0,1)$ and $t>0$. Solving $\dot z(t)=-\kappa z(t)+\gamma y(t)$ then gives the steady state $z_\infty=(\gamma/\kappa)y=100$, and hence the $X$-norm of
$\sbm{y(t)\\ z(t)}$ converges to the norm of $\sbm{1\\100}$, which is approximately $100$. In closed loop, the feedback moves this mode into the left half-plane, so both $y$ and $z$ decay exponentially to $0$.

\section{Conclusion}\label{sec:conclu}

\ \ \ In this work, for a prescribed $\omega>0$, we have addressed the $\omega$-stabilization problem for a one-dimensional heat-transfer model with advection and exponentially fading memory under thermally insulated boundary conditions. The dynamics are formulated as an abstract control system on a Hilbert space with state operator $A$ and control operator $B$. We first carried out a detailed spectral analysis of $A$ by explicitly characterizing its point spectrum, which consists of two special real eigenvalues together with three mode-dependent eigenvalue branches given by the roots of a cubic polynomial. This characterization reveals a finite negative accumulation point $-\omega_0$ and shows that one branch diverges to $-\infty$ while the remaining branches accumulate at $-\omega_0$, yielding an intrinsic upper bound on decay rates achievable under bounded state feedback and restricting stabilization to $0<\omega<\omega_0$. Since $A$ does not have compact resolvent, identifying the full spectrum as $\sigma(A)=\overline{\sigma_p(A)}$ and establishing finite algebraic multiplicity of eigenvalues require additional arguments.

Under a verifiable non-orthogonality condition on the input profile defining $B$, we then established $\omega$-stabilizability of the shifted pair $(A+\omega I,B)$ for every $\omega\in(0,\omega_0)$. Building on this, we presented a numerical scheme for computing stabilizing feedback gains via an LQR design. The scheme constructs finite-dimensional approximations $(A_n)_{n\in\mathbb{N}}$ and $(B_n)_{n\in\mathbb{N}}$ of $A$ and $B$, solves the associated finite-dimensional algebraic Riccati equations, and produces gains $K_n$ such that the approximate closed-loop generators $A_n+\omega I+B_nK_n$ are exponentially stable. The validity of this approximation-based design is justified by proving the crucial uniform stabilizability of the pairs $(A_n+\omega I,B_n)$, which allows one to rigorously bridge finite-dimensional Riccati computations to $\omega$-stabilization of the full coupled PDE--ODE system for all sufficiently large $n$.

\appendix
\renewcommand{\thesection}{\Alph{section}}
\numberwithin{equation}{section}
\section{Appendix}\label{sec:appendix}
\ \ \ The following two auxiliary lemmas are used in the proof of
Theorem~\ref{thm:finite-multi}. Throughout these lemmas, the constants
$\eta,\gamma,\kappa$ and $\nu$ are the fixed positive parameters introduced below
\eqref{eq:total-flux}, $\omega_0=\kappa+\frac{\gamma}{\eta}$, and we recall that
$X=L^2(0,1)\times H^1(0,1)$ and $V=H^1(0,1)\times H^1(0,1)$. The operator
$A_{re}\colon D(A_{re})\subset V\to V$ is defined in
\eqref{eq:operator-AV} and its domain given by \eqref{eq:Domain-AV}.
\begin{framed}
\begin{lemma}\label{lem:SL-solvability}
There exists $\mu_1>-\omega_0$ such that, for every $\mu_0\in[\mu_1,\infty)$
and every $h\in L^2(0,1)$, the boundary value problem
\begin{equation}\label{eq:u-ode}
-\eta u_{\xi\xi}
+
\left(
\mu_0-\frac{\gamma}{\eta}+\frac{\nu^2}{4\eta}
\right)u
=
h
\qquad \text{in } L^2(0,1),
\end{equation}
with the separated boundary conditions
\begin{equation}\label{eq:u-bc}
u_\xi(0)-\frac{\nu}{2\eta}u(0)=0,
\qquad
u_\xi(1)-\frac{\nu}{2\eta}u(1)=0,
\end{equation}
has a unique solution $u\in H^2(0,1)$. Moreover, for each fixed
$\mu_0\in[\mu_1,\infty)$, there exists a constant $C_{\mu_0}>0$, independent of
$h$, such that
\[
\|u\|_{H^2(0,1)}
\le C_{\mu_0}\|h\|_{L^2(0,1)}.
\]
\end{lemma}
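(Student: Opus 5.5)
Set $c=\mu_0-\frac{\gamma}{\eta}+\frac{\nu^2}{4\eta}$ and $\theta=\frac{\nu}{2\eta}>0$. The plan is to treat \eqref{eq:u-ode}--\eqref{eq:u-bc} as a regular self-adjoint Sturm--Liouville problem with Robin boundary conditions and to use the Lax--Milgram lemma on $H^1(0,1)$. The weak formulation comes from multiplying \eqref{eq:u-ode} by $\overline{\phi}$, integrating by parts and inserting $u_\xi(j)=\theta u(j)$ for $j=0,1$:
$$
b_{\mu_0}(u,\phi):=\eta\langle u_\xi,\phi_\xi\rangle_{L^2(0,1)}+\eta\theta\bigl(u(0)\overline{\phi(0)}-u(1)\overline{\phi(1)}\bigr)+c\langle u,\phi\rangle_{L^2(0,1)}=\langle h,\phi\rangle_{L^2(0,1)}
$$
for all $\phi\in H^1(0,1)$. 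The form $b_{\mu_0}$ is clearly continuous on $H^1(0,1)$, by the trace bound $|\phi(j)|\le C\|\phi\|_{H^1(0,1)}$.

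The key step is coercivity. The boundary term has an unfavourable sign at $\xi=1$, so I will control it with the interpolation trace inequality $|u(j)|^2\le \varepsilon\|u_\xi\|_{L^2(0,1)}^2+C_\varepsilon\|u\|_{L^2(0,1)}^2$, which follows from $|u(j)|^2\le 2\|u\|_{L^2}\|u_\xi\|_{L^2}+\|u\|_{L^2}^2$. Choosing $\varepsilon$ with $\eta\theta\varepsilon\le \eta/2$ gives
$$
\Re b_{\mu_0}(u,u)\ge \tfrac{\eta}{2}\|u_\xi\|_{L^2(0,1)}^2+(c-\eta\theta C_\varepsilon)\|u\|_{L^2(0,1)}^2 .
$$
I then pick $\mu_1$ large enough that $c-\eta\theta C_\varepsilon\ge \eta/2$ whenever $\mu_0\ge\mu_1$. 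At the same time I take $\mu_1>-\omega_0$, for instance $\mu_1=\max\{-\omega_0+1,\ \tfrac{\gamma}{\eta}-\tfrac{\nu^2}{4\eta}+\eta\theta C_\varepsilon+\tfrac{\eta}{2}\}$. With this choice $b_{\mu_0}$ is coercive on $H^1(0,1)$ with constant $\eta/2$. The Lax--Milgram lemma then gives a unique weak solution $u\in H^1(0,1)$ with $\|u\|_{H^1(0,1)}\le \frac{2}{\eta}\|h\|_{L^2(0,1)}$.

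The remaining step is to recover regularity and the boundary conditions. Testing with $\phi\in C_c^\infty(0,1)$ shows that $\eta u_{\xi\xi}=cu-h\in L^2(0,1)$ in the weak sense. Hence $u\in H^2(0,1)$ with
$$
\|u_{\xi\xi}\|_{L^2(0,1)}\le \eta^{-1}\bigl(|c|\,\|u\|_{L^2(0,1)}+\|h\|_{L^2(0,1)}\bigr),
$$
which yields $\|u\|_{H^2(0,1)}\le C_{\mu_0}\|h\|_{L^2(0,1)}$. Integrating by parts again and testing with $\phi$ having arbitrary values $\phi(0)$ and $\phi(1)$ recovers the Robin conditions \eqref{eq:u-bc}.

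For uniqueness within $H^2(0,1)$, I argue as follows. Any strong solution of \eqref{eq:u-ode}--\eqref{eq:u-bc} is also a weak solution, so coercivity forces the difference of two solutions to vanish. The only point requiring care is the sign of the boundary term at $\xi=1$, and the trace interpolation inequality handles it; I therefore expect no genuine obstacle.
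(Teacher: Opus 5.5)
Your proof is correct, but it takes a different route from the paper's.

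\textbf{The paper's route.} It treats the problem as a regular self-adjoint Sturm--Liouville problem. It cites Zettl's theorem that the eigenvalues of $-\eta u_{\xi\xi}+\bigl(\frac{\nu^2}{4\eta}-\frac{\gamma}{\eta}\bigr)u=\lambda u$ with the Robin conditions \eqref{eq:u-bc} are bounded below. It then chooses $\mu_1$ so that $-\mu_0$ is never an eigenvalue, which makes the homogeneous problem trivial. The solution is written with the Green's function, whose continuity gives the $L^2$ bound. The $H^2$ bound follows from the equation together with the interpolation estimate $\|u\|_{H^2(0,1)}\le C\bigl(\|u\|_{L^2(0,1)}+\|u_{\xi\xi}\|_{L^2(0,1)}\bigr)$.

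\textbf{Your route.} You work variationally. The trace interpolation inequality absorbs the badly signed boundary term at $\xi=1$. Lax--Milgram then gives existence, uniqueness and the $H^1$ bound in one step. You recover $H^2$ regularity and the Robin conditions by testing.

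\textbf{What each buys.} Your argument is self-contained and gives an explicit $\mu_1$. It also gives an $H^1$ bound $\frac{2}{\eta}\|h\|_{L^2(0,1)}$ that is uniform in $\mu_0$. It never uses self-adjointness or real coefficients, so it would go through unchanged for complex data. The paper's argument is shorter once the Sturm--Liouville theory is granted. It also shows the only obstruction is $-\mu_0$ being an eigenvalue. So its threshold can be any value above $\max\{-\omega_0,-\lambda_{\min}\}$, with $\lambda_{\min}$ the lowest eigenvalue, which is the best possible half-line. Your coercivity threshold is explicit but cruder.
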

\end{framed}

\begin{proof}
Consider the eigenvalue problem
\[
-\eta u_{\xi\xi}
+
\left(
-\frac{\gamma}{\eta}+\frac{\nu^2}{4\eta}
\right)u
=
\lambda u
\qquad \text{in } L^2(0,1),
\]
with the boundary conditions \eqref{eq:u-bc}. By
\cite[Theorem~4.3.1]{Zettl:2005}, its eigenvalues are bounded below. Hence we
can choose $\mu_1>-\omega_0$ such that, for every $\mu_0\in[\mu_1,\infty)$,
the number $-\mu_0$ is not an eigenvalue. Equivalently, the homogeneous problem
corresponding to \eqref{eq:u-ode}--\eqref{eq:u-bc} has only the trivial solution.

Fix $\mu_0\in[\mu_1,\infty)$ and $h\in L^2(0,1)$. Since $0$ is not an eigenvalue
of \eqref{eq:u-ode}--\eqref{eq:u-bc} with $h=0$, the Green's function
$G_{\mu_0}$ exists by \cite[Theorem~4.11.1]{Zettl:2005}. The solution $u$ is
given by
\[
u(\xi)
=
\int_0^1 G_{\mu_0}(\xi,s)h(s)\,ds.
\]
The function $G_{\mu_0}$ is continuous on $[0,1]\times[0,1]$, see
\cite[Remark~3.8.1]{Zettl:2005}. Hence
\[
\|u\|_{L^2(0,1)}
\le C_{\mu_0}\|h\|_{L^2(0,1)}.
\]
From \eqref{eq:u-ode}, we have
$\eta u_{\xi\xi}
=
\left(\mu_0-\frac{\gamma}{\eta}+\frac{\nu^2}{4\eta}\right)u-h$.
Therefore $u_{\xi\xi}\in L^2(0,1)$ and
$\|u_{\xi\xi}\|_{L^2(0,1)}
\le C_{\mu_0}\|h\|_{L^2(0,1)}$. Consequently $u\in H^2(0,1)$, and
the standard estimate
$\|u\|_{H^2(0,1)}
\le C\left(\|u\|_{L^2(0,1)}
+\|u_{\xi\xi}\|_{L^2(0,1)}\right)$ gives
\[
\|u\|_{H^2(0,1)}
\le C_{\mu_0}\|h\|_{L^2(0,1)}.
\]
This completes the proof of the lemma.
\end{proof}

\begin{framed}
\begin{lemma}\label{lem:density}
The domain $D(A_{re})$ defined in \eqref{eq:Domain-AV} is dense in $V$.
\end{lemma}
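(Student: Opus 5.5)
The plan is to approximate an arbitrary $\sbm{w\\ z}\in V$ by elements of $D(A_{re})$ in two independent stages. First we approximate $z\in H^1(0,1)$ by smooth functions. Then, with the smooth $z$ fixed, we approximate $w$ by $H^2$ functions that satisfy the coupled Robin-type boundary conditions.

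For the first stage, since $C^\infty([0,1])$ is dense in $H^1(0,1)$, choose $z_n\in C^\infty([0,1])$ with $\|z_n-z\|_{H^1(0,1)}\to0$. Then $(z_n)_\xi\in H^1(0,1)$.

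For the second stage, membership in $D(A_{re})$ requires
\[
w\in H^2(0,1),\qquad \eta w_{\xi\xi}+\nu z_\xi\in H^1(0,1),
\]
together with the boundary conditions in \eqref{eq:Domain-AV}. Since $(z_n)_\xi\in H^1(0,1)$, the middle condition reduces to $w_{\xi\xi}\in H^1(0,1)$, which holds whenever $w\in H^3(0,1)$. So it suffices to find $w_n\in C^\infty([0,1])$ with $\|w_n-w\|_{H^1(0,1)}\to0$ and
\[
(w_n)_\xi(j)=\tfrac{\nu}{\eta}\bigl(w_n(j)-z_n(j)\bigr),\qquad j=0,1 .
\]

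To construct $w_n$, take smooth $\tilde w_n\to w$ in $H^1(0,1)$ and correct the boundary derivatives with boundary-layer functions. Fix $\chi\in C^\infty([0,1])$ with $\chi(\xi)=\xi$ near $0$ and $\chi=0$ on $[1/2,1]$. For $\epsilon>0$ set $\theta_\epsilon(\xi)=\epsilon\chi(\xi/\epsilon)$, extended by zero. Then $\theta_\epsilon(0)=0$, $\theta_\epsilon'(0)=1$, and
\[
\|\theta_\epsilon\|_{L^2(0,1)}=O(\epsilon^{3/2}),\qquad \|\theta_\epsilon'\|_{L^2(0,1)}=O(\epsilon^{1/2}),
\]
so $\|\theta_\epsilon\|_{H^1(0,1)}=O(\epsilon^{1/2})$. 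Take an analogous function $\vartheta_\epsilon$ at $\xi=1$. Define
\[
w_n=\tilde w_n+c_n\theta_{\epsilon_n}+d_n\vartheta_{\epsilon_n},
\]
where $c_n$ and $d_n$ are the defects in the boundary relations. Because the correctors vanish at the endpoints, the values $w_n(j)=\tilde w_n(j)$ are unchanged, and the boundary relations become exact.

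The defects $c_n,d_n$ depend on $(\tilde w_n)_\xi(j)$ and can be large, since $w$ is only in $H^1(0,1)$. This is the main obstacle, and the fix is to choose $\epsilon_n$ small enough relative to $|c_n|+|d_n|$. Taking
\[
\epsilon_n\le \frac{1}{n^2\bigl(1+|c_n|+|d_n|\bigr)^2}
\]
gives $\|c_n\theta_{\epsilon_n}\|_{H^1(0,1)}\le C|c_n|\epsilon_n^{1/2}\le C/n$, and the same bound for the $d_n$ term. Hence $w_n\to w$ in $H^1(0,1)$.

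Each $w_n$ is smooth, so $w_n\in H^3(0,1)$. Therefore $\sbm{w_n\\ z_n}\in D(A_{re})$ and $\sbm{w_n\\ z_n}\to\sbm{w\\ z}$ in $V$, which proves the density.
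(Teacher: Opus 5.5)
Your argument is correct, but it takes a different route from the paper.

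\textbf{The paper's route.} The paper freezes the boundary data. It builds a fixed smooth lift: a cubic Hermite polynomial $p$ with $p(j)=\widetilde w(j)$ and $p_\xi(j)=\frac{\nu}{\eta}\bigl(\widetilde w(j)-\widetilde z(j)\bigr)$, together with the linear interpolant $q$ of $\widetilde z(0)$ and $\widetilde z(1)$. It then notes that $\widetilde w-p$ and $\widetilde z-q$ lie in $H_0^1(0,1)$ and approximates these remainders by $C_c^\infty(0,1)$ functions. Such functions leave all boundary values and derivatives untouched, so every approximant satisfies the coupled Robin conditions with the target's own boundary data.

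\textbf{Your route.} You let the boundary values drift with arbitrary smooth approximants $\tilde w_n$ and $z_n$. You then repair the derivative defect afterwards with rescaled boundary-layer correctors. The quantitative bound $\|\theta_\epsilon\|_{H^1(0,1)}=O(\epsilon^{1/2})$ and a diagonal choice of $\epsilon_n$ absorb the possibly large defects $c_n$ and $d_n$.

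\textbf{Comparison.} Both proofs rest on the same fact: the derivative trace $w\mapsto w_\xi(j)$ is not continuous on $H^1(0,1)$, so derivative boundary constraints can be imposed at arbitrarily small $H^1$ cost.
\begin{itemize}
\item The paper uses this fact implicitly, through the density of $C_c^\infty(0,1)$ in $H_0^1(0,1)$. It needs no estimates, which makes it shorter.
\item Your version makes the mechanism explicit. It avoids both the Hermite interpolation and the $H_0^1$ density result, at the cost of the scaling computation and the choice of $\epsilon_n$.
\end{itemize}

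Two further points in your argument are sound. Fixing $z_n$ first and then constructing $w_n$ is harmless, since the coupling enters only through the traces $z_n(j)$. The middle domain condition $\eta w_{\xi\xi}+\nu z_\xi\in H^1(0,1)$ holds automatically because $w_n,z_n\in C^\infty([0,1])$.
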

\end{framed}
\begin{proof}
Fix $\sbm{\widetilde w\\ \widetilde z}\in V$. Put
$a_j=\frac{\nu}{\eta}\bigl(\widetilde w(j)-\widetilde z(j)\bigr)$, $j=0,1$, and
define
\[
q(\xi)=(1-\xi)\,\widetilde z(0)+\xi\,\widetilde z(1),
\]
\[
p(\xi)=\widetilde w(0)\,(1-3\xi^2+2\xi^3)+\widetilde w(1)\,(3\xi^2-2\xi^3)
+a_0\,(\xi-2\xi^2+\xi^3)+a_1\,(-\xi^2+\xi^3).
\]
Then $p,q\in C^\infty([0,1])$, and a direct computation gives
$q(j)=\widetilde z(j)$, $p(j)=\widetilde w(j)$, and $p_\xi(j)=a_j$ for
$j=0,1$. Hence $p_\xi(j)=\frac{\nu}{\eta}\bigl(p(j)-q(j)\bigr)$ for $j=0,1$.
The differences $\widetilde w-p$ and $\widetilde z-q$ lie in $H^1(0,1)$ and
vanish at $j=0,1$, so they belong to $H_0^1(0,1)$, in which $C_c^\infty(0,1)$ is
dense \cite[Theorem~8.12 and Chapter~8, Remark~14]{Brezis:2011}. Hence there exist
sequences $(\phi_n)_{n\in\nline}$ and $(\psi_n)_{n\in\nline}$ in $C_c^\infty(0,1)$
such that
\[
\lim_{n\to\infty}\|\phi_n-(\widetilde w-p)\|_{H^1(0,1)}=0,
\qquad
\lim_{n\to\infty}\|\psi_n-(\widetilde z-q)\|_{H^1(0,1)}=0.
\]
Set $w_n=p+\phi_n$ and $z_n=q+\psi_n$.
Since $\phi_n,\psi_n$ vanish together with all derivatives near $0$ and $1$, we
have $w_n(j)=p(j)$, $[w_n]_{\xi}(j)=p_\xi(j)$, and $z_n(j)=q(j)$, hence
$[w_n]_{\xi}(j)=\frac{\nu}{\eta}\bigl(w_n(j)-z_n(j)\bigr)$ for $j=0,1$. Moreover
$w_n,z_n\in C^\infty([0,1])$, so $w_n\in H^2(0,1)$ and
$\eta [w_n]_{\xi\xi}+\nu [z_n]_{\xi}\in C^\infty([0,1])\subset H^1(0,1)$. By
\eqref{eq:Domain-AV}, $\sbm{w_n\\ z_n}\in D(A_{re})$ for every $n\in\nline$.
Finally, $w_n-\widetilde w=\phi_n-(\widetilde w-p)$ and
$z_n-\widetilde z=\psi_n-(\widetilde z-q)$, so
\[
\lim_{n\to\infty}\left\|\bbm{w_n\\ z_n}-\bbm{\widetilde w\\ \widetilde z}\right\|_V=0.
\]
Therefore $D(A_{re})$ is dense in $V$.
\end{proof}


\vspace{-2mm}


\begin{thebibliography}{99} \vspace{-1mm}{\small

\bibitem{AkMi:2022}
W. Akram and D. Mitra,
``Local stabilization of viscous Burgers equation with memory,''
{\it Evol. Equ. Control Theory}, vol. 11, pp. 939--973, 2022.
\vspace{-2mm}

\bibitem{AkMiNaRa:2024}
W. Akram, D. Mitra, N. Nataraj and M. Ramaswamy,
``Feedback stabilization of a parabolic coupled system and its numerical study,''
{\it Math. Control Relat. Fields}, vol. 14, pp. 695--746, 2024.
\vspace{-2mm}

\bibitem{BaRa:2024}
M. Badra and J.-P. Raymond,
``Approximation of feedback gains for abstract parabolic systems,''
{\it ESAIM: Control, Optim. Calc. Var.}, 2024, doi: 10.1051/cocv/2024084.
\vspace{-2mm}

\bibitem{BaRa:2024b}
M. Badra and J.-P. Raymond,
``Approximation of feedback gains using spectral projections: Application to the Oseen system,''
{\it SIAM J. Control Optim.}, vol. 62, pp. 2910--2935, 2024.
\vspace{-2mm}

\bibitem{BaKr:2002}
A. Balogh and M. Krstic,
``Infinite-dimensional backstepping-style feedback transformations for a heat equation with an arbitrary level of instability,''
{\it Eur. J. Control}, vol. 8, pp. 165--175, 2002.
\vspace{-2mm}

\bibitem{BaIt:1988}
H. T. Banks and K. Ito,
``A unified framework for approximation in inverse problems for distributed parameter systems,''
{\it Control Theory Adv. Technol.}, vol. 4, pp. 73--90, 1988.
\vspace{-2mm}

\bibitem{BaKu:1984}
H. T. Banks and K. Kunisch,
``The linear regulator problem for parabolic systems,''
{\it SIAM J. Control Optim.}, vol. 22, pp. 684--698, 1984.
\vspace{-2mm}

\bibitem{BaIa:2000}
V. Barbu and M. Iannelli,
``Controllability of the heat equation with memory,''
{\it Differ. Integral Equ.}, vol. 13, pp. 1393--1412, 2000.
\vspace{-2mm}

\bibitem{Baxley:1972}
J. V. Baxley,
``On the Weyl spectrum of a Hilbert space operator,''
{\it Proc. Amer. Math. Soc.}, vol. 34, pp. 447--452, 1972.
\vspace{-2mm}

\bibitem{BePrDeMi:2007}
A. Bensoussan, G. Da Prato, M. C. Delfour and S. K. Mitter,
{\em Representation and Control of Infinite Dimensional Systems},
Birkh\"{a}user, Boston, 2007.
\vspace{-2mm}

\bibitem{BoBaKr:2003}
D. M. Bo\v{s}kovi\'c, A. Balogh and M. Krsti\'c,
``Backstepping in infinite dimension for a class of parabolic distributed parameter systems,''
{\it Math. Control Signals Syst.}, vol. 16, pp. 44--75, 2003.
\vspace{-2mm}

\bibitem{BrKu:14}
T. Breiten and K. Kunisch,
``Riccati-based feedback control of the monodomain equations with the FitzHugh--Nagumo model,''
{\it SIAM J. Control Optim.}, vol. 52, pp. 4057--4081, 2014.
\vspace{-2mm}

\bibitem{BrKu:17}
T. Breiten and K. Kunisch,
``Compensator design for the monodomain equations with the FitzHugh--Nagumo model,''
{\it ESAIM: Control, Optim. Calc. Var.}, vol. 23, pp. 241--262, 2017.
\vspace{-2mm}

\bibitem{Brezis:2011}
H. Brezis,
{\em Functional Analysis, Sobolev Spaces and Partial Differential Equations},
Springer, New York, 2011.
\vspace{-2mm}

\bibitem{BuCh:2022}
J. Burns and J. Cheung,
``Optimal convergence rates for Galerkin approximation of operator Riccati equations,''
{\it Numer. Methods Partial Differ. Equ.}, vol. 38, pp. 2045--2083, 2022.
\vspace{-2mm}

\bibitem{ChNa:2025}
S. Chatterjee and V. Natarajan,
``Motion planning for parabolic equations using flatness and finite-difference approximations,''
{\it IEEE Trans. Autom. Control}, vol. 70, pp. 4439--4454, 2025.
\vspace{-2mm}

\bibitem{ChSuNa:2025}
S. Chattopadhyay, S. Sukumar and V. Natarajan,
``Adaptive identification of linear infinite-dimensional systems,''
{\it Int. J. Control}, vol. 98, pp. 593--608, 2025.
\vspace{-2mm}

\bibitem{SiZhZu:17}
F. W. Chaves-Silva, X. Zhang and E. Zuazua,
``Controllability of evolution equations with memory,''
{\it SIAM J. Control Optim.}, vol. 55, pp. 2437--2459, 2017.
\vspace{-2mm}

\bibitem{Conway:1978}
J. B. Conway,
{\em Functions of One Complex Variable},
2nd ed.,
Springer-Verlag, New York, 1978.
\vspace{-2mm}

\bibitem{CuZw:1995}
R. F. Curtain and H. J. Zwart,
{\em An Introduction to Infinite-Dimensional Linear Systems Theory},
Springer-Verlag, New York, 1995.
\vspace{-2mm}

\bibitem{GiRoTa:1992}
J. S. Gibson, I. G. Rosen and G. Tao,
``Approximation in control of thermoelastic systems,''
{\it SIAM J. Control Optim.}, vol. 30, pp. 1163--1189, 1992.
\vspace{-2mm}

\bibitem{GrMo:1996}
J. R. Grad and K. A. Morris,
``Solving the linear quadratic optimal control problem for infinite-dimensional systems,''
{\it Comput. Math. Appl.}, vol. 32, pp. 99--119, 1996.
\vspace{-2mm}

\bibitem{GuIm:13}
S. Guerrero and O. Yu. Imanuvilov,
``Remarks on non-controllability of the heat equation with memory,''
{\it ESAIM: Control, Optim. Calc. Var.}, vol. 19, pp. 288--300, 2013.
\vspace{-5mm}

\bibitem{HaPa:12}
A. Halanay and L. Pandolfi,
``Lack of controllability of the heat equation with memory,''
{\it Syst. Control Lett.}, vol. 61, pp. 999--1002, 2012.
\vspace{-2mm}

\bibitem{IvPa:09}
S. Ivanov and L. Pandolfi,
``Heat equation with memory: Lack of controllability to rest,''
{\it J. Math. Anal. Appl.}, vol. 355, pp. 1--11, 2009.
\vspace{-2mm}

\bibitem{KaMo:2013}
D. Kasinathan and K. Morris,
``$\mathcal{H}_{\infty}$-optimal actuator location,''
{\it IEEE Trans. Autom. Control}, vol. 58, pp. 2522--2535, 2013.
\vspace{-2mm}

\bibitem{Kato}
T. Kato,
{\em Perturbation Theory for Linear Operators},
Classics in Mathematics,
Springer-Verlag, Berlin, 1995.
\vspace{-2mm}

\bibitem{LiZhGu:18}
L. Li, X. Zhou and H. Gao,
``The stability and exponential stabilization of the heat equation with memory,''
{\it J. Math. Anal. Appl.}, vol. 466, pp. 199--214, 2018.
\vspace{-2mm}

\bibitem{Nu:1971}
J. W. Nunziato,
``On heat conduction in materials with memory,''
{\it Quart. Appl. Math.}, vol. 29, pp. 187--204, 1971.
\vspace{-2mm}

\bibitem{Paz:83}
A. Pazy,
{\em Semigroups of Linear Operators and Applications to Partial Differential Equations},
Springer-Verlag, New York, 1983.
\vspace{-2mm}

\bibitem{RaTaTu:2007}
K. Ramdani, T. Takahashi and M. Tucsnak,
``Uniformly exponentially stable approximations for a class of second-order evolution equations: Application to LQR problems,''
{\it ESAIM: Control, Optim. Calc. Var.}, vol. 13, pp. 503--527, 2007.
\vspace{-2mm}

\bibitem{Sh:2010}
R. E. Showalter,
{\em Hilbert Space Methods in Partial Differential Equations},
Dover Publications, New York, 2010.
\vspace{-2mm}

\bibitem{SiAkMiNa:2025}
B. P. K. Sistla, W. Akram, D. Mitra and V. Natarajan,
``LQR-based $\omega$-stabilization of a heat equation with memory,''
{\it IMA J. Math. Control Inform.}, vol. 42, pp. 1--35, 2025.
\vspace{-2mm}

\bibitem{obs_book}
M. Tucsnak and G. Weiss,
{\em Observation and Control for Operator Semigroups},
Birkh\"{a}user, Basel, 2009.
\vspace{-2mm}

\bibitem{Zettl:2005}
A. Zettl,
{\em Sturm--Liouville Theory},
Mathematical Surveys and Monographs, vol. 121,
American Mathematical Society, Providence, RI, 2005.
\vspace{-2mm}

\bibitem{ZhGa:14}
X. Zhou and H. Gao,
``Interior approximate and null controllability of the heat equation with memory,''
{\it Comput. Math. Appl.}, vol. 67, pp. 602--613, 2014.
\vspace{-2mm}

}
\end{thebibliography}
\end{document}